

\documentclass[11pt]{article}
\usepackage[a4paper]{geometry} 


\usepackage{verbatim}
\usepackage{float}
\usepackage{amsthm}
\usepackage{amsmath}
\usepackage{amssymb}
\usepackage{graphicx}

\usepackage{tikz}
\usetikzlibrary{calc}

\makeatletter

\floatstyle{ruled}
\newfloat{algorithm}{tbp}{loa}
\providecommand{\algorithmname}{Algorithm}
\floatname{algorithm}{\protect\algorithmname}

\theoremstyle{plain}
\newtheorem{thm}{\protect\theoremname}
  \theoremstyle{definition}
  \newtheorem{problem}{\protect\problemname}
  \theoremstyle{definition}
  \newtheorem{defn}{\protect\definitionname}
  \theoremstyle{plain}
  \newtheorem{lem}{\protect\lemmaname}
  
	\newtheorem{cor}{Corollary}

\@ifundefined{date}{}{\date{}}
\usepackage{authblk}
\usepackage{algpseudocode}
\usepackage{algorithm}\usepackage{enumitem}




\newcommand{\expect}[1]{\mathbb{E}\left[{#1}\right]}


%


\usepackage{amsthm}
\usepackage{amsfonts}\usepackage{xspace}
\usepackage{upgreek}
\usepackage{braket}
\usepackage{enumitem}


\usepackage{amsfonts}


\usepackage[backend=bibtex,firstinits,style=numeric-comp,maxnames=99]{biblatex}


\DeclareFieldFormat{titlecase}{\MakeTitleCase{#1}}

\newrobustcmd{\MakeTitleCase}[1]{%
  \ifthenelse{\ifcurrentfield{booktitle}\OR\ifcurrentfield{booksubtitle}%
    \OR\ifcurrentfield{maintitle}\OR\ifcurrentfield{mainsubtitle}%
    \OR\ifcurrentfield{journaltitle}\OR\ifcurrentfield{journalsubtitle}%
    \OR\ifcurrentfield{issuetitle}\OR\ifcurrentfield{issuesubtitle}%
    \OR\ifentrytype{REMOVEIFEXCLUDEBOOKbook}\OR\ifentrytype{mvbook}\OR\ifentrytype{bookinbook}%
    \OR\ifentrytype{booklet}\OR\ifentrytype{suppbook}%
    \OR\ifentrytype{collection}\OR\ifentrytype{mvcollection}%
    \OR\ifentrytype{suppcollection}\OR\ifentrytype{manual}%
    \OR\ifentrytype{periodical}\OR\ifentrytype{suppperiodical}%
    \OR\ifentrytype{proceedings}\OR\ifentrytype{mvproceedings}%
    \OR\ifentrytype{reference}\OR\ifentrytype{mvreference}%
    \OR\ifentrytype{report}\OR\ifentrytype{thesis}}
    {#1}
    {\MakeSentenceCase{#1}}}

\bibliography{longnames,bib-latest,bristol-ESA}


\renewcommand{\geq}{\geqslant}
\renewcommand{\leq}{\leqslant}
\renewcommand{\epsilon}{\varepsilon}
\renewcommand{\Delta}{\Updelta}






\newcommand{\calA}{\ensuremath{\mathcal{A}}}

\newcommand{\calI}{\ensuremath{\mathcal{I}}}

\newcommand{\calT}{\ensuremath{\mathcal{T}}}
\newcommand{\calU}{\ensuremath{{\mathcal{U}}}}

\newcommand{\Hamarray}{\ensuremath{\textup{HamArray}}}


\newcommand{\Prob}{\ensuremath{\textup{Pr}}}
\newcommand{\expected}[1]{\ensuremath{\mathbb{E}[#1]}}

\newcommand{\ball}[1]{\ensuremath{\textup{Ball}(#1)}}
\newcommand{\Vsum}{\ensuremath{\textup{Sum}}}
\newcommand{\vsum}{\ensuremath{\textup{sum}}}

\newcommand{\lit}{large information transfer\xspace}
\newcommand{\MFL}{\ensuremath{M_{F,\ell}}}


\newcommand{\Xv}{\ensuremath{U_v}}
\newcommand{\Xvknown}{\ensuremath{\widetilde{U}_v}}
\newcommand{\Xvfix}{\ensuremath{\widetilde{u}_v}}
\newcommand{\Yv}{\ensuremath{A_v}}

\newcommand{\vitset}{\ensuremath{\mathcal{I}_v}}
\newcommand{\vit}{\ensuremath{I_v}}





\renewcommand{\S}{{\ensuremath{S}}} 
\newcommand{\U}{{\ensuremath{U}}} 

\newcommand{\arrive}{\ensuremath{\textsc{update}}}

\newcommand{\Cbig}{\ensuremath{\widetilde{C}}}

\newcommand{\cbig}{\ensuremath{\widetilde{c}}}

\newcommand{\ELL}{\ensuremath{\mu}}
\newcommand{\ps}{\ensuremath{\rho}}
\newcommand{\psymb}{\ensuremath{\star}}

\newcommand{\tsymb}{\ensuremath{\diamond}}

\newcommand{\alignment}[1]{{\small{\textcircled{\scriptsize #1}}}}

\newcommand{\insertdiagram}[1]{\includegraphics[scale=0.82]{figures/#1}}
\newcommand{\Patrascu}{P{\v a}tra{\c s}cu\xspace}




\newcommand{\inserttreefigure}{
\newcommand{\nodefont}{\scriptsize}
\begin{figure}[t]
\centering
\begin{tikzpicture}[level distance=3.9mm]
\tikzstyle{every node}=[fill=black!100,circle,inner sep=1pt]
\tikzstyle{level 1}=[sibling distance=72.5mm]
\tikzstyle{level 2}=[sibling distance=36mm]
\tikzstyle{level 3}=[sibling distance=18.2mm]
\tikzstyle{level 4}=[sibling distance=9.2mm]
\tikzstyle{level 5}=[sibling distance=4.5mm,set style={{every node}+=[fill=white!15]}]
\node {}
    child {node {}
        child {node {}
            child {node {}
                child {node {}
                    child {node {\nodefont 0}}
                    child {node {\nodefont 1}}
                }
                child {node {}
                    child {node {\nodefont 2}
                    }
                    child {node {\nodefont 3}
                    }
                }
            }
            child {node {}
                child {node {}
                    child {node {\nodefont 4}}
                    child {node {\nodefont 5}}
                }
                child {node {}
                    child {node {\nodefont 6}
                    }
                    child {node {\nodefont 7}
                    }
                }
            }
        }
        child {node {}
            child {node {}
                child {node {}
                    child {node {\nodefont 8}}
                    child {node {\nodefont 9}}
                }
                child {node {}
                    child {node {\nodefont 10}
                    }
                    child {node {\nodefont 11}
                    }
                }
            }
            child {node {}
                child {node {}
                    child {node {\nodefont 12}}
                    child {node {\nodefont 13}}
                }
                child {node {}
                    child {node {\nodefont 14}
                    }
                    child {node {\nodefont 15}
                    }
                }
            }
        }
    }
    child {node[fill=black!00,circle,draw,inner sep=1.5pt] {$v$}
        child {node {}
            child {node {}
                child {node {}
                    child {node {\nodefont 16}}
                    child {node {\nodefont 17}}
                }
                child {node {}
                    child {node {\nodefont 18}
                    }
                    child {node {\nodefont 19}
                    }
                }
            }
            child {node {}
                child {node {}
                    child {node {\nodefont 20}}
                    child {node {\nodefont 21}}
                }
                child {node {}
                    child {node {\nodefont 22}
                    }
                    child {node {\nodefont 23}
                    }
                }
            }
        }
        child {node {}
            child {node {}
                child {node {}
                    child {node {\nodefont 24}}
                    child {node {\nodefont 25}}
                }
                child {node {}
                    child {node {\nodefont 26}
                    }
                    child {node {\nodefont 27}
                    }
                }
            }
            child {node {}
                child {node {}
                    child {node {\nodefont 28}}
                    child {node {\nodefont 29}}
                }
                child {node {}
                    child {node {\nodefont 30}
                    }
                    child {node {\nodefont 31}
                    }
                }
            }
        }
    };
\end{tikzpicture}

\caption{\label{fig:tree}An information transfer tree $\calT$ with $n=32$ leaves. For the node labelled $v$, the arrival times $t_0=16$, $t_1=23$ and $t_2=31$.}
\end{figure}
}

\makeatother

  \providecommand{\definitionname}{Definition}
  \providecommand{\lemmaname}{Lemma}
  \providecommand{\problemname}{Problem}
\providecommand{\theoremname}{Theorem}




\begin{document}


\title{Time Bounds for Streaming Problems}

\author{Rapha\"{e}l Clifford \quad ~ Markus Jalsenius \quad ~ Benjamin Sach\\
 Department of Computer Science\\ University of Bristol\\ Bristol, UK}

\date{}

\maketitle



\begin{abstract}

We give tight cell-probe bounds for the time to compute convolution, multiplication and Hamming distance in a stream. The cell probe model is a particularly strong computational model and subsumes, for example, the popular word RAM model.
\begin{itemize}
 \item We first consider online convolution where the task is to output the inner product between a fixed $n$-dimensional vector and a vector of the $n$ most recent values from a stream. One symbol of the stream arrives at a time and the each output must be computed before the next symbols arrives.
 \item Next we show bounds for online multiplication where the stream consists of pairs of digits, one from each of two $n$ digit numbers that are to be multiplied. One pair arrives at a time and the task is to output a single new digit from the product before the next pair of digits arrives. 
 \item Finally we look at the online Hamming distance problem where the Hamming distance is outputted instead of the inner product. 
 
\end{itemize}

For each of these three problems, we give a lower bound of $\Omega\left(\frac{\delta}{w}\log n\right)$ time on average per output, where $\delta$ is the number of bits needed to represent an input symbol and $w$ is the cell or word size. We argue that these bound are in fact tight within the cell probe model. 
\end{abstract}


\section{Introduction}

We consider the complexity of three related and fundamental problems: computing the convolution of two vectors, multiplying two integers, and computing the Hamming distance between two strings. We study these problems in an online or streaming context and provide matching upper and lower bounds in the cell-probe model.
Lower bounds in the cell-probe model also hold for the popular word-RAM model in which many of today's algorithms are given.

The importance of these problems is hard to overstate. The integer multiplication and convolution problems have played a central role in modern algorithms design and theory.  The question of how to compute the Hamming distance efficiently has a rich literature, spanning many of the most important fields in computer science. Within the theory community, communication complexity based lower bounds and streaming model upper bounds for the Hamming distance problem have been the subject of particularly intense study~\cite{CDIM:03,Woodruff:04,HSZZ:06,JKS:08,BCRT:10,Chakrabarti:11}. This previous work has however almost exclusively focussed on providing resource bounds either in terms of space or bits of communication rather than time complexity.

We begin by introducing the problems and stating our results. In the following problem definitions and throughout, we write $[q]$ to denote the set $\{0,\dots,q-1\}$, where $q$ is a positive integer and a parameter of the problem.

\begin{problem}[\textbf{Online convolution}]
    For a fixed vector $F\in[q]^n$ of length $n$, we consider a stream in which numbers from $[q]$ arrive one at a time. For each arriving number, before the next number arrives, we output the inner product (modulo~$q$) of $F$ and the vector that consists of the most recent  $n$ numbers of the stream.
\end{problem}


\begin{thm}[\textbf{Online convolution}]
    \label{thm:conv}
    In the cell-probe model with $w$ bits per cell, for any positive integers~$q$ and~$n$, and any randomised algorithm solving the online convolution problem, there exist instances such that the expected amortised time per arriving value is $\Omega{\left(\frac{\delta}{w}\log n\right)}$, where $\delta=\lceil \log_{2}{q} \rceil$.
\end{thm}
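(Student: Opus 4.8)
The plan is to adapt the information-transfer method of \Patrascu and Demaine. First I would reduce to a clean setting: round $n$ down to a power of two (losing only a constant factor), assume $q\ge 2$ (for $q=1$ we have $\delta=0$ and there is nothing to prove), and apply Yao's minimax principle, so that it suffices to exhibit one distribution over instances on which every \emph{deterministic} algorithm has large expected running time. For that distribution I would let the stream have length $2n$, with the first $n$ symbols a fixed warm-up (all zero) and the last $n$ symbols independent and uniform over $[q]$, and I would take the fixed vector $F\in[q]^n$ to be the indicator of a geometric (powers-of-two spaced) set of coordinates. Over the last $n$ time steps I build the usual balanced binary information-transfer tree $\calT$ with $n$ leaves; for an internal node $v$ whose time interval splits at $t_1$ into a first half and a second half of $\ell_v$ steps each, let $\IT_v$ be the set of cells written during $v$'s first half whose contents are still intact when first read during $v$'s second half.

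The first step is the charging lemma: on any instance the running time $T$ satisfies $T\ge\sum_v|\IT_v|$. This is because each cell of $\IT_v$ can be charged to a distinct read operation --- a read during $v$'s second half whose effective last write lies in $v$'s first half --- and for a fixed read at time $t$ with effective last write at time $\tau<t$ there is at most one node $v$ with $\tau$ in its first half and $t$ in its second half (any two such nodes would be nested, forcing one of $t,\tau$ onto the wrong side of the inner split). Since $\sum_v\ell_v=\tfrac n2\log_2 n$ over all internal nodes, it then remains to prove a per-node bound $\expected{|\IT_v|}=\Omega(\ell_v\delta/w)$ for all nodes with $\ell_v$ above a mild threshold such as $\ell_v\ge C\log n$ (which still leaves $\sum\ell_v=\Omega(n\log n)$); combining this with the charging lemma and dividing by the $2n$ arrivals gives the theorem.

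The heart of the argument is the per-node bound, via an encoding argument. Fix $v$, let $X_v\in[q]^{\ell_v}$ be the (uniform, independent) symbols arriving in $v$'s first half, and let $\bar u$ be all other stream symbols --- everything outside $v$'s first half, including the whole warm-up and everything preceding $v$'s interval --- so that $H(X_v\mid\bar u)=\ell_v\log_2 q\ge\tfrac12\ell_v\delta$. An encoder who knows all inputs sends the addresses and contents of the cells of $\IT_v$ (and the count $|\IT_v|$), a message of at most $O(w\,|\IT_v|+\log n)$ bits. A decoder who knows $\bar u$ recovers $X_v$ by: simulating the algorithm on the first $t_0$ steps (all inputs known), recording the memory state; then simulating $v$'s second half, supplying the known inputs and, for each cell it needs that it has not itself just written, taking its contents from the message when the address appears there and from the recorded state otherwise --- correct because any cell whose last write before the current read lies in $v$'s first half is, by definition, in $\IT_v$. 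Having done so, the decoder knows the $\ell_v$ outputs produced in $v$'s second half; by the choice of $F$ these equal a fixed linear system in the unknowns $X_v$ (with an $\bar u$-determined constant term) whose $\ell_v\times\ell_v$ coefficient matrix is triangular with unit diagonal at \emph{every} dyadic scale simultaneously, hence invertible over $\mathbb{Z}_q$ for arbitrary $q$, so $X_v$ is determined. This yields $\tfrac12\ell_v\delta\le H(X_v\mid\bar u)\le O(w\,\expected{|\IT_v|}+\log n)$, i.e.\ the desired bound once $\ell_v\delta=\Omega(\log n)$.

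The main obstacle is exactly the clause ``at every dyadic scale simultaneously'': the same fixed vector $F$ must make the second-half outputs of a node interval determine its first-half inputs for \emph{all} node sizes at once, and over an arbitrary (possibly composite, possibly tiny, e.g.\ $q=2$) modulus, which rules out a generic or random choice of $F$ and also rules out a single triangular structure tuned to one scale. Choosing $F$ supported on powers-of-two-spaced coordinates is what resolves this --- one checks that the ``zero'' and ``nonzero'' coefficient requirements coming from the different scales are mutually consistent precisely because there is no power of two strictly between two consecutive powers of two --- and all remaining points are routine bookkeeping: verifying that the $n$-symbol window at each second-half output time covers all of $X_v$ and leaks in no ``later'' unknowns, that the decoder never confuses a cell written in $v$'s first half with one written before $t_0$, and that the $O(\log n)$ terms in the message length are dominated, which is why the sum over nodes is restricted to $\ell_v$ not too small.
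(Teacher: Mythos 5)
Your proposal is correct and takes essentially the same approach as the paper: Yao's minimax principle, the \Patrascu--Demaine information-transfer tree together with the encoding argument of Lemma~\ref{lem:H-upper-old}, and a fixed $0/1$ vector with ones at powers-of-two-spaced positions (the paper's $K_n$). Your observation that the resulting coefficient matrix $M_{F,\ell}$ is upper-triangular with unit diagonal, hence invertible over $\mathbb{Z}_q$ for arbitrary $q$ so that all of $U_v$ is recoverable from the outputs, is a mild sharpening of the paper's argument, which instead notes only that the bottom $\ell/2$ rows of $M_{F,\ell}$ are standard basis vectors and so recovers just half of $U_v$; both yield the stated bound up to a constant factor.
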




\begin{problem}[\textbf{Online multiplication}]
    Given two numbers $F,X\in [q^n]$, where $q$ is the base and $n$ is the number of digits per number, we want to output the $n$ least significant digits of the product of $F$ and $X$, in base $q$.
    We must do this under the constraint that only $F$ is known in advance and the digits of $X$ arrive one at a time, starting from the lower-order end.
    When the $i$-th digit of $X$ arrive, before the $(i+1)$-th digit arrive, we output the $i$-th digit of the product.
\end{problem}

\begin{thm}[\textbf{Online multiplication}]
    \label{thm:mult}
    In the cell-probe model with $w$ bits per cell, for any positive integers~$q$ and~$n$, and any randomised algorithm solving the online multiplication problem in base~$q$, there exist instances such that outputting the $n$ least significant digits of the product takes $\Omega{\left(\frac{\delta}{w}n\log n\right)}$ expected time, where $\delta=\lceil \log_2q \rceil$.
\end{thm}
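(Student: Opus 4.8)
The plan is to derive Theorem~\ref{thm:mult} from Theorem~\ref{thm:conv} by reduction: from a fast cell-probe algorithm for online multiplication in base $q$ we build a fast algorithm for online convolution over a larger alphabet $[Q]$, contradicting the convolution lower bound. The underlying fact is that the product of two integers is, position by position, the convolution of their digit strings plus carries; if we write the two numbers with their digits spread out --- one genuine base-$q$ digit at the bottom of each block of $Z'$ positions, zeros elsewhere --- then every partial convolution sum sits entirely inside one block, no carry ever crosses a block boundary, and the output digits of the product spell out the convolution values directly.

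In detail: fix parameters $Z$, $m$, $L$, put $Q = q^Z$ and $Z' = \lceil \log_q(mQ^2)\rceil$, so that every value $\sum_{a+b=k} F'[a]\,s'_{b+1}$ of an $m$-term convolution over $[Q]$ is a nonnegative integer strictly below $q^{Z'}$. Given an online convolution instance with fixed vector $F' \in [Q]^m$ and stream $s'_1, s'_2, \dots \in [Q]$, consider the online multiplication instance in base $q$ with $n := Z'L$ digits, fixed number $F = \sum_a F'[m-1-a]\,q^{aZ'}$, and input $X = \sum_b s'_{b+1}\,q^{bZ'}$; then $FX = \sum_k c'_k q^{kZ'}$ with $0 \le c'_k < q^{Z'}$, so the base-$q$ digits of the product at positions $kZ', \dots, kZ'+Z-1$ are precisely the base-$q$ digits of $c'_k \bmod Q = \langle F', (s'_{k-m+2}, \dots, s'_{k+1})\rangle \bmod Q$, the $(k{+}1)$-st required convolution output. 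The simulator therefore feeds, for each arriving convolution symbol, the corresponding block of $Z'$ base-$q$ digits one at a time to the multiplication algorithm, discards the top $Z'-Z$ output digits of the block, and assembles the bottom $Z$ into the convolution answer; beyond running the multiplication algorithm it performs only routing and $O(1)$ bookkeeping per digit, i.e.\ $O(n)$ probes of overhead in total.

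Consequently, if $T$ is the running time of the multiplication algorithm on these $n$ digits, then $T + O(n)$ upper bounds the running time of a correct online convolution algorithm on a stream of length $L$ with alphabet $[Q]$ and vector length $m$, which by Theorem~\ref{thm:conv} is $\Omega\big(L\cdot\frac{\lceil\log_2 Q\rceil}{w}\log m\big)$. It remains to choose the parameters. Take $L$ polynomially large in $m$ (large enough for the amortised convolution bound to bite), and $Z \approx (\log_2 m)/\delta$ so that $q^Z \approx m$; one then checks $Z' = \Theta(Z)$, $\lceil\log_2 Q\rceil = \Theta(\log_2 m)$, and --- since $n = Z'L$ with $L$ polynomial in $m$ and $Z' \le \log_2 m$ --- that $\log m = \Theta(\log n)$, so the convolution bound becomes $\Omega\big(\frac{L\log^2 m}{w}\big) = \Omega\big(\frac{n}{Z'}\cdot\frac{\log^2 m}{w}\big) = \Omega\big(\frac{n\,\delta\log n}{w}\big)$. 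Finally, since any correct algorithm must probe each of the $n$ arriving digits we have $T = \Omega(n)$ unconditionally, which both disposes of the easy regime $w = \Omega(\delta\log n)$ (where $\frac{\delta}{w}n\log n = O(n)$) and shows the $O(n)$ reduction overhead is negligible in the remaining regime $w = o(\delta\log n)$. The step I expect to be most delicate is making the argument uniform in $q$ and $n$: pinning down $Z'$ with the right constant so that no carry escapes a block, matching the block indices to the sliding window of the convolution problem (including the degenerate blocks near the start of the stream), and verifying that the single parameter choice above keeps $\log m$ within a constant factor of $\log n$ across the whole range of $\delta$, from $\delta = 1$ (binary multiplication) up through $\delta \ge \log n$, where $Z$ collapses to $1$.
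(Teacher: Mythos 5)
Your proof is correct in its essential logic but takes a genuinely different route from the paper. Whereas the paper proves Theorem~\ref{thm:mult} directly through the information-transfer framework --- instantiating it with the Paterson--Fischer--Meyer lemma (Lemma~\ref{lem:PFMlemma5} and its corollary) and the explicit operand $K_{q,n}$ to establish that mid-tree nodes are high-entropy nodes --- you instead derive Theorem~\ref{thm:mult} as a black-box consequence of Theorem~\ref{thm:conv} via a digit-spreading reduction. That reduction is sound: with $Z' \ge \log_q(m Q^2)$ each block of the product carries the full, carry-free value of the corresponding $m$-term convolution sum, and the bottom $Z$ digits recover it modulo $Q = q^Z$; the calibration $Z \approx (\log_2 m)/\delta$, $Z' = \Theta(Z)$, $\log m = \Theta(\log n)$ does translate $\Omega\bigl(\tfrac{\lceil\log_2 Q\rceil}{w}\,m\log m\bigr)$ into $\Omega\bigl(\tfrac{\delta}{w}\,n\log n\bigr)$. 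This is shorter and arguably more modular (you never need Lemmas~\ref{lem:PFMlemma5}, \ref{lem:PFMlemma1} or the Toeplitz-style counting for multiplication at all), at the cost of making Theorem~\ref{thm:mult} wholly dependent on Theorem~\ref{thm:conv} rather than an independent application of the framework; the paper's parallel treatment is what lets it later observe that the conclusions are tight in the model by a matching offline-to-online conversion.

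Two points worth flagging. First, the step ``any correct algorithm must probe each of the $n$ arriving digits, so $T = \Omega(n)$'' is not how the cell-probe model charges cost: arriving symbols are delivered to the computing unit for free and the algorithm is only charged for memory reads/writes. You can still justify $T = \Omega(n)$ for a suitable $F$ (e.g.\ one where the $t$-th output digit genuinely depends on $U[t-1]$, forcing at least one read per step since the unit has no state across operations), but the argument as written is wrong. Second, this fallback is only needed because you budget ``$O(1)$ bookkeeping probes per digit''; in the cell-probe model the decomposition of an incoming $[Q]$-symbol into base-$q$ digits, the reassembly of $Z$ output digits, and the simulation itself are all free computation, so a faithful accounting gives the simulated convolution algorithm \emph{exactly} the probe count of the multiplication algorithm, with no additive $O(n)$ at all. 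If you take that cleaner view, the case split over regimes of $w$ disappears and the reduction applies Theorem~\ref{thm:conv} directly. The remaining technicalities --- treating only $n$ of the form $Z'L$ (the paper similarly restricts to $n$ a power of two), and ensuring $Q$ and the arithmetic with ceilings behave uniformly from $\delta = 1$ up to $\delta \ge \log n$ --- you have already noted, and they are handled by the same kind of padding slack the paper uses implicitly.
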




\begin{problem}[\textbf{Online Hamming distance}]
    For a fixed string $F$ of length $n$, we consider a stream in which symbols from the alphabet $[q]$ arrive one at a time. For each arriving symbol, before the next symbol arrives, we output the Hamming distance between $F$ and the last $n$ symbols of the stream.
\end{problem}

\begin{thm}[\textbf{Online Hamming distance}]
    \label{thm:ham}
    In the cell-probe model with $w$ bits per cell, for any positive integers~$q$ and~$n$, and any randomised algorithm solving the online Hamming distance problem, there exist instances such that the expected amortised time per arriving value is $\Omega\left(\frac{\delta}{w}\log n\right)$, where $\delta=\lceil \min\{\log_2q,\log_2n\} \rceil$.
\end{thm}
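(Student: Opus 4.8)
The plan is to run the information-transfer argument already used for Theorem~\ref{thm:conv} (online convolution), with a hard input distribution tailored to the Hamming problem.

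First I would set up the information transfer tree $\calT$ over the $\Theta(n)$ arrival times. For an internal node $v$ whose leaves span an interval $[t_0,t_2]$ with midpoint $t_1$, let $\IT(v)$ denote the set of memory cells that are written at some time in $(t_0,t_1]$ and then read at some time in $(t_1,t_2]$ without being overwritten in between. As in the convolution proof, $\sum_v \expected{|\IT(v)|}$ is a lower bound on the expected total number of cell probes, and for a balanced tree $\sum_v (t_1-t_0) = \Theta(n\log n)$; so it suffices to produce a distribution on inputs under which $\expected{|\IT(v)|} = \Omega\!\left(\frac{\delta}{w}(t_1-t_0)\right)$ for (almost) every internal node $v$, where $\delta = \lceil\min\{\log_2 q,\log_2 n\}\rceil$ as in the statement. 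Dividing by the number of arrivals then gives the claimed amortised bound.

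Second, the hard distribution. Since a Hamming distance lies in $\{0,\dots,n\}$, each output conveys at most $\log_2 n$ bits, so nothing is lost by restricting attention to an effective alphabet of size $m := \min\{q,n\}$ (embedded in $[q]$), for which $\delta = \Theta(\log m)$; the remaining symbols of $[q]$, if any, serve only as a fixed padding symbol. The distribution draws the stream symbols independently --- the ones arriving in $(t_0,t_1]$ uniformly from the $m$ data symbols, the others according to a fixed pattern that plays well with the read-off phase --- and the fixed string $F$ is a carefully constructed ``encoding'' string over $[q]$. The heart of the construction is that, given $F$, the padding pattern, the memory configuration just before time $t_0$, and the cells in $\IT(v)$, one can simulate the algorithm faithfully through the interval $(t_1,t_2]$ (the only relevant cells it cannot reconstruct are precisely those of $\IT(v)$), read off the $t_2-t_1$ Hamming-distance outputs it produces, and from them recover all $t_1-t_0$ symbols that arrived in $(t_0,t_1]$. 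A standard encoding argument then gives $\expected{|\IT(v)|}\cdot w \ge H(\text{symbols in }(t_0,t_1] \mid \text{everything outside}) = \Theta(\delta\,(t_1-t_0))$, and summing over $v$ finishes the proof.

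Third, and this is where I expect the real work to be: designing $F$ and the padding so that the Hamming outputs over $(t_1,t_2]$ really do determine the symbols of $(t_0,t_1]$. In the convolution case each output is a \emph{linear} combination of the stream symbols with coefficients read off from $F$, so $t_1-t_0$ outputs give an invertible linear system by a routine averaging choice of $F$. Here an output is instead a sum of match-indicators $[F_i = (S_t)_i]$; naive choices of $F$ (e.g.\ $F$ a repeated listing of the $m$ symbols, or an injective stretch) make the outputs reveal only aggregate counts --- a histogram of the unknown symbols rather than the symbols themselves --- so the encoding string must be cleverer, using the padding symbol to ``silence'' all window positions except the ones carrying the bit currently being extracted, while still packing $\Theta(\log m)$ recoverable bits per arriving symbol into the $\Theta(t_1-t_0)$ available outputs. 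Making this simultaneous, overlap-robust read-off work for every realisation of the symbols outside $(t_0,t_1]$ --- and thereby pinning down exactly the alphabet size $m=\min\{q,n\}$, which is also what the matching upper bound needs --- is the main obstacle; everything else is bookkeeping inherited from the convolution proof.
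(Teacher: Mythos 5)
Your high-level framework is the same as the paper's: the information-transfer tree, the encoding argument bounding $H(\Yv\mid\Xvknown=\Xvfix)$ by $O(w\cdot\expected{|\IT(v)|})$, and the goal of showing $\Omega(\delta\,\ell_v)$ conditional output entropy for (most) internal nodes. You also correctly put your finger on where the work lies: making the Hamming outputs reveal $\Theta(\delta)$ bits per arrival, when each output is a sum of indicators rather than a linear form in the symbols. But there are two genuine gaps.

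First, your proposed hard distribution --- symbols in $(t_0,t_1]$ drawn independently and uniformly from $m=\min\{q,n\}$ data symbols --- is not the one the paper uses, and it is unlikely to work as stated. The paper explicitly warns that with a large alphabet and i.i.d.\ uniform symbols, the Hamming distances concentrate and carry very little information, and it is not obvious that one can recover $\log m$ bits per symbol from per-step changes in an aggregate count unless $F$ is exponentially long. What the paper actually does is replace i.i.d.\ symbols by a carefully restricted distribution: $U$ is a concatenation of $2r$-length blocks drawn uniformly from a set $\calU_R\subseteq [q]^{2r}$ (with $r=q^{3/2}$) chosen so that distinct blocks give distinct Hamming arrays against a gadget string $R$. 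Entropy is then counted in units of whole blocks, not symbols; each decoded block contributes $\Theta(r\log q)$ bits, and the reader recovers only about half of the blocks (those that slide fully past $R$), not ``all $t_1-t_0$ symbols.'' This is a qualitatively different distribution from uniform, and the distinction matters: it is precisely what makes the combinatorics tractable.

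Second, the construction of $F$ (and the set $\calU_R$) is not bookkeeping; it is the paper's main technical contribution, and you defer it entirely. The paper builds $F$ from $K_n$ (the power-of-two 1-positions) but replaces each 1 by a copy of a length-$r$ gadget $R$, with a wildcard symbol $\psymb$ elsewhere. The heart of the matter is Lemma~\ref{lem:combinatorial}: there exists $R$ over an alphabet of size $r^{2/3}$ such that the number of distinct Hamming arrays $\Hamarray(R,U')$ over $U'\in[r^{2/3}]^{2r}$ is at least $r^{kr}$. Proving this requires (i) a reduction from Hamming arrays to sums of $\mu$ vectors drawn from a pool of $\mu^2$ binary $\mu$-vectors, where positions get ``blocked'' as $U'$ is populated (so one must argue robustly over sub-multisets), and (ii) a non-constructive, probabilistic-method argument using constant-weight cyclic binary codes to show the pool can be chosen so that even after blocking $1/64$ of the vectors one still gets $\mu^{\Omega(\mu)}$ distinct sums (Lemma~\ref{lem:vecsum}). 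None of this is suggested by your sketch, and without it the claim ``one can recover the symbols from the outputs'' is unsupported. Finally, note that the paper only establishes high-entropy for nodes with $\ell_v\ge\sqrt n$ (because $r\approx q^{3/2}$ and one needs $n\ge q^3$ for the gadget to fit), and separately checks that this truncated sum over nodes still yields $\Theta(n\log n)$; your ``(almost) every node'' phrasing glosses over why and at what threshold nodes drop out.
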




Our Hamming distance lower bound also implies a matching lower bound for any problem that Hamming distance can be reduced to.  The most straightforward of these is online $L_1$ distance computation, where the task is to output the $L_1$ distance between a fixed vector of integers and the last $n$ numbers in the stream.
A suitable reduction was shown in~\cite{LP:2008}. The expected amortised cell probe complexity for the online $L_1$ distance problem is therefore also $\Omega{\left(\frac{\delta}{w}\log n\right)}$ per new output.

One of our main technical innovations is to extend recently developed methods designed to give lower bounds on dynamic data structures to the seemingly distinct field of online algorithms.  Where $\delta = w$, for example, we have $\Omega(\log{n})$ lower bounds for all three problems. In particular for online multiplication and convolution, these lower bounds match the currently best known offline upper bounds in the RAM model. As we discuss in the Section~\ref{sec:previous}, this may be the highest lower bound that can be formally proved for all the problems we consider without a further significant theoretical breakthrough.

In order to prove our lower bounds we show the existence of probability distributions on the inputs for which we can prove lower bounds on the expected running time of any deterministic algorithm. By Yao's minimax principle~\cite{Yao1977:Minimax} this immediately implies that for every (randomised) algorithm there is a worst-case input such that the (expected) running time is equally high. Therefore our lower bounds hold equally for randomised algorithms as for deterministic ones.

The lower bounds we give are also tight within the cell-probe model. This can be seen by application of reductions described in~\cite{FS:1973, CEPP:2011}.  It was shown there that any offline algorithm for convolution~\cite{CEPP:2011} or multiplication~\cite{FS:1973} can be converted to an online one with at most an $O(\log{n})$ factor overhead. For details of these reductions we refer the reader to the original papers.  In our case, the same approach also allows us to directly convert any cell-probe algorithm from an offline to online setting. An offline cell-probe algorithm for convolution, multiplication or Hamming distance could first read the whole input, then compute the answers and finally output them. This takes $O{(\frac{\delta}{w} n)}$ cell probes. We can therefore derive online cell-probe algorithms which take only $O{(\frac{\delta}{w}n\log n)}$ probes over $n$ inputs, hence $O{(\frac{\delta}{w}\log n)}$ (amortised) probes per output. This upper bound matches the new lower bounds we give.
We summarise this in the following corollary.

\begin{cor}\label{cor:final}
 The expected amortised cell-probe complexity of the online convolution, multiplication, Hamming distance and $L_1$-distance problems is $\Theta(\frac{\delta}{w}\log n)$ per arriving value.
\end{cor}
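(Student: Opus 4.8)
The corollary follows by matching the lower bounds of Theorems~\ref{thm:conv}, \ref{thm:mult} and~\ref{thm:ham} with corresponding upper bounds, so the plan has two halves. For the \textbf{lower bounds}, Theorems~\ref{thm:conv} and~\ref{thm:ham} directly give the claimed $\Omega(\frac{\delta}{w}\log n)$ expected amortised bound per arriving value for online convolution and online Hamming distance, while Theorem~\ref{thm:mult} gives $\Omega(\frac{\delta}{w}n\log n)$ expected time to process all $n$ inputs of the online multiplication problem, which is $\Omega(\frac{\delta}{w}\log n)$ amortised per output digit. For the online $L_1$ distance problem, the reduction of~\cite{LP:2008} turns any online $L_1$ algorithm into an online Hamming distance algorithm with only constant-factor overhead per output, so a $o(\frac{\delta}{w}\log n)$ bound there would contradict Theorem~\ref{thm:ham}.

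For the \textbf{upper bounds}, recall that in the cell-probe model all internal computation is free, so for each problem there is a trivial \emph{offline} algorithm: probe the entire input into memory, compute all answers, and then write them out. An input symbol occupies $O(\lceil\delta/w\rceil)$ cells, and for online convolution and multiplication each of the $n$ outputs fits in $O(\lceil\delta/w\rceil)$ cells as well, so this costs $O(\frac{\delta}{w}n)$ probes in total. For online Hamming distance we may first assume the alphabet has size at most $n+1$, since at most $n$ distinct symbols occur in~$F$ and every other symbol mismatches every position of~$F$; after relabelling, each symbol and each output (a value in $\{0,\dots,n\}$) again fits in $O(\lceil\delta/w\rceil)$ cells, so the offline cost is once more $O(\frac{\delta}{w}n)$ probes.

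Finally, the offline-to-online reductions of~\cite{FS:1973} for multiplication and of~\cite{CEPP:2011} for convolution and for pattern matching with mismatches (hence for online Hamming distance, and via the reduction above for online $L_1$) convert an offline algorithm into an online one with an $O(\log n)$ multiplicative overhead. Applied to the offline cell-probe algorithms just described, they yield online cell-probe algorithms using $O(\frac{\delta}{w}n\log n)$ probes over $n$ inputs, i.e.\ $O(\frac{\delta}{w}\log n)$ amortised probes per output, which matches the lower bounds and establishes $\Theta(\frac{\delta}{w}\log n)$. The one point that needs care is that these reductions were originally stated for the RAM model, so one must check they are \emph{cell-probe faithful}: every manipulation they perform on the underlying offline algorithm is simulable by a cell-probe algorithm whose probe count stays within the claimed $O(\log n)$ factor of the offline probe count, with no hidden model-specific costs. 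This verification is the main obstacle; the remainder is bookkeeping.
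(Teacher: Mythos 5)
Your proposal is correct and takes essentially the same route as the paper: the lower bounds come directly from Theorems~\ref{thm:conv}, \ref{thm:mult}, \ref{thm:ham} (and the $L_1$-to-Hamming reduction of~\cite{LP:2008}), and the upper bounds come from the trivial $O(\frac{\delta}{w}n)$-probe offline cell-probe algorithm combined with the offline-to-online reductions of~\cite{FS:1973,CEPP:2011} with $O(\log n)$ overhead. Your closing caveat about verifying that those reductions are cell-probe faithful is a fair observation, though the paper simply asserts this without elaboration.
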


One consequence of our results is the first strict separation between the complexity of exact and inexact pattern matching.  Online exact matching can be solved in constant time~\cite{Galil:1981} per new input symbol and our new lower bound proves for the first time that this is not possible for Hamming distance.


Another consequence of our results is a new separation between the time complexity of online exact matching and any convolution-based online pattern matching algorithm. Convolution has played a particularly important role in the field of combinatorial pattern matching where many of the fastest algorithms rely crucially for their speed on the use of fast Fourier transforms (FFTs) to perform repeated convolutions. These methods have also been extended to allow searching for patterns in rapidly processed data streams~\cite{CEPP:2011,CS:2011}.

\subsection{Previous results and upper bounds in the RAM model} \label{sec:previous}

Almost all previous algorithmic work for exact Hamming distance computation has considered the problem in an offline setting.  Given a pattern~$P$ and a text~$T$ of length $m$ and $n$ respectively, the best current deterministic upper bound for offline Hamming distance computation is an $O(n\sqrt{m\log{|m|}})$ time algorithm based on convolutions~\cite{Abrahamson:1987, Kosaraju:1987}. In \cite{Karloff:1993} a randomised algorithm was given that takes $O((n/{\epsilon}^2)\log^2{n})$ time which was subsequently modified in~\cite{Indyk:1998} to $O((n/{{\epsilon}^3}) \log{n})$. Particular interest has also been paid to a bounded version of this problem called the $k$-mismatch problem.  Here a bound $k$ is given and we need only report the Hamming distance if it is less than or equal to $k$. In \cite{LV:1986a}, an $O(nk)$ algorithm was given that is not convolution based and uses $O(1)$ time lowest common ancestor (LCA) operations on the suffix tree of $P$ and $T$. This was then  improved to $O(n\sqrt{k\log{k}})$ time by a method that combines LCA queries, filtering and convolutions~\cite{ALP:2004}.

The best time complexity lower bounds for online multiplication of two $n$-bit numbers were given in the 1974 by Paterson, Fischer and Meyer.  They presented an $\Omega(\log{n})$ lower bound for multitape Turing machines~\cite{PFM:1974} and also gave an $\Omega(\log{n}/\log{\log n})$ lower bound for the \emph{bounded activity machine} (BAM).  The BAM, which is a strict generalisation of the Turing machine model but which has nonetheless largely fallen out of favour, attempts to capture the idea that future states can only depend on a limited part of the current configuration.  To the authors' knowledge, there has been no progress on cell-probe lower bounds for online multiplication, convolution or Hamming distance previous to the work we present here.

There have however been attempts to provide offline lower bounds for the related problem of computing the FFT.  In~\cite{Morgenstern:1973} Morgenstern gave an $\Omega(n \log{n})$ lower bound conditional on the assumption that the underlying field of the transform is the complex numbers and that the modulus of any complex numbers involved in the computation is at most one.  Papadimitriou gave the same $\Omega(n \log{n})$ lower bound for FFTs of length a power of two, this time excluding certain classes of algorithms including those that rely on linear mathematical relations among the roots of unity~\cite{Papadimitriou:1979}.  This work had the advantage of giving a conditional lower bound for FFTs over more general algebras than was previously possible, including for example finite fields.  In 1986, Pan~\cite{Pan:1986} showed that another class of algorithms having a so-called synchronous structure must require $\Omega(n \log{n})$ time for the computation of both the FFT and convolution.

The fastest known algorithms for both offline integer multiplication and convolution in the word-RAM model require $O(n\log{n})$ time by a well known application of a constant number of FFTs.  As a consequence our online lower bounds for these two problems match the best known time upper bounds for the offline problem. As we discussed above, our lower bounds for all three problems are also tight within the cell-probe model for the online problems.   

The question now naturally arises as to whether one can find higher lower bounds in the RAM model.   This appears as an interesting question as there remains a gap between the best known time upper bounds provided by existing algorithms and the lower bounds that we give within the cell-probe model.  However, as we mention above, any offline algorithm for convolution, Hamming distance or multiplication can be converted to an online one with at most an $O(\log{n})$ factor overhead~\cite{FS:1973,CEPP:2011}.
As a consequence, a higher lower bound than $\Omega(\log{n})$ for any of these problems would immediately imply a superlinear lower bound for the offline version of the corresponding problem. This would be a truly remarkable breakthrough in the field of computational complexity as no such offline lower bound is known even for the canonical NP-complete problem SAT.

Our only alternative route to find tight time bounds would be to find better upper bounds for the online problems. For the case of online multiplication at least, where the fastest online RAM algorithm takes $O(\log^2{n})$ time per arriving pair of digits, this has been an open problem since at least 1973 and has so far resisted our best attempts.
On the other hand, for online Hamming distance, while our lower bound is tight within the model, it is still distant from the time complexity of the fastest known RAM algorithms. The best known online complexity is $O(\sqrt{n\log{n}})$ time per arriving symbol~\cite{CEPP:2011}. An improvement of the upper bound for Hamming  distance computation to meet our new lower bound would also have significant implications. A reduction that is now regarded as folklore tells us that any $O(f(n))$~time algorithm for computing the Hamming distance between a pattern and all substrings of a text, assuming a pattern of length~$n$ and a text of length~$2n$, implies an $O(f(n^2))$~time algorithm for multiplying binary $(n\!\times\! n)$-matrices over the integers. Therefore an $O(\log{n})$ time online Hamming distance algorithm would imply an $O(n\log{n})$  offline Hamming distance algorithm, which would in turn imply an $O(n^2\log{n})$ time algorithm for binary matrix multiplication.  Although such a result would arguably be less shocking than a proof of a superlinear offline lower bound for Hamming distance computation, it would nonetheless be a significant breakthrough in the complexity of a classic and much studied problem.

\subsection{The cell-probe model}

Our bounds hold in the \emph{cell-probe model} which is a particularly strong computational model that was introduced originally by Minsky and Papert~\cite{MP:1969} in a different context and then subsequently by Fredman~\cite{Fredman:1978} and Yao~\cite{Yao1981:Tables}.
In the cell-probe model there is a separation between the computing unit and the memory, which is external and consists of a set of cells of $w$ bits each. The computing unit cannot remember any information between operations. Computation is free and the cost is measured only in the number of cell reads or writes (cell~probes). This general view makes the model very strong, subsuming for instance the popular word-RAM model. In the word-RAM model certain operations on words, such as addition, subtraction and possibly multiplication take constant time (see for example~\cite{Hagerup:1998} for a detailed introduction). Here a word corresponds to a cell. As is typical, we will require that the cell size $w$ is at least of order $\log n$ bits. This allows each cell, or a constant number of cells, to hold the address of any location in memory.

The generality of the cell-probe model makes it particularly attractive for establishing lower bounds for dynamic data structure problems and many such results have been given in the past couple of decades. The approaches taken had historically been based only on communication complexity arguments and the chronogram technique of Fredman and Saks~\cite{FS1989:chronogram}.
However in 2004, a breakthrough lead by \Patrascu and Demaine gave us the tools to seal the gaps for several data structure problems~\cite{PD2006:Low-Bounds} as well as giving the first $\Omega(\log{n})$ lower bounds. The new technique is based on information theoretic arguments that we also deploy here. \Patrascu and Demaine also presented ideas which allowed them to express more refined lower bounds such as trade-offs between updates and queries of dynamic data structures.  For a list of data structure problems and their lower bounds using these and related techniques, see for example~\cite{Pat2008:Thesis}.  More recently, a new lower bound of $\Omega\left((\log{n}/\log{\log{n}})^2\right)$ was given by Green Larsen for the cell-probe complexity of performing queries in the dynamic range counting problem~\cite{Larsen:2012}.
This result  holds under the natural assumptions of $\Theta(\log{n})$~size words and polylogarithmic time updates and is another exciting breakthrough in the field of cell-probe complexity.

\subsection{Technical contributions}

We use one of the most important techniques for proving data structure lower bounds called the \emph{information transfer method} of \Patrascu and Demaine~\cite{PD2004:Partial-sums,PD2006:Low-Bounds}.
For a pair of adjacent intervals of arriving values in the stream, the information transfer is the set
of memory cells that are written during the first interval and read in
the next interval. These cells must contain \emph{all} the
information from
the updates during the first interval that the algorithm needs in order 
to produce correct outputs in the next interval. If
one can prove that this quantity is large for many pairs of intervals
then the desired lower bounds follow. To do this we relate the size
of the information transfer to the conditional entropy of the outputs
in the relevant interval. The main task of proving lower bounds
reduces to that of devising a hard input distribution for which outputs
have high entropy conditioned on selected previous values of the input.

Although the use of information transfer to provide time lower bounds for data structure problems is not new,  applying the method to our new online setting has required a number of new insights and technical innovations. At the simplest level, where a standard data structure problem has a number of different possible queries, in our setting there is only one query which is to return the latest result as soon as a  new symbol arrives. As a result we provide a complete description of the information transfer method in a form which is relevant to this different setting.  At a more detailed mathematical level, perhaps the most surprising innovation we present is a new relationship between the Hamming distance, vector sums and constant weight binary cyclic codes.

For the three problems we consider, our key innovation is the design of a fixed vector or string $F$ 
which together with some random distribution over possible input streams provide a lower bound for the information transfer between successive intervals.
For the convolution and multiplication problems we show that a randomly picked $F$ has a good chance of being suitable for proving the lower bounds. We also give an explicit description of a particular $F$ for which the lower bounds are obtained when the values of the input stream are drawn independently and uniformly at random. The vector $F$ is easy to describe and naturally yields large conditional entropy of the outputs for intervals of power-of-two lengths.

The results of the convolution and multiplication problems can be seen as a first step towards the lower bound for the Hamming distance problem.
Here the string $F$ is derived by a sequence of transformations. These start with binary cyclic codes and go via binary vectors with many 
distinct sums and an intermediate string to finally arrive at $F$ itself.
The use of such a purposefully designed input departs from the closely related work of the convolution and multiplication lower bounds and also from much of the lower bound literature where simple uniform distributions over the whole input space often suffice.

The central fact that enabled a lower bound to be proven for the online convolution problem is that the inner product  between a vector and successive suffixes of the stream reveals a lot of information about the history of the stream.  Establishing a similar result for online Hamming distance problem appears, however, to be considerably more challenging for a number of reasons.
The first and most obvious is that the amount of information one gains by comparing whether two, potentially large, symbols are equal is at most one bit, as opposed to $O(\log{n})$ bits for multiplication.
The second is that the particularly simple worst-case vector $F$ of the convolution problem greatly eased the resulting analysis.
We have not been able to find such a simple fixed string for the Hamming distance problem and our proof of the existence of a hard instance is non-constructive and involves a number of new insights, combining ideas from coding theory and additive combinatorics.

When computing the Hamming distance there is a balance between the number of symbols being used and the length of the strings. For large alphabets and short strings, one would expect a typical outputted Hamming distance to be close to the length of the string on random inputs and therefore to provide very little information.
This suggests that the length of the strings must be sufficiently long in relation to the alphabet size to ensure that the entropy of the outputs is large, as required by the information transfer method. On a closer look, it is not immediately obvious that large entropy can be obtained unless the fixed string $F$ is \emph{exponentially} larger than the alphabet size. This potentially poses another problem for the information transfer method, namely that a word size $w$ of order $\log n$ would be much larger than $\delta$ (the number of bits needed to represent a symbol), making a $\log n$ lower bound impossible to achieve.

Our main technical contribution is to show that fixed strings of length only polynomial in the size of the alphabet exist which provide outputs of sufficiently high entropy.  Such strings, when combined with a suitable input distribution maximising the number of distinct Hamming distance output sequences, give us the overall lower bound.
We design a fixed string $F$ with this desirable property in such a way that there is a one-to-one mapping between many of the different possible input streams and the outputted Hamming distances. This in turn implies large entropy.
The construction of $F$ is non-trivial and we break it into smaller building blocks, reducing our problem to a purely combinatorial question relating to vectors sums.
That is, given a relatively small set $V$ of vectors of length $m$, how many distinct vector sums can be obtained by choosing $m$ vectors from $V$ and adding them. We show that even if we are restricted to picking vectors only from subsets of $V$, there exists a $V$ such that the number of distinct vector sums is  $m^{\Omega(m)}$. We believe this result is interesting in its own right. Our proof for the combinatorial problem is non-constructive and probabilistic, using constant weight cyclic binary codes to prove that there is a positive probability of the existence of a set $V$ with the desired property.


\subsection{Organisation}

In Section~\ref{sec:preliminaries} we introduce notation and describe the setup for proving the lower bounds.
In Section~\ref{sec:proofs} we prove the lower bounds for all three problems that we consider. The proofs hinge on a set of lemmas that will be proved separately in subsequent sections.
In Section~\ref{sec:conv} we deal with the lemmas related to the convolution problem, and
in Section~\ref{sec:mult} we deal with the lemmas related to the multiplication problem.
Finally, in Sections~\ref{sec:ham} to~\ref{sec:proofvecsum} we prove the lemma related to the Hamming distance problem.

\section{Basic setup for the lower bounds}\label{sec:preliminaries}

In this section we introduce notation and concepts that are used heavily
in the lower bound proofs.
For an array, vector or string $A$ of length $n$ and $i,j\in[n]$, we write $A[i]$
to denote the value at position~$i$, and where $j\geq i$, $A[i,j]$
denotes the $(j-i+1)$-length subarray of $A$ starting at position
$i$. All logarithms are in base two.
We first introduce a unifying framework for the problems we consider.

\subsection{The framework}
There is a \emph{fixed~array}
$F$ and an array $S$ which is referred
to as the \emph{stream}.
Both $F$ and $S$ are of length $n$ and over the set $[q]$ of integers,
and we let $\delta=\lfloor \log q \rfloor$ denote the number of bits required to encode a value from $[q]$.
The value $q$, or alternatively $\delta$, is a parameter of the problem.
The problem is to maintain $S$ subject to an update operation $\arrive(x)$ which takes a symbol $x\in [q]$, modifies~$S$ by appending $x$ to the right of the rightmost symbol $\S[n-1]$ and removing the leftmost symbol $S[0]$, and then outputs the value of a function of $F$ and the updated $S$.
In the \emph{convolution} problem the output is the inner product of $F$ and $S$, that is $\sum_{i\in[n]}(F[i]\cdot S[i])$, and in
the \emph{Hamming~distance} problem the output is the number of positions $i\in[n]$ such that $F[i]\neq S[i]$.

We let $U\in [q]^n$ denote the \emph{update array} which describes a sequence of  $n$ $\arrive$ operations. That is, for each $t\in[n]$, the operation $\arrive(U[t])$ is performed.
We will usually refer to $t$ as the \emph{arrival} of the value $U[t]$.
Observe that just after the arrival $t$, the values $U[t+1,n-1]$ are still not known to the algorithm. Finally, we let the $n$-length array $A$ denote the outputs such that for $t\in [n]$, $A[t]$ is the output of $\arrive(U[t])$.

In the \emph{multiplication}~problem we let $F$ denote one of the two operands to be multiplied, hence $F$ is fixed and known in advance by the algorithm.
Specifically we let $F[i]$ denote the $i$-th least significant digit. We let $U$ be the unknown operand so that $U[t]$ is its $t$-th least significant digit. Prior to the arrival of the first digit $U[0]$, the stream $S$ contains only zeros. The output $A[t]$ is the $t$-th digit in the product of $F$ and $S$, which is a function of $F$ and $U[0,t]$ as required.

\subsection{Hard distributions}

Our lower bounds hold for any randomised algorithm on its worst case
input. This will be achieved by applying \emph{Yao's
minimax principle}~\cite{Yao1977:Minimax}. That is, we develop
lower bounds that
hold for any deterministic~algorithm on some random~input. The basic approach is as follows: we devise a fixed array $F$ and describe a probability
distribution for $n$ new values arriving in the stream~$S$.
We then obtain a lower bound on the expected
running time for any deterministic algorithm over these arrivals.
Due to the minimax principle, the
same lower bound must then hold for any randomised~algorithm on its own
worst~case input. The amortised bound is obtained by dividing by $n$.

From this point onwards we consider an arbitrary deterministic algorithm
running with some fixed array $F$ on a random input of $n$ values. The algorithm may depend on~$F$. We refer to the choice of $F$ and distribution on $U$ as a \emph{hard distribution} since it used to show a lower bound.


\subsection{Information transfer} \label{sec:more-notation}

The \emph{information transfer tree}, denoted $\calT$, is a balanced
binary tree over $n$ leaves. To avoid technicalities we assume that
$n$ is a power of two.
The leaves
of $\calT$, from left to right, represent the arrivals $t$ from $0$
to $n-1$.
For a node $v$ of $\calT$, we let $\ell_{v}$
denote the number of leaves in the subtree rooted at $v$. 
An internal node $v$ is associated with three arrivals,
$t_{0}$, $t_{1}$ and $t_{2}$. Here $t_{0}$ is the arrival represented
by the leftmost node in subtree rooted at $v$, similarly $t_{2}=t_{0}+\ell_{v}-1$ is the rightmost such node and $t_{1}=t_{0}+\ell_{v}/2-1$ is in the middle. That is, the intervals $[t_{0},t_{1}]$
and $[t_{1}+1,t_{2}]$ span the left and right subtrees of $v$, respectively.
For example, in Figure~\ref{fig:tree},\inserttreefigure the node labelled $v$ is associated with the intervals $[16,23]$ and $[24,31]$.

We define the subarray $\Xv=U[t_0,t_1]$ to represent the $\ell_{v}/2$ values
arriving in the stream during the arrival interval $[t_{0},t_{1}]$, and we define the subarray $\Yv=A[t_{1}+1,t_{2}]$ to represent the $\ell_{v}/2$
outputs during the arrival interval $[t_{1}+1,t_{2}]$.

We define $\Xvknown$ to be the concatenation of $U[0,(t_0-1)]$
and $U[(t_2+1),(n-1)]$. That is, $\Xvknown$ contains all symbols
of $U$ except for those in $\Xv$.
When $\Xvknown$ is fixed to some constant $\Xvfix$ and $\Xv$ is
random, we write $H(\Yv\mid\Xvknown=\Xvfix)$ to denote the conditional
entropy of $\Yv$ under the fixed~$\Xvknown$.

We define the \emph{information transfer} of a node $v$ of $\calT$,
denoted $\vitset$, to be the set of memory cells $c$ such that $c$
is probed during the interval $[t_{0},t_{1}]$ and also probed in $[t_{1}+1,t_{2}]$.
The cells in
the information transfer $\vitset$ therefore contains
all the information about the values in $\Xv$ that the algorithm
uses in order to correctly produce the outputs $\Yv$.

By adding up the sizes of the information transfers $\vitset$ over the internal nodes $v$
of $\calT$ we get a lower bound on the number of cell probes, that is a lower bound on the total running time of the algorithm.
To see this it is important to make the observation
that a particular cell probe is counted for only once.
Suppose that the cell $c\in\vitset$ for some node $v$. Let $p$ be the first probe of $c$ in the arrival interval $[t_{1}+1,t_{2}]$. By including the cell $c\in\vitset$ in the cell probe count we are in fact counting the probe $p$. Now observe that $p$ cannot be counted for in the information transfer $\calI_{v'}$ of any node $v'$ where $v'$ is a proper descendant or ascendant of $v$.

Since the concept of the size of the information transfer is central to the lower bound proofs, we define as a shorthand $\vit=|\vitset|$ to denote the size of the information transfer.

\begin{defn}
[\textbf{Large expected information transfer}]\label{def:large-it}
A node $v$ of $\calT$ has \emph{\lit} if
\[
    \expected{\vit} ~\geq~ \frac{k\cdot\delta\cdot\ell_v}{w},
\]
where $k$ is a constant that depends on the problem and input distribution.
\end{defn}

The aim is to show that a substantial proportion of nodes of $\calT$ have \lit.

\section{Overall proofs of the lower bounds}\label{sec:proofs}

In this section we give the overall proofs for our lower bound
results.
Let $v$ be any node of $\calT$. Suppose that $\Xvknown$ is fixed but the
symbols in $\Xv$ are randomly drawn in accordance with the distribution
on $U$, conditioned on the fixed value of $\Xvknown$. This induces
a distribution on the outputs $\Yv$. If the entropy of $\Yv$ is
large, conditioned on the fixed $\Xvknown$, then any algorithm must
probe many cells in order to correctly produce the outputs $\Yv$, as it is
only through the information transfer $\vitset$ that the algorithm
can know anything about $\Xv$. We will soon make this claim more
precise.

\subsection{Upper bound on the entropy}

Towards showing that high conditional entropy  $H(\Yv\mid\Xvknown=\Xvfix)$ implies large information transfer we use the information transfer $\vitset$ to describe an encoding of the outputs $\Yv$. The following lemma gives a direct relationship between the size of the information transfer $\vitset$ and the entropy.
The lemma was originally stated in~\cite{PD2006:Low-Bounds} but for completeness we restate it here in our notation and provide a full proof.

\begin{lem}[\Patrascu and Demaine~\cite{PD2006:Low-Bounds}]\label{lem:H-upper-old}
Under the assumption that the address of any cell can be specified in $w$ bits, for any node $v$ of the information transfer tree $\calT$, the entropy 
$$
H(\Yv\mid\Xvknown=\Xvfix)~\leq~w + 2w\cdot \expected{\vit \mid \Xvknown=\Xvfix}.
$$
\end{lem}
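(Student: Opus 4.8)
The plan is to bound the conditional entropy $H(\Yv \mid \Xvknown = \Xvfix)$ by exhibiting an explicit prefix-free (or at least uniquely decodable) encoding of $\Yv$ whose expected length is at most $w + 2w\cdot\expected{\vit \mid \Xvknown = \Xvfix}$, and then invoke Shannon's source coding theorem (entropy is a lower bound on the expected codeword length of any uniquely decodable code). So the whole game is designing the encoder and the matching decoder.

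\medskip\noindent\textbf{The encoder.} Fix $\Xvknown = \Xvfix$. Run the deterministic algorithm on the full input $U$ (whose unknown part $\Xv = U[t_0,t_1]$ is the random object we are encoding). During the arrival interval $[t_1+1,t_2]$ the algorithm produces exactly the outputs $\Yv = A[t_1+1,t_2]$. I claim these outputs are determined by: (i) the fixed data $\Xvfix$ (known to the decoder), (ii) the contents and addresses of the cells in the information transfer $\vitset$, i.e. the cells written during $[t_0,t_1]$ and read during $[t_1+1,t_2]$, at the moment arrival $t_1$ finishes. So the encoding of $\Yv$ is: first write down $|\vitset| = \vit$ in some self-delimiting way costing $O(w)$ bits (one can afford a single cell, $w$ bits, since $\vit \le$ total memory and we only need to be able to tell where the list ends — actually it is cleanest to spend $w$ bits writing $\vit$ itself, which is at most the address space and hence fits in $w$ bits); then, for each cell $c \in \vitset$, write its address (($w$ bits, by the stated assumption) and its contents ($w$ bits), for a total of $2w\cdot\vit$ bits. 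This gives total length $w + 2w\cdot\vit$; taking expectations over $\Xv$ (conditioned on $\Xvfix$) gives expected length $w + 2w\cdot\expected{\vit \mid \Xvknown = \Xvfix}$.

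\medskip\noindent\textbf{The decoder.} Given the codeword, the decoder reads $\vit$ and then the list of (address, contents) pairs, reconstructing the set $\vitset$ together with its contents as of the end of arrival $t_1$. Now the decoder simulates the algorithm on the input whose symbols are: $\Xvfix$ in the positions outside $[t_0,t_1]$, arbitrary (say zero) in the positions $[t_0,t_1]$ — but starting the simulation only at arrival $t_1+1$, using the reconstructed memory image. The key point: any cell the algorithm reads during $[t_1+1,t_2]$ that was last written during $[t_0,t_1]$ is by definition in $\vitset$, so its value is known; any cell last written before arrival $t_0$ has a value determined by $\Xvfix$ and the algorithm's behaviour on $U[0,t_0-1]$, which the decoder can compute; any cell written during $[t_1+1,t_2]$ itself is tracked by the simulation. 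Hence the decoder correctly reproduces every output $A[t], t \in [t_1+1,t_2]$, i.e. it recovers $\Yv$. Since the decoder recovers $\Yv$ from the codeword, the code is uniquely decodable (as a map from values of $\Yv$), so by Shannon $H(\Yv \mid \Xvknown = \Xvfix) \le w + 2w\cdot\expected{\vit \mid \Xvknown = \Xvfix}$.

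\medskip\noindent\textbf{Main obstacle.} The delicate point — and the thing to get exactly right — is the argument that the outputs $\Yv$ really are a function of $(\Xvfix, \vitset, \text{contents of }\vitset)$ alone, and in particular that the decoder never needs to know anything about $\Xv$ beyond what sits in the information transfer. This requires carefully classifying, during the replay of $[t_1+1,t_2]$, every cell read by the algorithm according to when it was last written (before $t_0$, during $[t_0,t_1]$, or during $[t_1+1,t_2]$) and checking the three cases close up. A secondary subtlety is making the encoding genuinely prefix-free / self-delimiting so that Shannon's bound applies cleanly: spending one extra cell ($w$ bits) up front to record $\vit$ handles this, and explains the additive $w$ term and the factor $2w$ (address plus contents) per transferred cell. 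I expect no real difficulty beyond this bookkeeping, since the lemma is quoted from \cite{PD2006:Low-Bounds} and the proof is the standard information-transfer encoding argument adapted to the streaming framework set up above.
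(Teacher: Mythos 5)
Your proposal is correct and takes essentially the same approach as the paper: you encode $\Yv$ by storing the size of the information transfer ($w$ bits, making the code self-delimiting) followed by address–contents pairs for each cell in $\vitset$ ($2w$ bits each), and you decode by simulating the algorithm on the fixed $\Xvfix$ through arrival $t_0-1$, skipping $[t_0,t_1]$, and replaying $[t_1+1,t_2]$ while serving each read either from the encoding (if the cell is in $\vitset$) or from the simulation. This is precisely the paper's argument; your three-way classification of reads by last-write time and your explicit invocation of Shannon's source-coding bound just spell out steps the paper leaves slightly more implicit.
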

\begin{proof}
The expected length of any encoding of $\Yv$, conditioned on $\Xvknown$, is an upper bound on the conditional entropy of $\Yv$.
We use the information transfer $\vitset$ as an encoding in the following way. For every cell $c\in\vitset$ we store the address of $c$, which takes at most $w$ bits under the assumption that a cell can hold the address of any cell in memory.
We also store the contents of $c$, which takes $w$ bits.
In total this requires $2w\cdot \vit$ bits.
We will use the algorithm, which is fixed, and the fixed values of $\Xvknown$ as part of the decoder to obtain $\Yv$ from the encoding. Since the encoding is of variable length we also store the size of the information transfer, which requires at most $w$ additional bits.

In order to prove that the described encoding of $\Yv$ is valid we now describe how to decode it.
First we simulate the algorithm on the fixed input $\Xvknown$ from the first arrival of $U[0]$ until just before the first value in $\Xv$ arrives.
We then skip over all inputs in $\Xv$
and resume simulating the algorithm from the beginning of the interval
where $\Yv$ is outputted until the last value in $\Yv$ has been obtained.
For every cell being read, we check if it is contained in information transfer $\vitset$ by looking up its address in the encoding.
If it is in the information transfer, its contents is fetched from the encoding. If not, its contents is available from simulating the algorithm on the fixed inputs.
Observe that it suffices to
store only the first time a cell in the information transfer is probed as the decoder
remembers every cell it has already accessed.
\end{proof}

\subsection{Lower bounds on the entropy}

Lemma~\ref{lem:H-upper-old} above provides a direct way to obtain a lower bound on the expected size of the information transfer if given a lower bound on the conditional entropy $H(\Yv\mid\Xvknown=\Xvfix)$.
To show that a node has \lit we introduce the following definition.

\begin{defn}
[\textbf{High-entropy node}]\label{def:high-node}A node $v$ in
$\calT$ is a \emph{high-entropy~node} if there is a positive constant
$k$ such that for \emph{any} fixed $\Xvfix$, 
$$
H(\Yv\mid\Xvknown=\Xvfix)\,\geq\, k\cdot\delta\cdot\ell_{v}.
$$
\end{defn}

To put this bound in perspective, note that the maximum conditional
entropy of $\Yv$ is bounded by the entropy of $\Xv$, which is at
most $\delta\cdot(\ell_{v}/2)$ and obtained when the values of $\Xv$
are independent and uniformly drawn from $[q]$. Thus, the
conditional entropy associated with a high-entropy node is the highest
possible up to some constant factor.
Establishing high-entropy nodes is the main contribution of this paper and the results are given in the following lemmas.

\begin{lem}
\label{lem:conv-rand-H-lower}
For the convolution problem,
suppose that $U$ is chosen uniformly at random from $[q]^n$, where $q$ is a prime. For any $v \in \calT$, at least a $(1-\frac{1}{q})$-fraction of all $F \in [q]^n$ have the property that $v$ is a high-entropy node.
\end{lem}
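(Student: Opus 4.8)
The plan is to reduce the claim to a fact about random Toeplitz matrices over $\mathbb{F}_q$. The key observation is that, once $\Xvknown$ is fixed to some $\Xvfix$, the output block $\Yv$ is an affine image of $\Xv$ over $\mathbb{Z}_q$, and the linear part of this map is a Toeplitz matrix $M=M(F)$ that depends only on $F$ and not on $\Xvfix$. Whenever $M$ is nonsingular this affine map is a bijection, which forces $\Yv$ to have the largest possible conditional entropy; and when $F$ is uniform on $[q]^n$, the matrix $M(F)$ is distributed exactly as a uniformly random Toeplitz matrix over $\mathbb{F}_q$, which is nonsingular with probability exactly $1-\tfrac1q$.

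First I would make the affine description explicit. Fix a node $v$ and set $m=\ell_v/2=t_1-t_0+1=t_2-t_1$. At arrival $t$ the symbol $U[j]$ (for $j\le t$) occupies position $n-1-t+j$ of the stream, so $A[t]=\sum_{j=0}^{t}F[n-1-t+j]\cdot U[j]\pmod{q}$ (the contribution of any fixed initial stream content is just a constant). For $t\in[t_1+1,t_2]$ and $j\in[t_0,t_1]$ the index $n-1-t+j$ ranges over $[\,n-\ell_v,\ n-2\,]$, so separating the terms with $j\in[t_0,t_1]$ (the coordinates of $\Xv$) from those with $j\notin[t_0,t_1]$ (which are determined by $\Xvfix$) gives
\[
\Yv \;=\; M\,\Xv + c \pmod{q},
\]
where $c=c(F,\Xvfix)\in\mathbb{Z}_q^{m}$ and $M=M(F)\in\mathbb{Z}_q^{m\times m}$ has entries $M_{r,s}=F[n-1-m+(s-r)]$ for $0\le r,s\le m-1$. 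Thus $M$ is the Toeplitz matrix generated by the $\ell_v-1$ consecutive coordinates $F[n-\ell_v],\dots,F[n-2]$, and it is independent of the choice of $\Xvfix$.

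Next I would turn nonsingularity of $M$ into high entropy. Since $q$ is prime, $\mathbb{Z}_q$ is a field, so if $M$ is nonsingular then $\Xv\mapsto M\Xv+c$ is a bijection of $\mathbb{Z}_q^{m}$, hence $\Yv$ is a bijective function of $\Xv$. Because $U$ is uniform on $[q]^n$, the block $\Xv=U[t_0,t_1]$ is uniform on $\mathbb{Z}_q^{m}$ and independent of $\Xvknown$, so for every $\Xvfix$
\[
H(\Yv\mid\Xvknown=\Xvfix)\;=\;H(\Xv\mid\Xvknown=\Xvfix)\;=\;m\log_2 q\;\geq\;\tfrac{\ell_v}{2}\,\lfloor\log_2 q\rfloor\;=\;\tfrac12\,\delta\,\ell_v .
\]
Consequently $v$ is a high-entropy node (with $k=\tfrac12$) whenever $M(F)$ is nonsingular. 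Finally, when $F$ is drawn uniformly from $[q]^n$, the $\ell_v-1$ coordinates $F[n-\ell_v],\dots,F[n-2]$ are independent and uniform on $\mathbb{Z}_q$, so $M(F)$ is a uniformly random $m\times m$ Toeplitz matrix over $\mathbb{F}_q$; invoking the fact that such a matrix is nonsingular with probability exactly $1-\tfrac1q$ (for every $m\ge1$) shows that at least a $(1-\tfrac1q)$-fraction of $F\in[q]^n$ make $v$ a high-entropy node, as required.

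The step I expect to be the main obstacle is the last one: establishing that a uniformly random Toeplitz matrix over $\mathbb{F}_q$ is nonsingular with probability exactly $1-\tfrac1q$. A direct Schwartz--Zippel estimate on $\det M(F)$ — a polynomial of degree $m$ in the generating entries — gives only $1-\tfrac{m}{q}$, which is far too weak, so the exact bound genuinely uses the Toeplitz (equivalently Hankel) structure, e.g.\ through the Berlekamp--Massey / linear-complexity analysis of the generating sequence $F[n-\ell_v],\dots,F[n-2]$. This is the one point where the argument relies on a nontrivial structural fact rather than directly on the definitions and the primality of $q$; everything else — deriving the affine form, reading off the Toeplitz window of $F$, and converting bijectivity into maximal conditional entropy — is routine given the setup.
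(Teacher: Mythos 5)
Your proof follows essentially the same route as the paper: decompose $\Yv$ as an affine image $M\Xv + c$ of $\Xv$ over $\mathbb{Z}_q$ with $M$ the $(\ell_v/2)\times(\ell_v/2)$ Toeplitz matrix generated by $F[n-\ell_v],\dots,F[n-2]$, observe that nonsingularity of $M$ forces $H(\Yv\mid\Xvknown=\Xvfix)=(\ell_v/2)\log_2 q$, and invoke the exact $(1-\tfrac1q)$ probability of nonsingularity for a uniformly random Toeplitz matrix over $\mathbb{F}_q$. The paper handles the one fact you flagged as the obstacle by citing it directly (Kaltofen--Lobo~1996, originally Daykin~1960) rather than deriving it, but otherwise your argument matches theirs step for step, including the same matrix indexing and the same conclusion with constant $k=\tfrac12$.
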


The proof of the above lemma is given in Section~\ref{sec:conv} and relies on properties of Toeplitz matrices over a finite field of $q$ elements. The proof does not give explicit descriptions of fixed arrays $F$ for which nodes are high-entropy nodes. 
In the proof of the next lemma however, we show that there exists a particular array $F$ for which high-entropy nodes are obtained. This $F$ is a 0/1-array and is easy to describe: zeroes everywhere except for at power-of-two positions from the right hand end. The proof is given in Section~\ref{sec:conv}.

\begin{lem}
\label{lem:conv-H-lower}
For the convolution problem there exists a fixed array $F\in [q]^n$ such that when $U$ is chosen uniformly at random from $[q]^n$, all $v \in \calT$ are high-entropy nodes.
\end{lem}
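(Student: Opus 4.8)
The plan is to exhibit the explicit $0/1$-array $F$ hinted at: set $F[i]=1$ precisely when $n-i$ is a power of two, and $F[i]=0$ otherwise; equivalently $F$ vanishes except at the indices $n-2^{a}$ for $0\le a\le \log_2 n$ (in particular $F[0]=1$, since $n$ itself is a power of two). Substituting this $F$ into the convolution output $A[t]=\sum_{i\in[n]}F[i]\cdot S_t[i]$, where $S_t[i]$ is the $i$-th symbol of the length-$n$ window ending at arrival $t$ and we set $U[j]=0$ for $j<0$ (the initially empty stream), collapses everything to the clean formula $A[t]=\sum_{a=0}^{\log_2 n}U[t+1-2^{a}]\bmod q$.

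Now fix an arbitrary internal node $v$ with associated arrivals $t_0,t_1,t_2$ and put $L=\ell_v/2$, so $U_v=U[t_0,t_1]\in[q]^{L}$ and $A_v=A[t_1+1,t_2]\in[q]^{L}$. From the formula, each output $A[t_1+s]$ with $s\in\{1,\dots,L\}$ is an affine function over $\mathbb{Z}_q$ of the ``free'' inputs $U[t_0,t_2]$, everything else being a constant once $\widetilde{U}_v=\widetilde{u}_v$ is fixed. Writing $A_v=M\,U_v+N\,U[t_1+1,t_2]+c$, the matrix $N$ and vector $c$ will not matter, while the $L\times L$ matrix $M$ has $M[s][r]$ equal to the number of $a$ with $t_1+s+1-2^{a}=t_0+r$; since $a\mapsto 2^a$ is injective and $2^a=L+s-r$ is forced, $M[s][r]=1$ when $L+s-r$ is a power of two and $0$ otherwise. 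So the whole problem reduces to showing this structured $0/1$-matrix $M$ is invertible over $\mathbb{Z}_q$ for every $q$.

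I would prove $\det M=\pm1$ over $\mathbb{Z}$ as follows. Since $s\in[1,L]$, $r\in[0,L-1]$ and $L$ is a power of two, $L+s-r$ runs over $[2,2L]$, so $M[s][r]=1$ exactly when $L+s-r\in\{2,4,\dots,L,2L\}$. Reorder the columns to the order $(1,2,\dots,L-1,0)$, leaving the rows in order $(1,\dots,L)$. A short case check gives: the top-left $(L-1)\times(L-1)$ block is upper triangular with all diagonal entries $1$ (the diagonal is where $L+s-r=L$, and below it the values $L+s-r$ fall strictly between the consecutive powers $L$ and $2L$, hence are never powers of two); the rest of the last row and of the last column vanish for the same ``strictly between $L$ and $2L$'' reason; and the bottom-right entry equals $M[L][0]=1$ because $L+L-0=2L$ is a power of two. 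Thus the reordered matrix is block-diagonal, $\operatorname{diag}(\text{unit upper-triangular},\,1)$, with determinant $1$, so $\det M=\pm1$ and $M$ is invertible modulo every $q$.

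Finally, invertibility of $M$ yields the entropy bound. Conditioned on $\widetilde{U}_v=\widetilde{u}_v$, the block $U[t_0,t_2]$ is uniform on $[q]^{\,t_2-t_0+1}$; for each fixed value $y$ of $U[t_1+1,t_2]$ the map $U_v\mapsto M\,U_v+Ny+c$ is a bijection of $[q]^{L}$, so $A_v$ is uniform on $[q]^{L}$ conditioned on $U[t_1+1,t_2]=y$, and hence also unconditionally given $\widetilde{u}_v$. Therefore $H(A_v\mid\widetilde{U}_v=\widetilde{u}_v)=L\log_2 q=\tfrac{\ell_v}{2}\log_2 q\ge\tfrac14\,\delta\,\ell_v$ (using $\log_2 q\ge\delta/2$), so $v$ is a high-entropy node with constant $k=\tfrac14$; as $v$ was arbitrary, all nodes of $\calT$ are high-entropy nodes. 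The one genuinely non-routine step is the determinant computation --- recognising that the power-of-two support of $F$ turns the interaction matrix into a column permutation of a unit upper-triangular matrix plus one corner entry; the rest is bookkeeping with the convolution formula and a one-line distribution argument. The degenerate cases $\ell_v=2$, where $M=(1)$, and $v$ the root, where one needs $2L=n$ to be a power of two --- which is exactly why $F[0]=1$ is included --- should each get a sentence.
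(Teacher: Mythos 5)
Your proposal is correct and follows the same basic route as the paper — a $0/1$-array $F$ with 1s at power-of-two positions from the right, and linear algebra over $\mathbb{Z}_q$ on the resulting $\ell\times\ell$ Toeplitz matrix — but you use a one-step-shifted $F$ ($1$s where $n-i$, rather than $n-1-i$, is a power of two) and therefore have to do more work. The paper's placement puts the matrix diagonal exactly on the power $\ell$, so its Toeplitz matrix $M_{F,\ell}$ is unit upper-triangular outright (below-diagonal entries $\ell+i-j$ with $i>j$ fall in the power-free range between $\ell$ and $2\ell$); the paper in fact uses even less than that, simply reading off the bottom-half identity block to get $\calA_i=\Xv[i]$ for $i\in\{\ell/2,\dots,\ell-1\}$ and hence entropy $\geq \tfrac{\ell}{2}\log_2 q\geq\tfrac14\delta\ell_v$. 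Your shift moves the relevant power to $2L$ in the $(L,0)$ corner, so you need the column rotation before the determinant computation; that computation is correct, and it yields the slightly stronger conclusion that $A_v$ is \emph{exactly} uniform on $[q]^{L}$ given $\widetilde{u}_v$, though the constant in the lemma is ultimately the same. Your treatment of the initially-zero stream, the $L=1$ degenerate case, and the root node are all handled properly.
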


Before we give the lemmas concerning online multiplication, recall that in this problem there is a fixed operand $F$ multiplied with an operand $U$ for which digits arrive one at a time.

\begin{lem}
\label{lem:mult-rand-H-lower}
For the online multiplication problem, suppose that the operand $U$ is chosen uniformly at random from $[q^n]$. For any $v \in \calT$, at least half of all operands $F\in [q^n]$ have the property that $v$ is a high-entropy node.
\end{lem}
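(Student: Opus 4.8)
The plan is to show that for at least half of all operands $F\in[q^n]$, and for \emph{every} way of fixing the rest of the stream, the map sending the free block of $U$ to the relevant window of output digits is $O(1)$-to-one; this pins $H(\Yv\mid\Xvknown=\Xvfix)$ to within an additive constant of its maximum possible value $m\log q$, where $m=\ell_{v}/2$, and that already yields the high-entropy-node bound (after adjusting the constant $k$ and discarding the $O(1)$ shortest levels of $\calT$, which costs only an additive $O(1)$ in the $\log n$ factor of the final lower bound).

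First I would set up coordinates. Fix a node $v$ with arrivals $t_0,t_1,t_2$ and put $m=\ell_{v}/2$. Fixing $\Xvknown=\Xvfix$ determines every digit of $U$ except those of the free block $\Xv=U[t_0,t_1]$; writing $Y\in[q^m]$ for the integer value of that block, the operand is $X=c+q^{t_0}Y$ with $c\in[q^n]$ a constant determined by $\Xvfix$ and $Y$ uniform on $[q^m]$ (its digits are i.i.d.\ uniform). Since $q^{t_0}FY$ has no digits below position $t_0$, the digits of the product $FX=Fc+q^{t_0}FY$ in positions $t_0,t_0+1,\dots$ coincide with those of $e+FY$ where $e:=\lfloor Fc/q^{t_0}\rfloor$. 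Hence $\Yv$ equals exactly the block of base-$q$ digits $m,m+1,\dots,2m-1$ of $e+FY$; call this function of $Y$ $\psi_{F,c}$, so that $H(\Yv\mid\Xvknown=\Xvfix)=H(\psi_{F,c}(Y))$ with $Y$ uniform on $[q^m]$.

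The core step is a carry lemma: if two nonnegative integers agree in their base-$q$ digits in positions $m,\dots,2m-1$, then their difference $M$ satisfies $\lfloor M/q^m\rfloor\bmod q^m\in\{0,q^m-1\}$ — for that block to survive the addition, the corresponding block of $M$ plus the incoming carry (which is $0$ or $1$) must be $\equiv 0\pmod{q^m}$. Applying this to $e+FY_1$ and $e+FY_2$ for $Y_1>Y_2$ shows that $\psi_{F,c}(Y_1)=\psi_{F,c}(Y_2)$ forces $D:=Y_1-Y_2$ into $\mathcal{D}_F:=\{D\in[1,q^m-1]:\lfloor FD/q^m\rfloor\bmod q^m\in\{0,q^m-1\}\}$ — a set that does \emph{not} depend on $c$, because the fixed part of the stream enters $\psi_{F,c}$ only through the additive shift $e$, which cannot change which inputs collide. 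Writing $B(F):=|\mathcal{D}_F|$, for every $c$ the number of ordered pairs $(Y_1,Y_2)\in[q^m]^2$ with $\psi_{F,c}(Y_1)=\psi_{F,c}(Y_2)$ is at most $q^m\bigl(1+2B(F)\bigr)$, so two independent uniform copies of $Y$ collide under $\psi_{F,c}$ with probability at most $(1+2B(F))/q^m$; the collision-entropy bound then gives $H(\Yv\mid\Xvknown=\Xvfix)\ge\log\frac{q^m}{1+2B(F)}=m\log q-\log(1+2B(F))$, uniformly over $\Xvfix$.

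It remains to bound $B(F)$ for most $F$. I would swap the order of summation: $\sum_{F\in[q^n]}B(F)=\sum_{D=1}^{q^m-1}\bigl|\{F\in[q^n]:\lfloor FD/q^m\rfloor\bmod q^m\in\{0,q^m-1\}\}\bigr|$. For fixed $D$ the condition says $FD\bmod q^{2m}$ lies in a fixed set of size $2q^m$; since $F\mapsto FD$ on $\mathbb{Z}/q^{2m}$ has image the multiples of $g=\gcd(D,q^{2m})\le D<q^m$ with all fibres of size $g$, and any length-$q^m$ interval contains at most $q^m/g+1$ multiples of $g$, the number of admissible residues $F\bmod q^{2m}$ is $O(q^m)$, hence (using $2m=\ell_{v}\le n$) the number of admissible $F\in[q^n]$ is $O(q^{n-m})$. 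Summing over the fewer than $q^m$ choices of $D$ gives $\sum_{F}B(F)=O(q^n)$, so $\mathbb{E}_F[B(F)]=O(1)$ and Markov's inequality gives $B(F)\le c_0$ for an absolute constant $c_0$ on at least half of all $F\in[q^n]$. For any such $F$ and any $\Xvfix$ we then have $H(\Yv\mid\Xvknown=\Xvfix)\ge m\log q-\log(1+2c_0)$, which, since $\ell_{v}=2m$ and $\delta\le 2\log q$ for $q\ge2$, is at least $k\delta\ell_{v}$ with (say) $k=1/8$ provided $\ell_{v}$ exceeds a fixed absolute constant. The main obstacle is the carry lemma: isolating a necessary condition for two inputs to produce the same output window that is insensitive to carries propagating up from the fixed low-order digits, so that one counting argument handles all fixings of $\Xvknown$ simultaneously; the remaining estimates are routine.
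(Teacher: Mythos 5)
Your proposal is correct and reaches the same qualitative conclusion as the paper, but by a genuinely different route. The paper simply cites Lemma~5 of Paterson--Fischer--Meyer (1974) -- for $q=2$ at least half of all choices of the first $\ell_v$ digits of $F$ make the output-window map at most four-to-one -- and then observes that this scales to any $q$ a power of two, yielding $H(\Yv\mid\Xvknown=\Xvfix)\geq \frac{\delta\ell_v}{2}-2$ directly. You instead prove the needed combinatorial fact from scratch: your carry lemma shows that two free blocks $Y_1,Y_2$ collide on the output window only if $D=Y_1-Y_2$ satisfies $\lfloor FD/q^m\rfloor\bmod q^m\in\{0,q^m-1\}$, a condition that depends only on $F$ and $D$, not on the fixed part of the stream; this reduces the collision count to the size of a set $\mathcal{D}_F$, a collision-entropy (second-moment) bound turns small $|\mathcal{D}_F|$ into high Shannon entropy uniformly over $\Xvfix$, and a clean counting of $(F,D)$ pairs shows $\mathbb{E}_F[|\mathcal{D}_F|]=O(1)$ so that Markov gives the ``at least half'' claim. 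Your version is self-contained (no reliance on PFM), works for arbitrary bases $q$ without the power-of-two scaling step, and makes the ``uniform over all fixings'' mechanism explicit via the carry lemma, at the cost of being longer. One small caveat you correctly flag -- that the argument requires $\ell_v$ to exceed a fixed constant because of the additive $\log(1+2c_0)$ loss -- is present in the paper's version too (its $-2$ term plays the same role), and is harmless since the final theorem only sums over nodes with $\ell_v\geq\sqrt{n}$.
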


The proof of Lemma~\ref{lem:mult-rand-H-lower} is given in Section~\ref{sec:mult}. Similarly to the convolution problem we also give an explicit description of a number $F$ for which high-entropy nodes are obtained. This number resembles the fixed array that we described above for the convolution problem.
The proof of the next lemma is also given in Section~\ref{sec:mult}.

\begin{lem}
\label{lem:mult-H-lower}
For the online multiplication problem there exists a fixed operand $F\in [q^n]$ such that when $U$ is chosen uniformly at random from $[q^n]$, all $v \in \calT$ are high-entropy nodes.
\end{lem}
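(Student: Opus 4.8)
The plan is to mirror the proof of Lemma~\ref{lem:conv-H-lower} almost verbatim and then discharge the one genuinely new obstacle, carries. Take $F$ to be the base-$q$ integer whose digit in position $p$ is $1$ exactly when $p$ is a power of two and $0$ otherwise --- the multiplicative twin of the explicit convolution array --- so that $F$ has $C := \lfloor\log(n-1)\rfloor+1$ nonzero digits. Fix an internal node $v$ with $\ell_v = 2^{j+1}$ and arrival intervals $[t_0,t_1]$, $[t_1+1,t_2]$, and set $m := 2^j = \ell_v/2$; recall $t_0$ is a multiple of $2^{j+1}$. Fix $\Xvknown=\Xvfix$: this fixes every digit of $U$ outside $[t_0,t_1]$, so we may write $U = U_{\mathrm{fix}} + q^{t_0}W$ with $U_{\mathrm{fix}}$ a fixed integer having zero digits throughout $[t_0,t_1]$ and $W = \sum_{s=0}^{m-1} w_s q^s \in [q^m]$ carrying the free digits $w_0,\dots,w_{m-1}$; since $U$ is uniform, $W$ is uniform and independent of $\Xvfix$. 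By Definition~\ref{def:high-node} it suffices to exhibit a universal $k>0$ with $H(\Yv \mid \Xvknown=\Xvfix) \ge k\,\delta\,\ell_v$, where $\Yv$ is the block of $m$ product digits of $P := F\cdot U$ in positions $t_1+1,\dots,t_2$.

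\emph{The carryless skeleton.} Let $c_{m+r}$ be the value at position $t_1+1+r$ of $P$ before carries are resolved. Exactly as in the convolution computation, the part of $c_{m+r}$ depending on the free block is $w_r + \sum_k w_{\,r+2^j-2^k}$, the sum over those $k<j$ with $2^k \ge r+1$, and the remainder is a constant $e_r$ determined by $\Xvfix$. In matrix form this free part is $MW$ with $M$ an integer upper-triangular matrix of unit diagonal, hence invertible over $\mathbb{Z}/q\mathbb{Z}$; and whenever $r$ is in the top half $[m/2,m-1]$ the off-diagonal terms vanish (the requirement $2^k \ge r+1 > 2^{j-1}$ forces $2^k\ge 2^j$, impossible for $k<j$), so there $c_{m+r} = w_r + e_r$. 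With no carries we would have $A[t_1+1+r] \equiv (MW)_r + e_r \pmod q$, and unimodularity of $M$ would make $W\mapsto\Yv$ a bijection, giving $H(\Yv\mid\Xvfix) = m\log q \ge m\delta$ --- this is precisely the argument of Lemma~\ref{lem:conv-H-lower}.

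\emph{Carries: the main obstacle.} Every carry that arises while forming $P$ is a nonnegative integer at most $C = O(\log n)$; the nuisance is that carries propagate from low to high positions whereas $M$ is triangular in the opposite direction, so neither a pure back-substitution nor a pure forward simulation closes. I claim $W$ is nevertheless determined by $\Yv$, $\Xvfix$, and two such carries: $\gamma^{\mathrm{mid}}$, the carry into position $t_0+2^j+2^{j-1}$ of $P$, and $\gamma^{\mathrm{lo}}$, the carry into position $t_1+1 = t_0+2^j$. Given $\gamma^{\mathrm{mid}}$, run the schoolbook-multiplication recurrence forward through the top half $r = m/2,\dots,m-1$: at step $r$ the digit value is $w_r + e_r$ plus the (known) incoming carry, so the observed $A[t_1+1+r]$ pins $w_r$ down exactly and the outgoing carry is then computed exactly --- no branching. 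This recovers $w_{m/2},\dots,w_{m-1}$; now, given $\gamma^{\mathrm{lo}}$, run the recurrence forward through the bottom half $r = 0,\dots,m/2-1$, where every off-diagonal index $r+2^j-2^k \ge r+2^{j-1}$ already lies in the recovered top half, so again each step pins down one $w_r$ and one outgoing carry exactly. Hence $H(W \mid \Yv,\Xvfix) \le H(\gamma^{\mathrm{mid}},\gamma^{\mathrm{lo}}) \le 2\log(C+1) = O(\log\log n)$, and so $H(\Yv\mid\Xvfix) = H(W) - H(W\mid\Yv,\Xvfix) \ge m\log q - O(\log\log n) \ge m\delta - O(\log\log n)$.

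For every level $j$ with $2^j$ above a fixed multiple of $\log\log n$ the last bound is at least $\tfrac14 m\delta = \tfrac18\delta\ell_v$, so those nodes are high-entropy. The remaining $O(\log\log\log n)$ lowest levels --- which, by the additive $w$ term in Lemma~\ref{lem:H-upper-old}, are also exactly the ones too small to influence the final cell-probe bound --- are handled directly along the same lines: for such a node $\gamma^{\mathrm{mid}}$ and $\gamma^{\mathrm{lo}}$ are functions of only $O(\log\log n)$ digit positions, and for a level-$0$ node the incoming carry is already forced by $\Xvfix$ (the free block contributes nothing below position $t_0+1$), giving the clean bijection $w_0 \mapsto A[t_0+1]$ and $H(\Yv\mid\Xvfix) = \log q \ge \delta = \tfrac12\delta\ell_v$; a short case analysis covers the intermediate small levels. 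This proves the lemma with an explicit $F$ as claimed, the carry bookkeeping of the third paragraph being the only ingredient beyond the convolution proof.
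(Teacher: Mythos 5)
Your proof is essentially correct, and it takes a genuinely different route from the paper. The paper picks $F = K_{q,n}$, the integer whose \emph{binary} expansion has 1s at power-of-two bit positions, and then invokes Lemma~1 of Paterson--Fischer--Meyer verbatim (for $q=2$) together with a scaling corollary to conclude that any output block $\Yv$ identifies $\Xv$ up to two candidates, giving $H(\Yv\mid\Xvfix) \geq \delta\ell_v/2 - 1$ with no further calculation. You instead take $F$ to have digit $1$ (in base $q$) at power-of-two digit positions --- for $q>2$ a different number --- and prove the entropy bound from scratch by explicit carry bookkeeping: the carryless linear map is unitriangular, the carries stay bounded by the number $C=O(\log n)$ of nonzero digits of $F$, and supplying two pivotal carries $\gamma^{\mathrm{lo}},\gamma^{\mathrm{mid}}$ lets you run the schoolbook recurrence forward over first the top then the bottom half of the window, pinning down each $w_r$ deterministically. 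The resulting bound $H(\Yv\mid\Xvfix)\geq m\delta - O(\log\log n)$ is slightly weaker per node than the paper's $m\delta-1$ but amply sufficient: both proofs only deliver a positive universal constant $k$ once $\delta\ell_v$ exceeds a fixed threshold, and the overall lower bound is in any case derived summing only over nodes with $\ell_v\geq\sqrt n$. Your two-carry decoding argument is a nice piece of self-contained reasoning that makes explicit why carries cost essentially nothing, whereas the paper hides that entirely inside the PFM citation; the one place you are hand-waving is the ``short case analysis'' for the intermediate levels with $2^j$ between a constant and $\Theta(\log\log n)$ --- for those, $\gamma^{\mathrm{lo}}$ genuinely depends on the low free digits and the bound $m\delta-O(\log\log n)$ can be vacuous when $\delta$ is small --- but the paper's own proof has the identical issue for tiny $\delta\ell_v$, and as noted neither matters for the theorem.
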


Finally, for the Hamming distance problem we show that there exists an $F$ and distribution for $U$ such that sufficiently many nodes are high-entropy nodes. The proof of the next lemma is rather involved and is given over the Sections~\ref{sec:ham} to~\ref{sec:proofvecsum}.

\begin{lem}
\label{lem:ham-H-lower}
    For the Hamming distance problem there exists a hard distribution with a fixed $F$ and random $U$ such that any node $v\in\calT$ for which $\ell_v\geq \sqrt{n}$ is a high-entropy node,
    where $h$ is a constant.
\end{lem}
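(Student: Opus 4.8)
The plan is to reduce the construction of a hard distribution to a self-contained combinatorial statement about sums of binary vectors, and then to assemble $F$ from the object that statement provides. Fix a parameter $m=\Theta(\sqrt n)$ (with a small enough implicit constant). The combinatorial statement --- the \emph{vector-sum lemma}, proved over Sections~\ref{sec:ham} to~\ref{sec:proofvecsum} --- asserts the existence of a set $V$ of length-$m$ binary vectors, with $|V|$ only polynomial in $m$, such that for \emph{every} sequence of subsets $V_1,\dots,V_m\subseteq V$, each at least a fixed constant fraction of $|V|$ in size, the number of distinct sums $\sum_{i=1}^m u_i$ with $u_i\in V_i$ is at least $m^{\Omega(m)}$. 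Its proof is probabilistic: a random constant-weight binary \emph{cyclic} code yields, with positive probability, a set $V$ with this property, the cyclic symmetry being precisely what makes a union bound over the exponentially many subset-selection patterns feasible.

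Given such a $V$, I would build $F$ as a concatenation of $\Theta(\sqrt n)$ gadgets, each of length $\Theta(\sqrt n)$, each obtained from $V$ by a short chain of transformations: from the cyclic code, to binary vectors with many distinct sums, to an intermediate string, and finally to a string over $[q]$. The construction is arranged so that for any node $v\in\calT$ with $\ell_v\geq\sqrt n$, the portion of the sliding window that the symbols $\Xv$ of the interval $[t_0,t_1]$ occupy during the output interval $[t_1+1,t_2]$ is built from (possibly truncated) gadget pieces; inside each such piece the symbols of $\Xv$ select one vector per slot from $V$ --- or, once the truncation and the fixed symbols of $\Xvknown$ are accounted for, from a large subset of $V$ --- and the outputs $\Yv$ read off, coordinate by coordinate as the window advances by one position per arrival, a fixed offset plus the coordinates of the sum of the vectors selected in that piece. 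Because the window advances once per arrival and $[t_1+1,t_2]$ has $\ell_v/2$ arrivals, all $m$ coordinates of every relevant sum are read off (here $\ell_v\geq\sqrt n$ is used, both to make $\Xv$ span a full gadget and to supply enough output slots), so $\Yv$ determines all of these sums, which by the vector-sum lemma range over $m^{\Omega(m)}$ values per piece.

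For the hard distribution I would take $U$ uniform over a product set $D$ of stream segments, chosen so that distinct elements of $D$ induce distinct output sequences: within each gadget, $D$ keeps one selection of vectors for each distinct attainable sum, so that $D$ restricted to a gadget has size $m^{\Omega(m)}$, and the ``subset'' form of the vector-sum lemma is exactly what lets one fixed $D$ serve every node $v$, the different $v$ merely truncating the gadgets and so restricting the available vectors in boundary slots to subsets of $V$. Since $D$ is a product and the gadget pieces interacting with $\Xv$ are disjoint from those pinned down by $\Xvknown$, fixing $\Xvknown=\Xvfix$ leaves $\Xv$ uniform over a set of size $(m^{\Omega(m)})^{g}$, where $g=\Theta(\ell_v/\sqrt n)$ is the number of gadget pieces meeting $[t_0,t_1]$; on that set, after also conditioning on the symbols $U[t_1+1,t_2]$ arriving during the output interval (which only lowers entropy), the map $\Xv\mapsto\Yv$ is injective. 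Therefore
\[
H(\Yv\mid\Xvknown=\Xvfix)\;\geq\; g\cdot\log\!\big(m^{\Omega(m)}\big)\;=\;\Omega\!\big(\tfrac{\ell_v}{\sqrt n}\cdot\sqrt n\log n\big)\;=\;\Omega(\ell_v\log n),
\]
which is $\Omega(\delta\,\ell_v)$ since $\delta\leq\lceil\log n\rceil$; when $q$ is too small for this $V$ to embed in the alphabet, one rescales $m$ downward in terms of $q$ and the same estimate still gives $\Omega(\delta\,\ell_v)$. By Definition~\ref{def:high-node}, $v$ is then a high-entropy node, with a constant $k$ depending only on the implicit constants above.

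The principal difficulty is twofold. First, one must actually carry out the construction of $F$ so that the sliding-window Hamming distances genuinely ``spell out'' the coordinates of the vector sums --- uniformly over \emph{all} nodes $v$ with $\ell_v\geq\sqrt n$, hence over all the truncated gadget pieces that can arise, and while keeping $|F|=n$ only polynomial in the alphabet; controlling the interference between neighbouring gadget pieces and the contribution of the padding positions of $F$ is where most of the work in Section~\ref{sec:ham} onwards lies. Second, and underpinning everything, is the vector-sum lemma itself: showing that a single random cyclic code works simultaneously for every large-subset selection needs a careful probabilistic argument, and it is the reason the proof must go through this somewhat indirect chain of transformations rather than exhibiting an explicit $F$ as was possible for convolution and multiplication.
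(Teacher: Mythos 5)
Your outline has the right ingredients --- a vector-sum lemma about constant-weight cyclic codes proved probabilistically, a fixed $F$ assembled from gadgets encoding the vectors, and $U$ drawn from a product set so that fixing $\Xvknown$ leaves $\Xv$ uniform and the output map injective --- and you correctly locate where the difficulty sits. But the sketch misstates the structure of $F$ in a way that is not merely deferred technical work. The paper's $F$ is not a dense concatenation of $\Theta(\sqrt n)$ gadgets: it is the sparse string derived from $K_n$, with a copy of an $r$-length gadget $R$ placed only at the $\Theta(\log n)$ positions $n-1-2^i$ (for $2^i\geq r$), and a non-stream symbol $\psymb$ (which always mismatches) everywhere else. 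This power-of-two sparsity, inherited directly from the convolution construction, is precisely what guarantees that for a node $v$ with $\ell_v\geq\sqrt n$, during the output interval $[t_1+1,t_2]$ the substream $\Xv$ slides past exactly one copy of $R$ and otherwise only meets $\psymb$ in $F$. With a dense $F$ as you describe, several gadgets would overlap $\Xv$ at once, their Hamming contributions would sum, and the injectivity of $\Xv\mapsto\Yv$ that your entropy count relies on would need a decoding argument you do not supply. Acknowledging that ``interference control is where the work lies'' is not the same as identifying the mechanism that removes the interference.

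Second, the parameters have an extra level that you have collapsed. In the paper the binary vectors have length $\mu=\sqrt q$ and $|V|=\mu(\mu-1)\approx q$ --- not length $\Theta(\sqrt n)$ --- because each vector is encoded as a $\mu$-length block of $R$ carrying one distinct alphabet symbol, so $|V|\leq q$ forces $\mu\leq\sqrt q\leq n^{1/6}$. The gadget $R$ has length $r=\mu^3=q^{3/2}\leq\sqrt n$, and one sliding pass of $R$ over a $2r$-window realises $\Theta(\mu^2)$ separate vector-sum instances (one per round of $\mu$ shifts, with a fraction of $V$ blocked in each round); it is the intermediate Lemma~\ref{lem:combinatorial}, bounding the number of distinct Hamming arrays of $R$ below by $r^{\Omega(r)}$, that aggregates these into the $\Theta(r\log q)$ bits contributed per block of $U$. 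Your escape clause of rescaling $m$ in terms of $q$ gestures at the alphabet constraint but does not acknowledge that the gadget must then be polynomially longer than a single vector and must stack many vector-sum rounds; that aggregation is what Sections~\ref{sec:ham} and~\ref{sec:R} carry out, and what your flattened statement of the construction leaves out.
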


In the proof of Lemma~\ref{lem:ham-H-lower} we demonstrate that there exists a very specific set of strings such that when $F$ is drawn randomly from this set, there is a non-zero probability of picking an $F$ for which many nodes are high-entropy nodes.
Unlike the convolution and multiplication problems, the distribution for $U$ is not uniform over of all strings $[q]^n$.

\subsection{Lower bounds on the information transfer}

In the previous section we gave a series of lemmas saying that for all three problems we consider, there are instances for which many nodes of $\calT$ are high-entropy nodes. In this section we combine these results with the entropy upper bound of Lemma~\ref{lem:H-upper-old} to show that many nodes have \lit.
The following lemmas match the lemmas of the previous section.
We start with the convolution problem.

\begin{lem}
\label{lem:conv-random-it}
For the convolution problem where both $F$ and $U$ are chosen uniformly at random from $[q]^n$, and $q$ is a prime, every $v\in\calT$ has \lit.
\end{lem}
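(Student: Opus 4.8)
The plan is to chain the two preceding lemmas. Lemma~\ref{lem:conv-rand-H-lower} tells us that, for a fixed node, a $(1-\tfrac{1}{q})$ fraction of the fixed arrays $F$ make it a high-entropy node, and Lemma~\ref{lem:H-upper-old} tells us that the conditional entropy of $\Yv$ is at most $w + 2w\,\expected{\vit\mid\Xvknown=\Xvfix}$. I would fix a node $v$, first keep $F$ fixed and lower-bound $\expected{\vit}$ for every ``good'' $F$, and only then average over the uniform choice of $F$.

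In detail: fix an internal node $v\in\calT$ and any $F$ for which $v$ is a high-entropy node. By Definition~\ref{def:high-node} we have $H(\Yv\mid\Xvknown=\Xvfix)\geq k\delta\ell_v$ for every $\Xvfix$ (with $k$ the high-entropy constant), so substituting into Lemma~\ref{lem:H-upper-old} and rearranging gives, for every $\Xvfix$,
\[
\expected{\vit\mid\Xvknown=\Xvfix}~\geq~\frac{k\delta\ell_v - w}{2w}.
\]
Averaging this over the distribution of $\Xvknown$ induced by $U$ yields $\expected{\vit\mid F}\geq \frac{k\delta\ell_v - w}{2w}$ for every good $F$, while $\expected{\vit\mid F}\geq 0$ for the remaining $F$. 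Since $F$ is uniform on $[q]^n$ and $q\geq 2$ (it is prime), Lemma~\ref{lem:conv-rand-H-lower} makes a $(1-\tfrac{1}{q})\geq\tfrac{1}{2}$ fraction of the $F$ good, so averaging over $F$ as well gives $\expected{\vit}\geq (1-\tfrac{1}{q})\cdot\frac{k\delta\ell_v-w}{2w}\geq \frac{k\delta\ell_v-w}{4w}$.

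The only remaining point is to absorb the additive $-w$ into the multiplicative constant. Whenever $\delta\ell_v\geq 2w/k$ the right-hand side above is at least $\frac{k}{8}\cdot\frac{\delta\ell_v}{w}$, which is exactly the bound of Definition~\ref{def:large-it} with constant $k/8$; the nodes failing this threshold occupy only the $O(\log(w/\delta))$ lowest levels of $\calT$ and contribute nothing essential when one sums $\expected{\vit}$ level by level for the final $\Omega(\tfrac{\delta}{w}\log n)$ amortised bound, so one can either shrink the constant in Definition~\ref{def:large-it} to accommodate them or simply restrict attention to the higher levels of the tree. I expect this last piece of bookkeeping with the $w$-bit slack coming from the encoding in Lemma~\ref{lem:H-upper-old} to be the only real obstacle; all the genuine content has already been packed into Lemma~\ref{lem:conv-rand-H-lower}, whose proof (via Toeplitz matrices over the finite field with $q$ elements) is carried out separately.
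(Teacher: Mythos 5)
Your proof is correct and follows essentially the same route as the paper's: substitute the high-entropy bound of Lemma~\ref{lem:conv-rand-H-lower} into the encoding bound of Lemma~\ref{lem:H-upper-old}, rearrange to lower-bound $\expected{\vit\mid\Xvknown=\Xvfix}$, average over $\Xvknown$, and then average over the uniform choice of $F$ using the fact that at least a $(1-1/q)\geq 1/2$ fraction of the $F$ are good. The one place you are more careful than the paper is the additive term coming from the $w$-bit slack in Lemma~\ref{lem:H-upper-old}: the paper ends with $\expected{\vit}\geq k\delta\ell_v/(4w)-1/4$ and simply asserts ``hence $v$ has \lit,'' whereas you correctly point out that this only yields \lit once $\delta\ell_v/w$ clears a constant threshold (which can fail near the bottom of $\calT$) and that one should either shrink the constant in Definition~\ref{def:large-it} or discard those levels --- the latter being consistent with how the lemma is actually used, since the summation in Section~\ref{sec:proofs} is already restricted to nodes with $\ell_v\geq\sqrt n$.
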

\begin{proof}
By combining Lemmas~\ref{lem:H-upper-old} and~\ref{lem:conv-rand-H-lower} we have that for any $v\in\calT$ under fixed $\Xvknown$, at least half of all $F\in [q]^n$ imply that $v$ is a high-entropy node, that is,
\[
k\cdot \delta \cdot\ell_v
~\leq~
w + 2w\cdot \expected{\vit \mid \Xvknown=\Xvfix},
\]
where $k$ is the constant from Definition~\ref{def:high-node} of a high-entropy node.
Rearranging terms gives
\[
\expected{\vit \mid \Xvknown=\Xvfix}
~\geq~
\frac{\delta \cdot\ell_v}{2k\cdot w} - \frac12.
\]
We remove the conditioning by taking expectation over $\Xvknown$ under a random $U$.
When $F$ is chosen uniformly at random from $[q]^n$ we therefore have
\[
\expected{\vit}
~\geq~
\frac{\delta \cdot\ell_v}{4k\cdot w} - \frac14,
\]
hence $v$ has \lit.
\end{proof}

Similarly to Lemma~\ref{lem:conv-random-it}, we combine Lemmas~\ref{lem:H-upper-old} and~\ref{lem:conv-H-lower} to obtain the following property for the case where $F$ is a fixed string and not randomly chosen.

\begin{lem}
  \label{lem:conv-it}
  For the convolution problem there exists a hard distribution where $F$ is fixed and $U$ is chosen uniformly at random from $[q]^n$, such that every $v\in\calT$ has \lit.
\end{lem}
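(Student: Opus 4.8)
The plan is to follow the template of the proof of Lemma~\ref{lem:conv-random-it} verbatim, simply replacing the "random $F$" that was fed into that argument with the \emph{explicit fixed} $F$ supplied by Lemma~\ref{lem:conv-H-lower}. Concretely, I would fix $F\in[q]^n$ to be the array of Lemma~\ref{lem:conv-H-lower} (zeroes everywhere except $1$'s at the power-of-two positions from the right end), and take $U$ uniform over $[q]^n$. By Lemma~\ref{lem:conv-H-lower}, with this choice every node $v\in\calT$ is a high-entropy node, so by Definition~\ref{def:high-node} there is a constant $k>0$ with $H(\Yv\mid\Xvknown=\Xvfix)\geq k\cdot\delta\cdot\ell_v$ for \emph{every} fixed value $\Xvfix$ of $\Xvknown$.

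Next I would combine this with the entropy upper bound of Lemma~\ref{lem:H-upper-old}, namely $H(\Yv\mid\Xvknown=\Xvfix)\leq w + 2w\cdot\expected{\vit\mid\Xvknown=\Xvfix}$. Chaining the two inequalities and rearranging gives, for every $\Xvfix$,
\[
\expected{\vit\mid\Xvknown=\Xvfix}~\geq~\frac{k\cdot\delta\cdot\ell_v}{2w}-\frac12 .
\]
Because this holds uniformly over all fixings of $\Xvknown$, I remove the conditioning by taking expectation over $\Xvknown$ distributed according to the marginal of the uniform distribution on $U$, obtaining the same bound for the unconditional $\expected{\vit}$. This is of the form $\expected{\vit}\geq\frac{k'\cdot\delta\cdot\ell_v}{w}$ for a suitable constant $k'$ (absorbing the additive $\tfrac12$ into the constant exactly as in Lemma~\ref{lem:conv-random-it}), so $v$ has \lit in the sense of Definition~\ref{def:large-it}, which is what we wanted.

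I do not expect a genuine obstacle here: the substantive work is already done in Lemma~\ref{lem:conv-H-lower} (existence of a good fixed $F$, via Toeplitz-matrix / finite-field arguments) and in Lemma~\ref{lem:H-upper-old} (the encoding argument). The only points needing a little care are (i) that Lemma~\ref{lem:conv-H-lower} gives the entropy bound for \emph{every} fixed $\Xvfix$, which is precisely what licenses passing from the conditional bound of Lemma~\ref{lem:H-upper-old} to an unconditional bound on $\expected{\vit}$ by averaging over $\Xvknown$, and (ii) the bookkeeping of the additive constant so that the final inequality matches the form in Definition~\ref{def:large-it}. Both are handled exactly as in the proof of Lemma~\ref{lem:conv-random-it}, with the random-$F$ averaging step simply omitted.
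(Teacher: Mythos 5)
Your proof matches the paper's own argument essentially verbatim: both combine Lemma~\ref{lem:conv-H-lower} (high-entropy for every node under the explicit fixed $F$) with the encoding bound of Lemma~\ref{lem:H-upper-old}, rearrange to lower-bound $\expected{\vit\mid\Xvknown=\Xvfix}$, and then remove the conditioning by averaging over $\Xvknown$. (Your rearranged inequality $\expected{\vit\mid\Xvknown=\Xvfix}\geq\frac{k\delta\ell_v}{2w}-\frac12$ is in fact the algebraically correct form; the paper's displayed version places $k$ in the denominator, which appears to be a typo.)
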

\begin{proof}
  Similarly to the proof of Lemma~\ref{lem:conv-random-it} we combine Lemmas~\ref{lem:H-upper-old} and~\ref{lem:conv-H-lower} to obtain, for all $v\in\calT$ under fixed $\Xvknown$,
  \[
    \expected{\vit \mid \Xvknown=\Xvfix}
    ~\geq~
    \frac{\delta \cdot\ell_v}{2k\cdot w} - \frac12,
    \]
  where $k$ is the constant from Definition~\ref{def:high-node} of a high-entropy node.
  The conditioning is removed by taking expectation over $\Xvknown$ under a random $U$.
\end{proof}

The proofs of the following two lemmas, in which we establish large information transfer for the multiplication problem, are similar to the proofs of the previous two lemmas, only that we here combine Lemma~\ref{lem:H-upper-old} with Lemmas~\ref{lem:mult-rand-H-lower} and~\ref{lem:mult-H-lower}, respectively.

\begin{lem}
\label{lem:mult-random-it}
For the online multiplication problem where both operands are chosen uniformly at random from $[q^n]$, every $v\in\calT$ has \lit.
\end{lem}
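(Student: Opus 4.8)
The plan is to follow the proof of Lemma~\ref{lem:conv-random-it} essentially verbatim, replacing the appeal to Lemma~\ref{lem:conv-rand-H-lower} by the corresponding statement for multiplication, Lemma~\ref{lem:mult-rand-H-lower}, and combining it with the entropy upper bound of Lemma~\ref{lem:H-upper-old}. Note that, unlike the convolution case, no primality assumption on $q$ is needed here, since Lemma~\ref{lem:mult-rand-H-lower} holds for every $q$.

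Concretely, I would fix an arbitrary node $v\in\calT$ and an arbitrary fixed value $\Xvfix$ of $\Xvknown$. By Lemma~\ref{lem:mult-rand-H-lower}, at least half of the operands $F\in[q^n]$ make $v$ a high-entropy node, which by Definition~\ref{def:high-node} means that for each such $F$ and for \emph{every} choice of $\Xvfix$ we have $H(\Yv\mid\Xvknown=\Xvfix)\geq k\cdot\delta\cdot\ell_v$. Feeding this lower bound into Lemma~\ref{lem:H-upper-old} gives, for each such $F$ and each $\Xvfix$,
\[
k\cdot\delta\cdot\ell_v ~\leq~ w + 2w\cdot\expected{\vit\mid\Xvknown=\Xvfix},
\]
where the expectation is over the random digits of $U$ in $[t_0,t_1]$ conditioned on $\Xvknown=\Xvfix$, with $F$ held fixed. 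Rearranging yields $\expected{\vit\mid\Xvknown=\Xvfix}\geq \frac{\delta\cdot\ell_v}{2kw}-\frac12$ for each of these $F$.

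It remains to remove the conditioning and average over $F$. Since the displayed bound holds for every fixed $\Xvfix$, taking expectation over $\Xvknown$ under a random $U$ (with $F$ still fixed) preserves it, so $\expected{\vit\mid F}\geq \frac{\delta\cdot\ell_v}{2kw}-\frac12$ for at least half of all $F\in[q^n]$. Averaging over a uniformly random $F$ and using $\vit\geq 0$, the remaining ("bad") half of the $F$'s contributes nonnegatively, whence
\[
\expected{\vit} ~\geq~ \tfrac12\left(\frac{\delta\cdot\ell_v}{2kw}-\frac12\right) ~=~ \frac{\delta\cdot\ell_v}{4kw}-\frac14,
\]
so that, after renaming the constant (the additive $-\tfrac14$ being harmless for Definition~\ref{def:large-it}), $v$ has \lit. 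Since all of the real content sits in the cited lemmas, there is no genuine obstacle in this step; the only point requiring care is the order of the expectations — one must condition on $F$ before averaging over $U$ and over $F$, because being a high-entropy node is a property of the pair $(F,v)$ that holds uniformly in $\Xvfix$, so interchanging the averaging over $F$ with the conditioning would conflate the "good" set of $F$'s with the conditioning value and break the bound.
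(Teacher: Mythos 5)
Your proof is correct and follows exactly the same route the paper takes for this lemma (which the paper dispatches in one line by saying it is proved "similarly" to Lemma~\ref{lem:conv-random-it}): combine Lemma~\ref{lem:mult-rand-H-lower} with Lemma~\ref{lem:H-upper-old}, rearrange, remove the conditioning on $\Xvknown$, and average over $F$ using the fact that at least half the operands are good and $\vit\geq 0$ for the rest. The only small imprecision is the aside that "no primality assumption on $q$ is needed... since Lemma~\ref{lem:mult-rand-H-lower} holds for every $q$"; the paper's proof of that lemma in fact goes through the Paterson--Fischer--Meyer corollary and so implicitly requires $q$ to be a power of two, but this has no bearing on the correctness of the step you are carrying out.
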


\begin{lem}
  \label{lem:mult-it}
  For the online multiplication problem there exists a fixed operand in $[q^n]$ such that when the other operand is chosen uniformly at random from $[q^n]$, every $v\in\calT$ has \lit.
\end{lem}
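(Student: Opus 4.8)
The plan is to mirror the proof of Lemma~\ref{lem:conv-it} line for line, swapping in the multiplication-specific ingredient. The two facts I would combine are the generic entropy-to-information-transfer bound of Lemma~\ref{lem:H-upper-old} and the multiplication high-entropy statement of Lemma~\ref{lem:mult-H-lower}, which asserts the existence of a \emph{single} fixed operand $F\in[q^n]$ such that, with $U$ drawn uniformly from $[q^n]$, every node $v$ of $\calT$ is a high-entropy node. So the only genuinely new content needed for this lemma lives in Lemma~\ref{lem:mult-H-lower}, proved separately in Section~\ref{sec:mult}; here everything is a routine rearrangement.

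Concretely, I would fix the operand $F$ furnished by Lemma~\ref{lem:mult-H-lower}, fix an arbitrary node $v\in\calT$ and an arbitrary value $\Xvfix$ of $\Xvknown$. By Definition~\ref{def:high-node} applied to this $F$, we have $H(\Yv\mid\Xvknown=\Xvfix)\geq k\cdot\delta\cdot\ell_v$ for the problem-dependent constant $k$. Substituting this into the bound $H(\Yv\mid\Xvknown=\Xvfix)\leq w+2w\cdot\expected{\vit\mid\Xvknown=\Xvfix}$ of Lemma~\ref{lem:H-upper-old} and rearranging gives
\[
\expected{\vit\mid\Xvknown=\Xvfix}\;\geq\;\frac{\delta\cdot\ell_v}{2k\cdot w}-\frac12 .
\]
Since this holds for \emph{every} value $\Xvfix$, I would then remove the conditioning by taking expectation over $\Xvknown$ with $U$ drawn uniformly from $[q^n]$, obtaining $\expected{\vit}\geq \frac{\delta\cdot\ell_v}{2k\cdot w}-\frac12$, which is precisely the statement that $v$ has large expected information transfer in the sense of Definition~\ref{def:large-it}, after adjusting the constant exactly as in Lemma~\ref{lem:conv-it}. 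As $v$ was arbitrary, every node of $\calT$ has \lit.

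There is no real obstacle in this argument itself: the one point needing a word of care is the removal of the conditioning, and this is painless here because the bound on $\expected{\vit\mid\Xvknown=\Xvfix}$ is uniform in $\Xvfix$, so we may simply average; note also that, unlike the ``random $F$'' variants (Lemmas~\ref{lem:conv-random-it} and~\ref{lem:mult-random-it}), there is no extra factor lost from $F$ being bad on a fraction of inputs, since here $F$ is one fixed good operand. The genuine difficulty is entirely deferred to Lemma~\ref{lem:mult-H-lower}, where the explicit hard $F$ (the integer analogue of the convolution array: nonzero digits only at power-of-two positions from the low-order end) must be shown to force high conditional entropy of the product digits; given that lemma, the present proof is complete.
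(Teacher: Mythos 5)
Your proof is correct and matches the paper's approach exactly: the paper itself omits a separate proof for this lemma, stating only that it is obtained by combining Lemma~\ref{lem:H-upper-old} with Lemma~\ref{lem:mult-H-lower} in the same way Lemma~\ref{lem:conv-it} combines Lemma~\ref{lem:H-upper-old} with Lemma~\ref{lem:conv-H-lower}, and your rearrangement and averaging over $\Xvknown$ reproduce that argument faithfully. One small aside: the explicit $F=K_{q,n}$ in Lemma~\ref{lem:mult-H-lower} has 1s at power-of-two positions in its \emph{binary} expansion (so its base-$q$ digits are not merely 0s and 1s when $q>2$), but since you correctly defer that construction to Lemma~\ref{lem:mult-H-lower} this does not affect your proof.
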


Finally, large information transfer is also established for the Hamming distance problem. The proof of the next lemma is identical to the proof of Lemma~\ref{lem:conv-it}, only that we
combine Lemma~\ref{lem:H-upper-old} with Lemma~\ref{lem:ham-H-lower} instead, and
restrict the nodes $v$ to those for which $\ell_v\geq \sqrt{n}$.

\begin{lem}
\label{lem:ham-it}
There exists a hard distribution for the Hamming distance problem such that every $v\in\calT$ for which $\ell_v\geq \sqrt{n}$ has \lit.
\end{lem}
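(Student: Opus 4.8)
The plan is to follow exactly the template established by the proof of Lemma~\ref{lem:conv-it}, substituting Lemma~\ref{lem:ham-H-lower} for Lemma~\ref{lem:conv-H-lower} and adding the side condition $\ell_v \geq \sqrt{n}$ throughout. First I would fix the hard distribution (the fixed string $F$ together with the random distribution on $U$) whose existence is guaranteed by Lemma~\ref{lem:ham-H-lower}, and fix an arbitrary node $v \in \calT$ with $\ell_v \geq \sqrt{n}$. For any fixed value $\Xvfix$ of $\Xvknown$, Lemma~\ref{lem:ham-H-lower} tells us that $v$ is a high-entropy node, so by Definition~\ref{def:high-node} there is a positive constant $k$ with $H(\Yv \mid \Xvknown = \Xvfix) \geq k\cdot\delta\cdot\ell_v$. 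Applying the entropy upper bound of Lemma~\ref{lem:H-upper-old},
\[
k\cdot\delta\cdot\ell_v ~\leq~ w + 2w\cdot\expected{\vit \mid \Xvknown = \Xvfix},
\]
and rearranging gives
\[
\expected{\vit \mid \Xvknown = \Xvfix} ~\geq~ \frac{\delta\cdot\ell_v}{2k\cdot w} - \frac12.
\]

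Next I would remove the conditioning exactly as in Lemma~\ref{lem:conv-it}: since the inequality above holds for every fixed $\Xvfix$, taking expectation over $\Xvknown$ with $U$ drawn from the hard distribution preserves it, yielding
\[
\expected{\vit} ~\geq~ \frac{\delta\cdot\ell_v}{2k\cdot w} - \frac12.
\]
This is of the form required by Definition~\ref{def:large-it} (absorbing the $-\tfrac12$ into the constant, which is harmless since $\delta\cdot\ell_v/w$ is at least a constant when $\ell_v \geq \sqrt n$ and $w$ is polynomially bounded — or more simply, as in Lemma~\ref{lem:conv-random-it}, replacing $k$ by a slightly smaller constant), so $v$ has \lit. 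Since $v$ was an arbitrary node with $\ell_v \geq \sqrt n$, every such node has \lit, and the lemma follows.

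There is essentially no obstacle here: all the difficulty has been isolated into Lemma~\ref{lem:ham-H-lower}, whose proof occupies Sections~\ref{sec:ham} to~\ref{sec:proofvecsum}. The only point requiring the smallest amount of care is that the high-entropy guarantee of Lemma~\ref{lem:ham-H-lower} is restricted to nodes with $\ell_v \geq \sqrt n$ — unlike the convolution case where it holds for all nodes — so the conclusion here is correspondingly restricted to those nodes, which is precisely what the statement of Lemma~\ref{lem:ham-it} asserts. The subtraction of the additive constant $\tfrac12$ is absorbed in the same way as in the convolution proofs and does not affect the asymptotic bound.
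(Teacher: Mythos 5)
Your proposal is correct and follows exactly the route the paper takes: the paper's entire proof of Lemma~\ref{lem:ham-it} is a one-sentence remark that it is identical to the proof of Lemma~\ref{lem:conv-it} except for substituting Lemma~\ref{lem:ham-H-lower} for Lemma~\ref{lem:conv-H-lower} and restricting to nodes with $\ell_v\geq\sqrt{n}$, which is precisely what you spell out. (One cosmetic point, inherited from the paper's own typo in Lemma~\ref{lem:conv-random-it}: after rearranging, the bound should read $\frac{k\cdot\delta\cdot\ell_v}{2w}-\frac12$ rather than $\frac{\delta\cdot\ell_v}{2k\cdot w}-\frac12$, but this has no bearing on the argument since $k$ is an arbitrary positive constant.)
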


\subsection{Obtaining the cell-probe lower bounds}

Now that we have established \lit for sufficiently many nodes of~$\calT$ we are ready to prove the lower bounds of Theorems~\ref{thm:conv}, \ref{thm:mult} and~\ref{thm:ham}.

For both the convolution and multiplication problems, \lit has been established for every node $v$ of $\calT$, whereas for the Hamming distance problem, \lit has only been established where $\ell_v\geq \sqrt{n}$.
In order to unify the presentation of the proofs we restrict the summation of $\vit$ to nodes for which $\ell_v\geq \sqrt{n}$.
Let $V$ denote this set of nodes.
We have
\begin{equation}
  \label{eq:total-it}
  \mathbb{E}\left[\sum_{v\in\calT} \vit\right]
  ~\geq~
  \mathbb{E}\left[\sum_{v\in V} \vit\right]
  ~=~
  \sum_{v\in V} \mathbb{E}[\vit]
  ~\geq~
  \sum_{v\in V} \frac{k\cdot\delta\cdot\ell_v}{w}
  ~=~
  \frac{k'\cdot\delta\cdot n\cdot \log n}{w},
\end{equation}
where $k$ is the constant from Definition~\ref{def:large-it} of \lit and $k'$ is a new suitable constant.
The first equality follows by linearity of expectation and the second inequality follows by Lemmas~\ref{lem:conv-random-it} to~\ref{lem:ham-it}, respectively.
The last equality follows from the fact that
\[
\sum_{\substack{v\in T\\ \ell_v \geqslant \sqrt{n}}} \ell_v ~\in~ \Theta(n\log n).
\]

Since the running time is bounded by the number of cell probes we have from Equation~(\ref{eq:total-it}) that the expected running time for any deterministic algorithm solving the convolution, multiplication or Hamming distance problem, respectively, on $n$ random inputs is
\[
  \Omega\left(\frac{\delta\cdot n\cdot \log n}{w}\right).
\]
By Yao's minimax principle, as discussed in Section~\ref{sec:preliminaries}, this implies that any randomised algorithm on its worst case input has the same lower bound on its expected running time.
The amortised time per arriving value is obtained by dividing the running time by $n$.
This concludes the proofs of Theorems~\ref{thm:conv}, \ref{thm:mult} and~\ref{thm:ham}.

\section{Hard distributions for the convolution problem} \label{sec:conv}

In this section we prove Lemmas~\ref{lem:conv-rand-H-lower} and~\ref{lem:conv-H-lower}, that is we show that there are instances to the convolution problem such that the conditional entropy of the outputs $\Yv$ is large, where all inputs but $\Xv$ are fixed.

We begin by proving Lemma~\ref{lem:conv-rand-H-lower} because the proof is straightforward and the description of the hard distribution is simple: pick the inputs $U$ uniformly at random from $[q]^n$. As to the choice of $F$ we only argue that a large fraction of all $n$-length arrays have the desired entropy lower bound. In Section~\ref{sec:conv-fixed-F} we will specify a particular $F$ with this property, which will lead to a proof of Lemma~\ref{lem:conv-H-lower}.

\subsection{Entropy lower bound over all arrays $F$} \label{sec:conv-random-F}

We now prove Lemma~\ref{lem:conv-rand-H-lower}.
Let $v$ be any internal node of $\calT$ and let $t_v\in[n]$ denote the arrival time of $\Xv[0]$.
Let $\ell=\ell_v/2$.
For $i\in [\ell]$, the $i$-th output in $\Yv$ can be broken into two sums $\calA_i$ and $\widetilde{\calA}_i$, such that $\Yv[i] = \calA_i + \widetilde{\calA}_i$, where
\begin{equation*}
    \calA_i = \sum_{j\in[\ell]} \big(F[n-1-(\ell+i)+j]\cdot U_v[j]\big)
\end{equation*}
is the contribution from the alignment of $F$ with $U_v$, and $\widetilde{\calA}_i$ is the contribution from the alignments that do not include $U_v$. Hence $\widetilde{\calA}_i$ is constant under fixed $\Xvknown$.
We define $\MFL$ to be the $\ell$$\times$$\ell$ matrix with entries $\MFL(i,j)= F[n-1-(\ell+i)+j]$. That is,
\begin{equation*}
    \MFL =
        \begin{pmatrix}
            F[n-\ell-1] & F[n-\ell+0] & F[n-\ell+1] & \cdots & F[n-2] \\
            F[n-\ell-2] & F[n-\ell-1] & F[n-\ell+0] & \cdots & F[n-3] \\
            F[n-\ell-3] & F[n-\ell-2] & F[n-\ell-1] & \cdots & F[n-4] \\
            \vdots  & \vdots & \vdots & \ddots & \vdots  \\
            F[n-2\ell] & F[n-2\ell+1] & F[n-2\ell+2] & \cdots & F[n-\ell-1]
        \end{pmatrix}.
\end{equation*}
Observe that $\MFL$ is a \emph{Toeplitz} matrix (or ``upside down'' \emph{Hankel} matrix) since it is constant on each descending diagonal from left to right.
It follows that
\begin{equation}
    \label{eq:matrix}
    \MFL\times
        \begin{pmatrix}
            U_v[0] \\
            U_v[1] \\
            \vdots \\
            U_v[\ell-1]
        \end{pmatrix}
        =
        \begin{pmatrix}
            \calA_0 \\
            \calA_1 \\
            \vdots \\
            \calA_{\ell-1}
        \end{pmatrix}
\end{equation}
which describes a system of linear equations. Since outputs are given modulo $q$, where $q$ is assumed to be a prime, we operate in the finite field $\mathbb{Z}/q\mathbb{Z}$.
It has been shown in~\cite{KL1996:Toeplitz} that for any $\ell$, out of all the $\ell$$\times$$\ell$ Toeplitz matrices over a finite field of $q$ elements, a fraction of exactly $(1-1/q)$ is non-singular.
This fact was actually already established in~\cite{Day1960:Matrices} almost 40 years earlier but incidentally reproved in~\cite{KL1996:Toeplitz}.
Thus,
a $(1-1/q)$-fraction of all $F$ has the property that all the $\ell$ inputs in $\Xv$ can be uniquely determined from the outputs in $\Yv$.
Since the induced distribution for $\Xv$ under any fixed $\Xvknown$ is the uniform distribution on $[q]^\ell$, the conditional entropy
\[
  H(\Yv\mid\Xvknown=\Xvfix)
  ~=~
  \ell \cdot \log_2 q
  ~\geq~
  \frac{\delta \cdot \ell_v}{2},
\]
where $\delta=\lfloor\log_2 q\rfloor$.
This concludes the proof of Lemma~\ref{lem:conv-rand-H-lower}.

\subsection{Entropy lower bound with a fixed array $F$} \label{sec:conv-fixed-F}

We now prove Lemma~\ref{lem:conv-H-lower} by demonstrating that it is possible to design a fixed array $F$ such that for all nodes $v\in\calT$, a large portion of the values in $\Xv$ can be uniquely determined from the outputs $\Yv$. Since $U$ is drawn uniformly at random from $[q]^n$, this implies large entropy of the outputs $\Yv$.

The fixed array $F$ that we consider consists of stretches of~0s interspersed by~1s. The distance between two succeeding 1s is an increasing power of two, ensuring that for half of the alignments of $F$ and $S$ in the arrival interval where $\Yv$ is outputted, all but exactly one element of $U_v$ are simultaneously aligned with a 0 in $F$, hence not contributing to the outputted inner product of $F$ and $S$. We define $K_n\in[2]^n$ such that
\begin{equation*}
    K_n[0],K_n[1],\dots,K_n[n-1]\;=\;\dots000000000{\bf 1}000000000000000{\bf 1}0000000{\bf 1}000{\bf 1}0{\bf 11}0,
\end{equation*}
where commas between elements on the right hand side have been omitted, or formally,
\begin{equation*}
    K_n[i] =
    \begin{cases}
        1, &\text{if $n-1-i$ is a power of two;}\\
        0, &\text{otherwise.}
    \end{cases}
\end{equation*}
The hard distribution for Lemma~\ref{lem:conv-H-lower} is $F=K_n$ and the inputs $U$ drawn uniformly at random from $[q]^n$.

Let $v$ be any node of $\calT$ and consider Figure~\ref{fig:conv} which illustrates three alignments of $F$ and $S$, denoted \alignment{1}, \alignment{2} and~\alignment{3}, respectively.
\begin{figure*}[t]
    \centering
    \insertdiagram{convolution-sliding}
    \caption{\label{fig:conv}Three alignments of $F=K_n$ and the stream $S$:
    \alignment{1}~the last value of $\Xv$ has just arrived, \alignment{2}~half of the outputs in $\Yv$ have been outputted, and \alignment{3}~all outputs in $\Yv$ have been outputted.}
\end{figure*}
At alignment~\alignment{1}, the last value of $\Xv$ has just arrived in the stream. At alignment~\alignment{2}, half of the outputs in $\Yv$ have been outputted.
At alignment~\alignment{3}, all outputs in $\Yv$ have been outputted.
The key observation is that between alignment~\alignment{2} and~\alignment{3}, exactly one input $x$ of $\Xv$ is aligned with a~1 in~$F$, hence $x$ can be uniquely determined from the corresponding output. Thus, over all outputs $\Yv$, a total of $\ell_v/4$ values of $\Xv$ can be determined, implying that the entropy of $\Yv$ must be at least $\delta\cdot\ell_v/4$, where $\delta=\lfloor\log_2 q\rfloor$.
We now formalise this reasoning.

Using the definition of $\ell=\ell_v/2$ and the matrix $\MFL$ above, recall that entry
$\MFL(i,j)= F[n-1-(\ell+i)+j]$.
Thus, $\MFL(i,j)=1$ if and only if
\[
  n-1-\big(n-1-(\ell+i)+j\big)~=~\ell+i-j
\]
is a power of two.
Since $\ell$ is a power of two it follows that for row $i\in \{\ell/2,\dots,\ell-1\}$ there can be at most one entry with the value~1. More precisely,
\begin{equation*}
  \MFL(i,j) =
  \begin{cases}
    1 &\textup{if $j=i$,}\\
    0 &\textup{otherwise.}
  \end{cases}
\end{equation*}
From the system of linear equations in Equation~(\ref{eq:matrix}) it follows that for $i\in \{\ell/2,\dots,\ell-1\}$,
$\calA_i = \Xv[i]$.
Since the induced distribution for $\Xv$ under any fixed $\Xvknown$ is the uniform distribution on $[q]^\ell$, the conditional entropy
\[
  H(\Yv\mid\Xvknown=\Xvfix)
  ~=~
  \frac{\ell}{2} \cdot \log_2 q
  ~\geq~
  \frac{\delta \cdot \ell_v}{4},
\]
where $\delta=\lfloor\log_2 q\rfloor$.
This concludes the proof of Lemma~\ref{lem:conv-H-lower}.

\section{Hard distributions for the multiplication problem} \label{sec:mult}

In this section we prove Lemmas~\ref{lem:mult-rand-H-lower} and~\ref{lem:mult-H-lower}, that is we show that there are instances of the online multiplication problem such that the conditional entropy of the outputs $\Yv$ is large, where all inputs but $\Xv$ are fixed.
For the purposes of proving a lower bound we assume that all digits of the operand $F$ are available at any time whereas the digits of the operand $U$ arrive one at a time.
Figure~\ref{fig:mult} illustrates $U\times F$, where $U[0]$ and $F[0]$ are the least significant digits and the product $A$ is capped at $n$ digits.

\begin{figure*}[t]
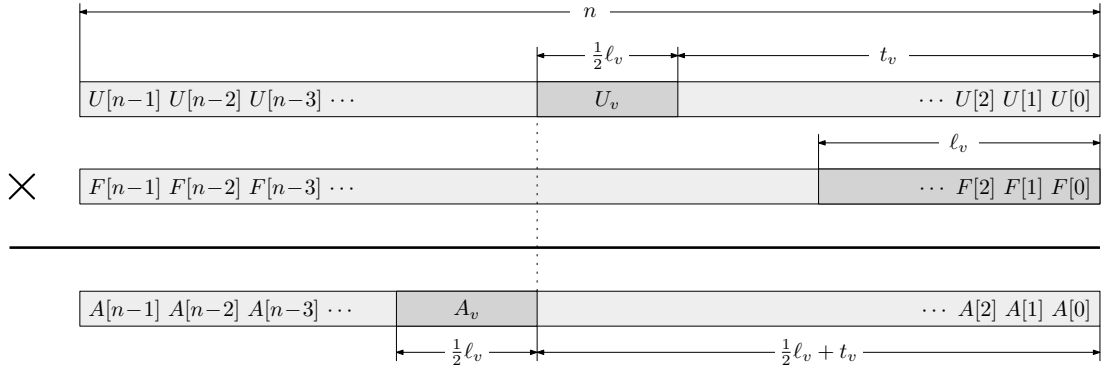

    \centering
    \insertdiagram{multiplication}
    \caption{\label{fig:mult}An illustration of $A=U\times F$. Digits of $U$ arrive one at a time, where $U[0]$ is the low-order digit that arrives first.}
\end{figure*}

The following property of multiplying binary numbers was established by Paterson, Fischer and Meyer~\cite{PFM:1974}. The lemma is stated in our notation, but the translation from the original notation of~\cite{PFM:1974} is straightforward.

\begin{lem}[Corollary of Lemma~5 in~\cite{PFM:1974}]
    \label{lem:PFMlemma5}
        Suppose $q=2$. Let $v$ be any node of $\calT$ and fix the digits of $\Xvknown$ arbitrarily.
        At least half of all $F[0,\ell_v-1]\in [q]^{\ell_v}$ (first $\ell_v$ digits of $F$) have the property that
        any value of $A_v$ can arise from at most four distinct~$\Xv$.
\end{lem}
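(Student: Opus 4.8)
The statement is precisely Lemma~5 of~\cite{PFM:1974} in the present notation, so one legitimate option is to quote it after a notational translation; the plan I would follow instead is a short self-contained argument, since the translation is essentially as much work. Write $\ell=\ell_v/2$ and let $\phi\in[2^{\ell_v}]$ be the integer whose binary digits are $F[0],\dots,F[\ell_v-1]$. Fix $\Xvknown$ arbitrarily and let $X\in[2^{\ell}]$ be the integer value of $\Xv=U[t_0,t_1]$. The first step is to express the right-interval outputs as a simple function of $X$: writing $U=c_0+2^{t_0}X+2^{t_1+1}Y+2^{t_2+1}c_1$ with $c_0,Y,c_1$ the fixed parts of $U$ and multiplying by $F$, the term $2^{t_2+1}Fc_1$ vanishes modulo $2^{t_2+1}$ and hence does not influence digits $t_1+1,\dots,t_2$ of the product; peeling off the fixed low-order digits of $Fc_0$ (which cannot carry past position $t_0$) and the contribution $2^{t_1+1}FY$ as an additive shift, one obtains
\[
  A_v \;\equiv\; \bigl(\text{digits }\ell,\dots,2\ell-1\text{ of the integer }c+\phi X\bigr)+\gamma \pmod{2^{\ell}},
\]
where $c\in[2^{\ell_v}]$ and $\gamma\in[2^{\ell}]$ are determined by $\Xvknown$ and only $F[0],\dots,F[2\ell-1]$ enter. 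So it suffices to show: for at least half of all $\phi\in[2^{\ell_v}]$, and for every $c$, the map $X\mapsto(\text{digits }\ell,\dots,2\ell-1\text{ of }c+\phi X)$ on $X\in[2^{\ell}]$ is at most $4$-to-$1$.

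I would then read this geometrically. For fixed $\phi,c$ and target value $a$, the preimage of $a$ is $\{X\in[2^{\ell}]:\phi X\bmod 2^{\ell_v}\in I_a\}$, where $I_a$ is a half-open cyclic arc of length $2^{\ell}$ in $\mathbb{Z}/2^{\ell_v}$, and varying $c$ only translates the arc. Thus the map is at most $4$-to-$1$ for all $c$ simultaneously precisely when no length-$2^{\ell}$ arc contains five of the points $P_\phi=\{\phi X\bmod 2^{\ell_v}:X\in[2^{\ell}]\}$. Call $\phi$ \emph{good} if $\lVert\phi d\rVert\ge 2^{\ell-2}$ for every $d\in\{1,\dots,2^{\ell}-1\}$, where $\lVert m\rVert$ is the distance from $m$ to $0$ in $\mathbb{Z}/2^{\ell_v}$. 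For good $\phi$ any two distinct points of $P_\phi$ are at distance $\ge 2^{\ell-2}$, so a half-open arc of length $2^{\ell}$ contains at most $4$ of them (five points would span at least $4\cdot 2^{\ell-2}=2^{\ell}$, too much for such an arc); hence every good $\phi$ works. If $\ell_v\le 4$ then $[2^{\ell}]$ has at most $4$ elements and the lemma is trivial for every $F$, so assume $\ell_v\ge 8$.

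It remains to count the good $\phi$. Fix $d=2^{s}d'$ with $d'$ odd. Since $d'$ is invertible modulo $2^{\ell_v}$, the map $\phi\mapsto\phi d'\bmod 2^{\ell_v-s}$ is uniformly $2^{s}$-to-$1$, and $\phi d\bmod 2^{\ell_v}=2^{s}\,(\phi d'\bmod 2^{\ell_v-s})$; hence the number of $\phi\in[2^{\ell_v}]$ with $\lVert\phi d\rVert<2^{\ell-2}$ equals $2^{s}$ times the number of $y\in[2^{\ell_v-s}]$ within distance $<2^{\ell-2-s}$ of $0$, which is $2^{\ell-1}-2^{s}$ for $s\le\ell-2$ and $2^{\ell-1}$ for the single remaining $d=2^{\ell-1}$. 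Grouping the $d\in\{1,\dots,2^{\ell}-1\}$ by $s=v_2(d)$ (there are $2^{\ell-1-s}$ of each) and summing gives
\[
  \#\{\phi\in[2^{\ell_v}]:\phi\text{ not good}\}\;\le\;\sum_{s=0}^{\ell-2}2^{\ell-1-s}\bigl(2^{\ell-1}-2^{s}\bigr)+2^{\ell-1}\;=\;2^{2\ell-1}-\ell\,2^{\ell-1}\;<\;\tfrac12\,2^{\ell_v},
\]
so strictly more than half of all $\phi$ are good, which finishes the proof.

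The delicate part is exactly this last count: the union bound is only barely strong enough, and it works only if one uses the \emph{closed} threshold $\lVert\phi d\rVert\ge 2^{\ell-2}$ (which still pins any length-$2^{\ell}$ arc to at most $4$ points) and carefully accounts for the contribution of the non-odd $d$ through their $2$-adic valuation; a cruder threshold yields a constant multiple of $2^{\ell_v}$ bad values rather than fewer than half. A second thing to keep straight is that the reduction of the first paragraph must hold uniformly in $\Xvknown$, i.e.\ for every constant $c\in[2^{\ell_v}]$, which is precisely why ``goodness'' is a property of $\phi=F[0],\dots,F[\ell_v-1]$ alone, with no reference to $c$.
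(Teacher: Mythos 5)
Your proof is correct. The paper itself does not prove this lemma; it simply cites Lemma~5 of Paterson--Fischer--Meyer~\cite{PFM:1974} and remarks that the translation to the present notation is routine. You therefore supply a complete argument where the paper has only a citation, which is a genuinely different route (even if, as you acknowledge, it amounts to re-deriving PFM's own lemma). The argument checks out: the carry bookkeeping in the first paragraph correctly isolates the $X$-dependent contribution to $A_v$ as digits $\ell,\dots,2\ell-1$ of $c+\phi X$, up to an additive shift $\gamma$; the equivalence, uniformly over $c$, to the condition that no half-open arc of length $2^{\ell}$ in $\mathbb{Z}/2^{\ell_v}$ capture five points of $\{\phi X\bmod 2^{\ell_v}:X\in[2^{\ell}]\}$ is right; the ``good'' criterion $\lVert\phi d\rVert\geq 2^{\ell-2}$ for $1\leq d<2^{\ell}$ does force distinctness of those $2^{\ell}$ points and caps each such arc at four of them; and the union-bound count, grouped by $v_2(d)$, evaluates to exactly $2^{2\ell-1}-\ell\,2^{\ell-1}<\tfrac12\,2^{\ell_v}$, finishing the nontrivial case $\ell_v\geq 8$, with $\ell_v\le 4$ dispatched trivially as you note. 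Your closing remark about the tightness of the constant $4$ against the union bound is also accurate. One small inaccuracy: $c$ in your reduction actually depends on $F$ up to digit $t_2$, not ``only $F[0],\dots,F[2\ell-1]$'' as written; this is harmless, because you prove the 4-to-1 property uniformly over every $c$, so the property is indeed a function of $\phi=F[0,\ell_v-1]$ alone, which is exactly what the lemma asserts.
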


Although Lemma~\ref{lem:PFMlemma5} applies only to binary numbers, it naturally scales to any $q$ that is a power of two. To see this, observe that the property holds for any $v$, and a sequence of digits in base $q$ is after all just a bit sequence.

\begin{cor}
Lemma~\ref{lem:PFMlemma5} holds for any $q$ that is a power of two.
\end{cor}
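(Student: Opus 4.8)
The plan is to reduce the statement for $q=2^b$ to the already–established case $q=2$ of Lemma~\ref{lem:PFMlemma5}, using the elementary fact that a base-$q$ number is just a binary number with its bits grouped into blocks of $b := \log_2 q$, and that base-$q$ multiplication of two $n$-digit numbers is literally the binary multiplication of the corresponding $bn$-bit numbers (the product integer is the same, and capping at $n$ base-$q$ digits is capping at $bn$ bits, which already contains all the bits that make up $\Yv$). First I would identify $F,U\in[q^n]$ with their $bn$-bit binary representations. Because converting between the base-$q$ and base-$2$ representation of a fixed-length digit string is a bijection, the number of distinct $\Xv$ that produce a given value of $A_v$ is unchanged if we instead count binary inputs producing the corresponding binary output.

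Then I would spell out what a node $v$ of $\calT$ becomes under this identification. The base-$q$ digits $\Xv=U[t_0,t_1]$ occupy bit positions $[bt_0,\, bt_1+b-1]$ of $U$, a block of $b\ell_v/2$ bits; the base-$q$ outputs $\Yv=A[t_1+1,t_2]$ occupy the immediately following bit positions $[b(t_1+1),\, b(t_2+1)-1]$ of the product, again $b\ell_v/2$ bits; fixing $\Xvknown$ corresponds to fixing all the remaining bits of $U$; and a choice of $F[0,\ell_v-1]\in[q]^{\ell_v}$ is exactly a choice of the first $b\ell_v$ bits of $F$. In other words, this is precisely the input/output pair and the relevant part of $F$ attached to the node $v'$ of the information-transfer tree of the $bn$-bit binary multiplication with $[t_0',t_1']=[bt_0,\, bt_1+b-1]$, $[t_1'+1,t_2']=[b(t_1+1),\, b(t_2+1)-1]$ and hence $\ell_{v'}=b\ell_v$.

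Finally I would apply Lemma~\ref{lem:PFMlemma5} (case $q=2$) to this node $v'$: it gives that for at least half of all choices of $F[0,\ell_{v'}-1]\in[2]^{\ell_{v'}}$ — which, as just noted, is the same as half of all $F[0,\ell_v-1]\in[q]^{\ell_v}$ — every value of the binary output block $A_{v'}$ arises from at most four distinct binary input blocks; pushing this back through the base-conversion bijection gives that every value of $A_v$ arises from at most four distinct $\Xv$, which is the claim. The only point that needs a moment's care, and which I expect to be the main (very mild) obstacle, is verifying that the interval pair coming from $v$ is genuinely of the kind Lemma~\ref{lem:PFMlemma5} covers: when $b$ is itself a power of two this is immediate, since then $bn$ and $b\ell_v$ are powers of two and $v'$ is a bona fide node of the balanced binary tree over $bn$ leaves; for general $b$ one observes that the proof of Lemma~\ref{lem:PFMlemma5} in~\cite{PFM:1974} uses only that $[t_0',t_1']$ and $[t_1'+1,t_2']$ are adjacent intervals of equal length and that the dependence of the output on the input is carried by the first $\ell_{v'}$ digits of $F$, with no divisibility requirement on the endpoints, so it applies to our pair regardless. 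Everything else is bookkeeping about base conversion.
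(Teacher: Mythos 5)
Your argument is exactly the paper's: translate base $q=2^b$ to base $2$ by grouping bits, observe that the node $v$ becomes an adjacent pair of equal-length bit intervals covered by the binary version of Lemma~\ref{lem:PFMlemma5}, and push the conclusion back through the base-conversion bijection. The paper states this in two sentences and you have simply spelled out the bookkeeping (including the correct observation that the PFM lemma needs only adjacent equal-length intervals, not tree alignment), so your proof matches theirs.
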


We use the above corollary to prove Lemma~\ref{lem:mult-rand-H-lower}.
Let $v$ be any node of $\calT$.
At least half of all $F\in [q^n]$ have the property that $\Xv$ can be determined to up to set of four possible values given the outputs in $\Yv$.
Since the induced distribution for $\Xv$ under any fixed $\Xvknown$ is the uniform distribution on $[q]^\ell$ (the digits of $\Xv$), the conditional entropy
\[
  H(\Yv\mid\Xvknown=\Xvfix)
  ~\geq~
  \log_2 \left(\frac{q^{\ell_v / 2}}{4}\right)
  ~\geq~
  \frac{\delta \cdot \ell_v}{2} - 2,
\]
where $\delta=\log_2 q$.
This concludes the proof of Lemma~\ref{lem:mult-rand-H-lower}.

In order to prove Lemma~\ref{lem:mult-it} we specify a fixed $F$ which together with the uniform distribution for $U$ gives the desired entropy lower bound.
Similarly to the array $K_n$ from Section~\ref{sec:conv-fixed-F} we define $K_{q,n}$ to be the largest number in $[q^n]$ such that the $i$-th bit in the binary expansion of $K_{q,n}$ is $1$ if and only if $i$ is a power of two (starting with $i=0$ at the lower-order end). Thus, the binary expansion of $K_{q,n}$ is the reverse of $K_{n\log_2 q}$.
For example, suppose that $q=16$ (i.e.~hex) and $n=8$. Then $K_{16,8}=10116$ in base 16, or 65,814 in decimal, since the binary expansion of $K_{16,8}$ is
\[
 \underbrace{0000}_{0}
 \underbrace{0000}_{0}
 \underbrace{0000}_{0}
 \underbrace{0001}_{1}
 \underbrace{0000}_{0}
 \underbrace{0001}_{1}
 \underbrace{0001}_{1}
 \underbrace{0110}_{6}.
\]

Paterson, Fischer and Meyer~\cite{PFM:1974} also studied the multiplication of binary numbers where one operand is fixed. The following property was given in~\cite{PFM:1974}, here translated into our notation.

\begin{lem}[Lemma~1 of~\cite{PFM:1974}]
    \label{lem:PFMlemma1}
    Suppose $q=2$ and $F=K_{q,n}$. Let $v$ be any node of $\calT$ and fix the digits of $\Xvknown$ arbitrarily. Any value of $A_v$ can arise from at most two distinct~$\Xv$.
\end{lem}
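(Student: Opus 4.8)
\medskip
\noindent\textbf{Proof plan.}
The statement is a translation of Lemma~1 of~\cite{PFM:1974}, so the plan is to reprove it directly in our notation. The first step is to put the product into a convenient algebraic form. With $q=2$ and $F=K_{q,n}$ the only $1$-bits of $F$ sit at positions $2^0,2^1,2^2,\dots$, so
\[
  A ~\equiv~ U\cdot F ~\equiv~ \sum_{k\geq 0}\bigl(U\ll 2^k\bigr)\pmod{2^{n}},
\]
that is, $A$ is the truncated sum of the copies of $U$ shifted left by the powers of two. Since the output window $[t_1+1,t_2]$ sits strictly below bit position $n$, the reduction modulo $2^{n}$ is inert on $A_v$, and it suffices to analyse $U\cdot F$ modulo $2^{t_2+1}$. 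I would then separate the known and unknown parts of the input, writing $U=U_0+V$ where $U_0$ coincides with $U$ off the block $[t_0,t_1]$ and is zero on it, and $V=\sum_{i=t_0}^{t_1}\Xv[i-t_0]\,2^{i}$ carries the unknown block $\Xv$. Thus $U\cdot F\equiv U_0F+VF$, where $U_0F$ is a number fixed by $\Xvknown$.

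\medskip
The key structural point, and the place where the tree parameter enters, is that $\ell_v/2$ is a power of two. Every shifted copy $V\ll 2^k$ with $2^k>\ell_v/2$ has its lowest set bit at position at least $t_0+2\cdot(\ell_v/2)=t_2+1$ and so vanishes modulo $2^{t_2+1}$, while the copy $V\ll(\ell_v/2)$ occupies exactly the window positions $[t_1+1,t_2]$, placing a verbatim copy of $\Xv$ there (bit $a$ of $\Xv$ going to position $t_1+1+a$). Modulo $2^{t_2+1}$ we therefore have $VF\equiv\sum_{2^k\leq\ell_v/2}(V\ll 2^k)$, and among these only the copies with $2^k<\ell_v/2$ can spill into the window; a short check shows that such a copy reaches the window only through the \emph{high} half of the bits of $\Xv$, and always also occupies positions strictly below the window. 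Consequently the contribution of $V$ to the $\ell_v/2$ window bits of $A$ equals the integer value of $\Xv$, plus a correction term depending only on the high half of $\Xv$, plus the single carry bit produced at position $t_1$ when $U_0F$ and $VF$ are added; this carry lies in $\{0,1\}$, since each of the two addends taken modulo $2^{t_1+1}$ is smaller than $2^{t_1+1}$.

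\medskip
Finally I would invert. Fixing the one boundary carry bit, the system that expresses $A_v$ in terms of $\Xv$ is triangular: the top window bits involve only the high half of $\Xv$ (together with the carry and the correction term), which pins that half down, and the remaining window bits then pin down the low half. Hence for each of the two possible values of the boundary carry there is at most one $\Xv$, which gives the claimed bound of two. The step that genuinely requires care — and is exactly the content of the cited lemma — is the carry bookkeeping: one must verify that this inversion is forced and that the only freedom is that single boundary bit, rather than the $\Theta(\log n)$ bits one might fear from the column heights of the partial-product sum. This is where the power-of-two spacing of the ones of $F=K_{q,n}$ is essential, since it gives each bit of $\Xv$ a ``private'' landing position in the window through the dominant shift $2^{\ell_v/2}$ and makes the effect of every other shift triangular, so that carries can be resolved one column at a time.
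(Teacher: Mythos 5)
Note first that the paper supplies no proof of this lemma: it is quoted (translated into the paper's notation) from Lemma~1 of Paterson, Fischer and Meyer~\cite{PFM:1974}, so there is no in-paper argument to compare against. Your structural decomposition is the right one and matches the spirit of the original: shifts $2^k > \ell_v/2$ land entirely above bit $t_2$, the shift by exactly $\ell_v/2$ (a power of two precisely because $\ell_v/2$ is) places $\Xv$ verbatim in the window, and each smaller shift $2^k<\ell_v/2$ clips into the bottom of the window only through the top $2^k$ bits of~$\Xv$.

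The gap is in the carry bookkeeping, which you acknowledge is ``exactly the content of the cited lemma'' but then do not actually resolve, and which is where the factor $2$ (rather than $4$ or $8$) must come from. Two of your claims are too optimistic. First, the correction from the smaller shifts, $Z=\lfloor X\cdot (F\bmod 2^{\ell})/2^{\ell}\rfloor$ where $\ell=\ell_v/2$ and $X$ is the integer value of $\Xv$, does \emph{not} depend only on the high half of $\Xv$: the low half of $X$ feeds into $Z$ through carries arising just below bit $\ell$. Second, the boundary carry at $t_1$ from adding $U_0F$ to $VF$ is not the only carry in play: adding $Z$ to $X$ inside the window itself produces a further carry at around bit $\ell/2$, so even after fixing the one carry you name, the ``triangular'' inversion you sketch is not carry-free and, taken literally, gives a bound larger than $2$. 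What actually pins down the constant $2$ is a monotonicity cancellation: writing the window value as $(a' + X + Z(X) + c(X)) \bmod 2^{\ell}$ with $a'$ fixed and $c(X)\in\{0,1\}$ the boundary carry, one must check that $X\mapsto X+Z(X)+c(X)$ is \emph{strictly increasing} on $[0,2^{\ell})$ --- the key point is that whenever $c$ drops from $X$ to $X{+}1$ the term $Z$ is forced to rise, and whenever $c$ rises $Z$ is forced to stay put, so the increment is always in $\{1,2\}$ --- and that its total range has length less than $2\cdot 2^{\ell}$ because $Z(X)+c(X) < 2^{\ell/2+1}+1$. A strictly increasing integer map whose range fits in an interval shorter than $2\cdot 2^{\ell}$ is at most $2$-to-$1$ after reduction modulo $2^{\ell}$, which is exactly the claim. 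Your plan stops just short of this cancellation, which is the crux of the cited lemma.
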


Similarly to Lemma~\ref{lem:PFMlemma5} and from our definition of $K_{q,n}$, the above lemma scales to any $q$ that is a power of two.

\begin{cor}
Lemma~\ref{lem:PFMlemma1} holds for any $q$ that is a power of two.
\end{cor}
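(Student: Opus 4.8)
The plan is to mirror the one‑line reduction already used for the corollary to Lemma~\ref{lem:PFMlemma5}, but to track the fixed operand carefully so that it lands exactly on the binary $K$‑string to which Lemma~\ref{lem:PFMlemma1} applies. Write $\delta=\log_2 q$ and identify each number in $[q^n]$ with the length‑$n\delta$ bit string obtained by concatenating the $\delta$‑bit binary expansions of its base‑$q$ digits, digit $0$ occupying the least significant $\delta$ bits. Since multiplying integers does not depend on the base used to write them, and the $n$ least significant base‑$q$ digits of a product are exactly its $n\delta$ least significant bits, the base‑$q$ product $A=U\times F$ is, bit for bit, the length‑$n\delta$ truncation of the integer product of the two bit strings. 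By the definition of $K_{q,n}$, its bit string has a $1$ in position $i$ iff $i$ is a power of two, i.e.\ it equals the reverse of $K_{n\delta}$ --- precisely the fixed binary operand appearing in Lemma~\ref{lem:PFMlemma1}.

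Next I would translate the node. The digits $\Xv=U[t_0,t_1]$ occupy the contiguous bit block $[t_0\delta,\ (t_1+1)\delta-1]$, and the output digits $\Yv=A[t_1+1,t_2]$ occupy the immediately following block $[(t_1+1)\delta,\ (t_2+1)\delta-1]$ (all within the $n\delta$ truncated bits, consistently with the base‑$q$ capping at $n$ digits), while fixing $\Xvknown$ fixes exactly the bits of $U$ outside the first block. Invoking the binary form of Lemma~\ref{lem:PFMlemma1} for this adjacent pair of bit intervals shows that the bits of $\Yv$ can arise from at most two distinct assignments of the bits of $\Xv$; since regrouping bits into base‑$q$ digits is a bijection, any value of $\Yv$ arises from at most two distinct $\Xv$, which is the claim.

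The only point requiring care is that Lemma~\ref{lem:PFMlemma1} is being applied to a pair of adjacent bit intervals whose common length $\delta\ell_v/2$ need be neither a power of two nor aligned to the dyadic structure of a binary information transfer tree. This is harmless because the combinatorial fact of Paterson, Fischer and Meyer~\cite{PFM:1974} is a statement about an arbitrary interval of arriving bits together with the following interval of output bits; its restatement in terms of nodes of $\calT$ in Lemma~\ref{lem:PFMlemma1} is only a notational convenience, and the interval‑level statement is exactly what the bit‑level reduction produces. Hence no genuine obstacle arises, and the corollary follows from the same ``a base‑$q$ digit sequence is just a bit sequence'' observation that proved the corollary to Lemma~\ref{lem:PFMlemma5}.
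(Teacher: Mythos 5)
Your proof is correct and follows the paper's own route: identify base-$q$ digit strings with their bit strings, observe that $K_{q,n}$ then becomes the binary $K$-string of Lemma~\ref{lem:PFMlemma1}, and apply that lemma to the corresponding adjacent bit intervals. The alignment remark you add (that the Paterson--Fischer--Meyer fact is an interval-level statement, not tied to dyadically-aligned tree nodes, so it still applies when $\delta\ell_v/2$ is not a power of two) is a correct and useful clarification of the paper's terse ``a sequence of digits in base $q$ is after all just a bit sequence'' justification.
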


We use the above corollary to prove Lemma~\ref{lem:mult-H-lower} where $F=K_{q,n}$.
Let $v$ be any node of $\calT$.
The value of $\Xv$ can be determined to up to set of two possible values given the outputs in $\Yv$.
Since the induced distribution for $\Xv$ under any fixed $\Xvknown$ is the uniform distribution on $[q]^\ell$ (the digits of $\Xv$), the conditional entropy
\[
  H(\Yv\mid\Xvknown=\Xvfix)
  ~\geq~
  \log_2 \left(\frac{q^{\ell_v / 2}}{2}\right)
  ~\geq~
  \frac{\delta \cdot \ell_v}{2} - 1,
\]
where $\delta=\log_2 q$.
This concludes the proof of Lemma~\ref{lem:mult-H-lower}.

\section{Hard distribution for the Hamming distance problem} \label{sec:ham}

In this section we prove Lemma~\ref{lem:ham-H-lower}, that is we show that there are instances of the Hamming distance problem such that the conditional entropy of the outputs $\Yv$ is large, where all inputs but $\Xv$ are fixed.
We will show this property for nodes in the upper part of the tree~$\calT$, namely nodes $v$ such the number of leaves $\ell_v$ is greater than some constant times~$\sqrt{n}$.

Unlike the hard distributions we gave for the convolution and multiplication problems, we will not give an explicit description of the array $F$ for which the Hamming distance lower bound holds. We only show the existence of such an $F$.
Further, for both the convolution and multiplication problems we showed that the lower bound was obtained for a majority of all $F$, where $U$ was chosen uniformly at random from $[q]^n$.
 For the Hamming distance problem we will instead show that there exists an $F$ and some particular subset of $[q]^n$ such that when $U$ is drawn uniformly at random from this subset, we obtain the desired lower bound.

\subsection{Terminology, choice of $q$ and rounding issues} \label{sec:rounding}

We will refer to the input arrays, including $F$ and $U$, as \emph{strings}, and the set $[q]$ as the \emph{alphabet}. The values of the alphabet are referred to as \emph{symbols}.

Unlike the convolution and multiplication problems, for the Hamming distance problem there is no benefit in having an alphabet size greater $n$, the length of $F$.
Our hard distribution is constructed such that with an alphabet of size $q$, $n$ has to be roughly $q^3$, or more.
So from now on we assume that~$n\geq q^3$.
Observe that whenever $n$ is polynomial in $q$, the number of bits needed to represent a symbol is $\delta\in\Theta(\log n)$.

We will introduce two special symbols denoted $\psymb$ and $\tsymb$. It will be tidy to keep them separate throughout the presentation. Once we start digging into the details we will see that for a given $q$, the number of distinct symbols that we actually use in the hard instance is only $q-\sqrt{q}+2$, including the two special symbols. The alphabet $[q]$ is therefore large enough to accommodate every symbol that we use.

We will often treat various roots of integers as integers. For example, we may say that some string of length $q^{3/2}$ is the concatenation of $q$ smaller strings, each of length $q^{1/2}$. This is of course only possible whenever these numbers are integers, which is not necessarily the case for arbitrary $q$. One could overcome this problem by adjusting the values with appropriate floors and ceilings, as well as introducing padding symbols where necessary, but this would without doubt clutter the presentation. We have decided to keep it simple by treating any root of any integer as an integer, and assuming that everything adds up nicely. This is only to keep the presentation clean and it should be obvious from the context that this has no impact on the asymptotic behaviour.

\subsection{The overall structure of the fixed string $F$}

Recall the definition of the array $K_n\in\{0,1\}^n $ from Section~\ref{sec:conv-fixed-F} which consists of 0s everywhere except for at power-of-two positions from the right-hand end.
A hard distribution for the convolution problem was given by setting $F$ to $K_n$ and choosing $U$ uniformly at random from $[q]^n$.
Recall Figure~\ref{fig:conv} which illustrates why we chose this hard distribution: for each output in the second half of $A_v$,
that is between the alignments marked~\alignment{2} and~\alignment{3} in the figure,
exactly one input of $U_v$ is aligned with a 1 in $F$ and all other inputs of $U_v$ are aligned with 0. Thus, the second half of $\Xv$ can be uniquely determined from the outputs $\Yv$.

To show a lower bound for the Hamming distance problem we devise a string $F$ that resembles $K_n$.
First we introduce an auxiliary string $R$ of length $q^{3/2}$. We will use $r=q^{3/2}$ as a shorthand for $|R|$. Recall that $n\geq q^3$.
We will give the details of $R$ later but will highlight an important property of it below.
We obtain $F$ from $K_n$ by first replacing each 0 by a symbol that we denote~$\psymb$. The symbol~$\psymb$ will never occur in the stream, hence will always generate a mismatch.
We then replace every $r$-length substring starting at a~1 with a copy of $R$. Any~1 that is closer than $r$ positions from the right-hand end of $F$ is replaced by a $\psymb$-symbol instead.
Figure~\ref{fig:fstring} illustrates $F$.
\begin{figure}[t]
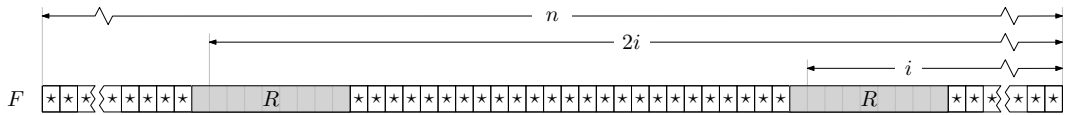

    \centering
    \insertdiagram{string_F}
    \caption{\label{fig:fstring}The string $F$ has a copy of $R$ starting at each position $n-1-i$ where $i\geq |R|$ is a power of two. All other positions have the symbol~$\psymb$ which only occurs in $F$ and not in the stream.}
\end{figure}

\subsection{Properties of the string $R$ and Hamming arrays}

The string $R$ will play the same role as the value~1 in $K_n$ did for the convolution problem, namely it will allow us to uniquely determine symbols from $U$.
To see how, we first introduce the notion of a Hamming array, illustrated in Figure~\ref{fig:hamarray}.
\begin{figure*}[t]
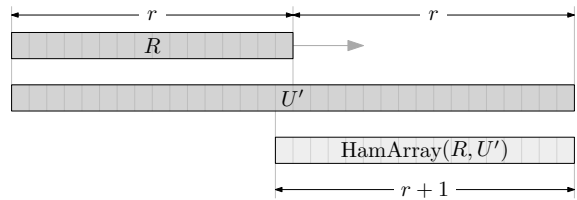

    \centering
    \insertdiagram{hamarray}
    \caption{\label{fig:hamarray}$\Hamarray(R,U')$ contains the Hamming distances between $R$ and every $r$-length substring of $U'$ as $R$ slides along $U'$.}
\end{figure*}
For a string $U'$ of length $2r$,
we write $\Hamarray(R,U')$ to denote the $(r+1)$-length array such that for $i\in[r+1]$, $\Hamarray(R,U')[i]$ is the Hamming distance between $R$ and $U'[i,i+r-1]$. That is, $\Hamarray(R,U')$ contains the Hamming distances between $R$ and every $r$-length substring of $U'$. 

To see the resemblance with a 1 in $K_n$, we give the following lemma. The proof is non-trivial and deferred to Section~\ref{sec:R}. A high-level explanation of the lemma is given immediately after its statement.

\begin{lem}
\label{lem:combinatorial}
    There exists a constant $k>0$ such that for any $r$ there is an $r$-length string $R \in [r^{2/3}]^r$ such that
    \[
      \Big|\Set{\Hamarray(R,U') \;\mid\; \textup{$U' \in [r^{2/3}]^{2r}$}}\Big| \,\geq\, r^{kr}.
    \]
\end{lem}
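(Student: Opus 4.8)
The plan is to reduce the statement about Hamming arrays to a purely additive-combinatorial statement about vector sums, and then prove the latter probabilistically using constant-weight cyclic binary codes. First I would observe that if $R$ and the window $U'[i,i+r-1]$ are over an alphabet, the individual Hamming distance $\Hamarray(R,U')[i]$ can be written as $r$ minus the number of matched positions, and that for an appropriately structured $R$ the profile of matches as the window slides records, essentially, a sum of characteristic vectors. Concretely, I would design $R$ (and restrict attention to streams $U'$) so that each symbol of $U'$ ``contributes'' a fixed binary vector of length $\approx r$ to a running vector sum, where which vector is contributed is governed by which symbol from the alphabet $[r^{2/3}]$ is placed at that position; the full array $\Hamarray(R,U')$ then determines (and is determined by) the resulting vector sum. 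This is the step where the special symbols $\psymb,\tsymb$ and the ``blocks of length $q^{1/2}$ inside blocks of length $q^{3/2}$'' decomposition mentioned in Section~\ref{sec:rounding} get used: $R$ is built by concatenating $q$ sub-blocks of length $q^{1/2}$, and the codewords live at the sub-block level.

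The core combinatorial claim I would isolate is the one previewed in the introduction: there is a set $V$ of binary vectors of length $m$ (here $m\approx q^{1/2}$, and the number of ``rounds'' is also $\approx m$, with $r=m^3$) such that, even when at each of the $m$ rounds we are only allowed to choose a vector from a prescribed subset of $V$, the number of distinct achievable sums $\sum$ of $m$ chosen vectors is $m^{\Omega(m)}$. I would prove this by the probabilistic method: take $V$ to be a random selection of codewords from a constant-weight binary cyclic code (cyclic so that the ``allowed subset at round $j$'' can be taken to be a cyclic shift class, constant-weight so that distinct multisets of chosen codewords tend to give distinct sums), show that the expected number of ``sum-collisions'' among tuples of choices is small — using the distance / weight-enumerator properties of the code to bound the probability that two different tuples sum to the same vector — and conclude by a union bound / deletion argument that some fixed $V$ achieves $\geq m^{cm}$ distinct sums. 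Translating back, each distinct vector sum yields a distinct Hamming array, so $|\{\Hamarray(R,U')\}| \geq m^{cm} = r^{c'r^{1/3}\cdot(1/3)\log m/\log r}\cdots$ — more carefully, with $m=r^{1/3}$ we get $m^{\Omega(m)} = r^{\Omega(r^{1/3})}$, and a small adjustment of the block sizes (taking the number of rounds proportional to $r$ rather than to $r^{1/3}$, i.e.\ reusing each coordinate block across many rounds) pushes this to $r^{\Omega(r)}$ as claimed; I would set the parameters at the start so this bookkeeping comes out cleanly.

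I expect the main obstacle to be the collision analysis for the random set $V$: one must show not merely that a random $V$ has many distinct \emph{unrestricted} sums, but that the count survives the restriction to arbitrary per-round subsets (which is what the sliding structure of $R$ forces on us), and simultaneously that distinct tuples of chosen codewords almost never collide. The cyclic structure is what lets a single random ``seed'' codeword generate all the per-round option sets at once, so the randomness is low-dimensional; bounding the collision probability then hinges on controlling, for a fixed nonzero pattern of ``which rounds differ,'' the number of codeword-shift combinations summing to zero, which is exactly a weight-enumerator / dual-distance estimate for the cyclic code. Making that estimate strong enough to beat the $m^{O(m)}$-sized union bound over tuple-pairs is the crux, and is presumably where the detailed construction of $R$ in Sections~\ref{sec:R}–\ref{sec:proofvecsum} does its work; everything else (the reduction from Hamming arrays to vector sums, and the final counting) is bookkeeping given the parameter choices.
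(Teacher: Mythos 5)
You correctly identify the overall skeleton of the paper's argument: reduce counting distinct $\Hamarray$s to counting distinct vector sums under a blocking/sub-multiset restriction, and attack that by the probabilistic method via a constant-weight binary cyclic code, with the collision analysis as the crux. But there are two places where your sketch departs from what actually makes the proof work, and both are load-bearing.

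First, \emph{what is random}. In the paper the pool $V$ is a multiset of $\mu(\mu-1)$ vectors drawn i.i.d.\ uniformly from $\{0,1\}^\mu$ (so about $\mu^3 = r$ independent random bits), and the constant-weight cyclic code $C$ (length $\mu(\mu-1)$, weight $\mu$, minimum distance $\geq 7\mu/4$) acts purely at the \emph{meta} level, as an index set selecting which $\mu$ of the $\mu(\mu-1)$ random vectors get summed. The minimum distance forces index sets from different balls to disagree on $\geq \mu/2$ indices, at which point the $\mu/2$ fresh i.i.d.\ vectors give a collision probability of about $(\mu/2)^{-\mu/2}$, which is exactly enough to beat the union bound over the roughly $\mu^{\mu/4}$ ball pairs. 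Your sketch instead makes $V$ ``a random selection of codewords'' whose per-round option sets are cyclic shift classes, so that ``a single random seed codeword generates all the per-round option sets, so the randomness is low-dimensional.'' Low-dimensional randomness cannot survive that union bound, and ``constant weight $\Rightarrow$ distinct multisets give distinct sums'' is not true as stated. Cyclicity in the paper is used for something else entirely: it guarantees each of the $|V|$ positions lies in exactly $|C|\mu/|V|$ codewords, so that deleting a $1/64$ fraction of $V$ kills at most a $1/4$ fraction of the (good) codewords; that is what makes the $(63/64)|V|$ sub-multiset statement, and hence the blocking argument, go through. Second, \emph{the exponent}. You compute $m^{\Omega(m)} = r^{\Omega(r^{1/3})}$ and defer to ``a small adjustment'' to reach $r^{\Omega(r)}$. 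That adjustment is not a block-size tweak but a structural part of the construction: $R$ slides along $U'$ through $\Theta(\mu^2)$ windows of width $\mu$, each window decoding one independent vector sum from the shrinking (but always $\geq (63/64)|V|$-sized) pool, with single-step offsets interleaved so blocked positions wrap cyclically. It is this $\Theta(\mu^2)$-fold repetition, each round still contributing $\mu^{\Omega(\mu)}$ choices, that gives $\mu^{\Omega(\mu^3)} = r^{\Omega(r)}$. Your sketch would need both of these mechanisms made explicit before it could close the gap you yourself flag as ``where the detailed construction does its work.''
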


First recall that $q=r^{2/3}$, hence both $R$ and $U'$ of the lemma are over an alphabet of $q$ symbols.
The lemma says that there is a string $R$ such that over all possible $U'$ of length $2|R|$, one can obtain $q^{\Theta(r)}$ distinct Hamming arrays. Since there are only $q^{2r}$ possible values of $U'$, this is means that a non-negligible fraction of all $U'$ can be put in one-to-one correspondence with Hamming arrays.
Thus, as symbols in $\Xv$ slide past an $R$ in a similar fashion to symbols in $\Xv$ sliding past a 1 in $K_n$ in the hard distribution for the convolution problem, we can infer a substantial portion of the symbols of $\Xv$ from the outputs $\Yv$, hence obtain large entropy.
We formalise this in the next section and explain how the lower bound is obtained.

\subsection{The hard distribution and obtaining the lower bound}

Relying on Lemma~\ref{lem:combinatorial} above we will now describe a hard distribution for the Hamming distance problem and use it to prove Lemma~\ref{lem:ham-H-lower}.
Given a string $R \in [q]^r$, we let
\[
  \calU_R \subseteq [q]^{2r}
\]
be any largest set of $2r$-length strings such that for any two distinct strings $U'_1, U'_2\in \calU_R$,
\[
  \Hamarray(R,\U'_1)\,\neq\, \Hamarray(R,\U'_2).
\]
To uniquely specify a string in $\calU_R$ we need $\log_2|\calU_R|$ bits.
By Lemma~\ref{lem:combinatorial} we have that there exists an $R$ such that $\log_2|\calU_R|\in \Theta(r\log q)$ since $q=r^{2/3}$.

For the hard distribution we use $F$ from above with an $R$ that has the properties of Lemma~\ref{lem:combinatorial}. The input $U$ is given by concatenating $n/2r$ strings drawn independently and uniformly at random from $\calU_R$.

Similarly to Figure~\ref{fig:conv} we can now illustrate how strings from $\calU_R$ slide past $R$ during the second half of the outputs in $\Yv$, where $v$ is any node of $\calT$ such that $\ell_v\geq \sqrt{n} \geq r$. Recall that we have assumed that $n\geq q^3=r^2$.
In Figure~\ref{fig:doubling} we have illustrated $U_v$ as the concatenation of random strings $U'_1,\dots,U'_m$ drawn from $\calU_R$, where $m=\ell_v/(4r)$.
\begin{figure*}[t]
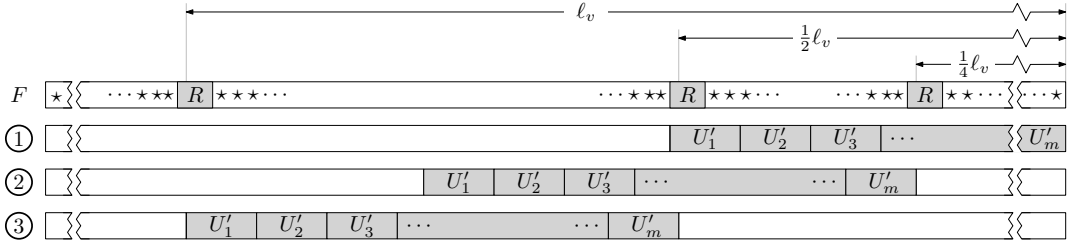

    \centering
    \insertdiagram{doubling_slide}
    \caption{\label{fig:doubling}
    Three alignments of $F$ and the stream $S$:
    \alignment{1}~the last value of $\Xv$ has just arrived, \alignment{2}~half of the outputs in $\Yv$ have been outputted, and \alignment{3}~all outputs in $\Yv$ have been outputted. The string $\Xv$ is here the concatenation of $U'_1,\dots,U'_m\in\calU_R$, where $m=\ell_v/(4r)$.}    
\end{figure*}
Between alignments~\alignment{2} and~\alignment{3} in the figure, the second half of the substrings $U'_i$ of $\Xv$ slide in turn past $R$, and from the outputs in $\Yv$ we can infer $\Hamarray(R,U'_i)$ for each such $U'_i$. By construction of $\calU_R$ this allows us to uniquely determine the strings $U'_i$.
Thus, over all outputs $\Yv$, a total of $m/2$ (give or take a constant number to compensate for border cases) substrings $U'_i$ of $\Xv$ can be determined, implying that the entropy of $\Yv$ must be at least, by Lemma~\ref{lem:combinatorial},
$\Theta((m/2)\cdot r\log q)=\Theta(\ell_v \cdot\delta)$,
where $\delta=\lfloor\log_2 q\rfloor$.
This concludes the proof of Lemma~\ref{lem:ham-H-lower}.

\section{A string with many~different~Hamming~arrays}

In this section we prove Lemma~\ref{lem:combinatorial},
that is we show that there exists a string $R$ which gives many different Hamming arrays.
This is arguably the most technically detailed part of our lower bound proofs.
To recap, we claim that for any $r$ there exists a string $R \in [r^{2/3}]^r$ which permits at least $r^{kr}$ distinct Hamming arrays when combined with every string in $[r^{2/3}]^{2r}$, where $k$ is a constant.
Next we describe the overall structure of an $R$ with this property.

\subsection{The structure of $R$}

To shorten notation it will be convenient to introduce the variable $\mu$ as a shorthand for $r^{1/3}$. Hence $R$ has length $r=\mu^3$ and $q=\mu^2$.  The string $R$ is constructed by concatenating $\mu^2$ substrings, each of length $\mu$.
For $i\in[\mu^2]$ we let $\ps_i$ denote the $i$-th substring of $R$, that is
\[
 R\,=\,\ps_0\,\ps_1\cdots\ps_{(\ELL^2-1)}.
\]
Each substring $\ps_i$ can only contain symbols from the set $\{\psymb,i\}$, where $\psymb$ is the special symbol that will not occur in the stream.
Figure~\ref{fig:R} illustrates an example of $R$.
\begin{figure*}[t]
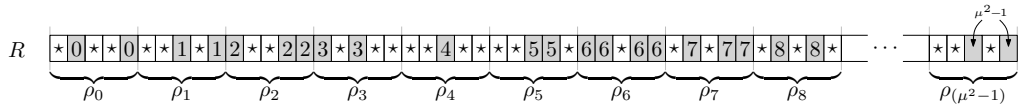

    \centering
    \insertdiagram{string_R}
    \caption{\label{fig:R}An example of the string $R$ of length $r=\mu^3$, which is the concatenation of the $\mu^2$ strings $\ps_0,\dots,\ps_{\mu^2-1}$, where $\ps_i\in \{\psymb,i\}^\mu$.}    
\end{figure*}

Doing the maths correctly, the total number of distinct symbols in $R$ could reach $\mu^2+1=r^{2/3}+1=q+1$. As pointed out in Section~\ref{sec:rounding} we do indeed introduce two additional symbols, of which one is $\psymb$, however, to keep notation clutter-free we abuse the notion of $q$ by giving it a slack that should obviously be adjusted by some constant where appropriate.

The purpose of the substrings $\ps_i$ is to support a reduction from vector addition to Hamming arrays that we explain next.

\subsection{Vector sums and Hamming arrays} \label{sec:vecsums-hamarrays}

The $\mu$-length substring $\ps_i$ of $R$ corresponds to a 0/1-vector $v_i\in\{0,1\}^\mu$ such that the $j$-th component of $v_i$ is 0 if and only the $j$-th symbol of $\ps_i$ is $\psymb$. For example, $\ps_2={2}{\star}{\star}{2}{2}$ from Figure~\ref{fig:R} corresponds the vector $v_2=(1,0,0,1,1)$.

To explain the idea of how vector addition can be carried out by using the concept of a Hamming array of $R$ and some string $U'\in[\mu^2]^{2r}$, consider Figure~\ref{fig:blocks} as an illustrative example.
\begin{figure*}[t]
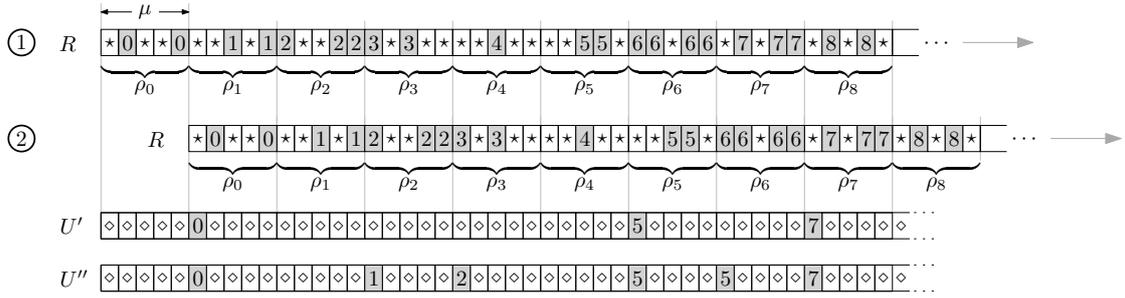

    \centering
    \insertdiagram{pattern-blocks}
    \caption{\label{fig:blocks}Setting symbols of $\U'$ renders a large set of possible Hamming distance outputs.}
\end{figure*}
Here the string $U'$ contains the other special symbol that we introduce, denoted~$\tsymb$. This symbol does not occur in $R$, hence will always mismatch.
In the figure we see that all positions of $U'$ have the symbol $\tsymb$, except for three positions where the symbols are 0,~5 and~7, respectively. The positions holding these symbols are chosen such that in the first alignment between $R$ and $U'$, marked~\alignment{1}, the symbols 0,~5 and~7 sit immediately after $\ps_0$, $\ps_5$ and $\ps_7$ in $R$, respectively.
As $R$ slides $\mu$ steps to the right towards the alignment marked~\alignment{2}, the symbols 0, 5 and~7 of $U'$ will generate matches whenever they are aligned with their corresponding symbols in $R$.
Thus, for $i\in\{1,\dots,\mu\}$,
\[
    \Hamarray(R,U')[i] ~=~ r - (v_0 + v_5 + v_7)[i],
\]
where $(v_0 + v_5 + v_7)[i]$ is the $i$-th component of the sum of the vectors $v_0$, $v_5$ and~$v_7$.
In other words, from $\Hamarray(R,U')[1,\mu]$ we can uniquely determine the sum $v_0+v_5+v_7$.

The idea above can be repeated by populating $U'$ with more symbols from $[\mu^2]$. As an example we have added the symbols 1 and 2, and another copy of~5 to $U'$, which is the string denoted $U''$ in the figure.
As $R$ slides another $\mu$ steps to the right, $\Hamarray(R,U'')[\mu+1,2\mu]$ uniquely specifies the sum $v_1+v_2+v_5$.

Observe that as we populate $U'$ with symbols, positions get \emph{blocked}. For example, we cannot obtain the sum $v_1+v_2+v_4$ from $\Hamarray(R,U'')[\mu+1,2\mu]$ since the position where the~4 has to be set is already occupied by a~5.
Observe however that setting symbols of $U'$ as above generates matches only in the intended $\mu$-length window of the Hamming array. Thus, we have full control of which vector sums we want to compute, under the constraint that positions get blocked, limiting the choice of vectors.

The conclusion this far is that vector sums have a direct correspondence with the Hamming array. Next we take the ideas from above further and show that if there exists a pool of $\mu^2$ vectors such that many different vector sums can be obtained when adding $\mu$ vectors from the pool, then the number of distinct $\Hamarray(R,U')$ one can obtain is large. This would prove Lemma~\ref{lem:combinatorial}.

\subsection{The string $R$ and the proof of Lemma~\ref{lem:combinatorial}} \label{sec:R}

Before we state the next lemma we need to define what we mean by sub-multiset of a multiset~$X$. We consider an arbitrary ordering of the elements of $X$ and refer to $X[i]$ as the $i$-th element of $X$.  
We use the term \emph{sub-multiset} of $X$ to denote any multiset obtained from $X$ by removing zero or more elements. We will use the notation $\sqsubseteq$ to denote the sub-multiset relation so that we have, for example,  $\{1,1,4,5,5\} \sqsubseteq \{1,1,1,4,4,5,5,7,8\}$.

\begin{lem}
    \label{lem:vecsum}
    For any $\ELL>40$ such that $\ELL-1$ is a prime, there exists a multiset $V$ of vectors from
    $\{0,1\}^\ELL$ such that $|V|=\ELL(\ELL-1)$ and for any sub-multiset $V'\subseteq V$ of size at least
    $(63/64)|V|$,
    \begin{align*}
        \left|\set{w_1 + \cdots + w_\ELL \,|\, \{w_1,\dots,w_\ELL\} \sqsubseteq V'}\right| ~\geq~ \ELL^{(\ELL/10)} .
    \end{align*}
\end{lem}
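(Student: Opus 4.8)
## Proof proposal for Lemma~\ref{lem:vecsum}

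The plan is to construct $V$ probabilistically from a constant-weight binary cyclic code and then show that the required lower bound on the number of distinct vector sums holds with positive probability, even after an adversary deletes a $(1/64)$-fraction of the vectors. First I would fix a suitable weight $d$ (a constant fraction of $\ELL$, say $d = \ELL/3$ or similar) and consider the cyclic code of length $\ELL-1$; since $\ELL-1$ is prime, the cyclic shifts of a generic codeword are all distinct, giving $\ELL-1$ codewords, and I would take $\ELL$ such shift-orbits (from $\ELL$ suitably chosen ``seed'' vectors) to obtain a multiset $V$ of size $\ELL(\ELL-1)$ of weight-$d$ vectors in $\{0,1\}^{\ELL}$ (padding the coordinate that is fixed out, or working in $\{0,1\}^{\ELL-1}$ and embedding). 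The key structural point I want is that sums of $\ELL$ such vectors land in a bounded range componentwise and that two different sub-multisets of size $\ELL$ give the same sum only under a controllable ``collision'' condition.

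The heart of the argument is a counting/second-moment step. I would choose each of the $\ELL$ seed vectors independently and uniformly at random among weight-$d$ vectors (or among codewords of a random cyclic code), so that $V$ becomes a random multiset. For a fixed sub-multiset $V' \subseteq V$ of size $\ge (63/64)|V|$, I would lower bound the number of distinct sums $w_1 + \cdots + w_\ELL$ with $\{w_1,\dots,w_\ELL\} \sqsubseteq V'$ by bounding the number of \emph{colliding pairs} of $\ELL$-subsets: show that $\Pr[\text{two disjoint/overlapping random }\ELL\text{-selections give the same sum}]$ is tiny, of order $\ELL^{-c\ELL}$, using the randomness of the seeds and the fact that a sum collision forces many linear constraints over $\mathbb{Z}$ on the random codewords. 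Combining this with the total number of $\ELL$-selections available inside any large $V'$ — which is at least roughly $\binom{(63/64)\ELL(\ELL-1)}{\ELL} \ge \ELL^{\Omega(\ELL)}$ — a standard ``number of objects divided by max multiplicity of a fiber'' argument yields at least $\ELL^{\ELL/10}$ distinct sums. Then I would take a union bound over all $\le 2^{|V|}$ sub-multisets $V'$; since the failure probability per $V'$ is $\ELL^{-\Omega(\ELL^2)}$, which beats $2^{\ELL^2}$, positive probability of a good $V$ remains. The constants ($63/64$, $\ELL > 40$, the exponent $1/10$) are exactly the slack needed to absorb the union bound and the deletions.

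The step I expect to be the main obstacle is controlling the collision probability robustly enough to survive \emph{both} the deletion of a $1/64$-fraction \emph{and} the union bound over all sub-multisets. Naively, a sum collision among $\ELL$-selections from $V'$ is not a single linear event but a combinatorial one (which multiset of vectors was chosen), so I would need to be careful that ``few collisions on average'' translates to ``few collisions for every large $V'$ simultaneously''. The cleanest route is probably to prove the stronger statement that, with probability $1 - \ELL^{-\Omega(\ELL^2)}$ over the choice of $V$, \emph{every} pair of distinct $\ELL$-sub-multisets of $V$ (not just of one $V'$) that is ``reduced'' (share no common vectors) has distinct sums, except for a number of bad pairs that is negligible compared to $\binom{|V|/2}{\ELL}$; then restricting to $V'$ only removes candidate selections and cannot create collisions. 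Making the cyclic/coding structure do the work here — so that a collision genuinely forces a rank deficiency that a random code avoids with the claimed probability — is where the real technical content lies, and is presumably why the authors route this through constant-weight cyclic codes rather than fully generic random vectors. The remaining bookkeeping (verifying $|V| = \ELL(\ELL-1)$, that at most $q-\sqrt q + 2$ symbols are used downstream, and that the rounding conventions of Section~\ref{sec:rounding} are harmless) is routine.
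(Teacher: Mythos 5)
Your high-level instinct -- constant-weight cyclic codes plus the probabilistic method -- is right, but the way you route the cyclic code through the argument is different from the paper's, and I think your routing has a gap that would be hard to close.

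In the paper, the vectors of $V$ are \emph{plain i.i.d.\ uniform} vectors in $\{0,1\}^\ELL$ with no coding structure at all. The cyclic code (from the paper's Lemma on constant-weight cyclic codes) lives at a \emph{different layer}: it is a code $C\subseteq\{0,1\}^{|V|}$ of length $\ELL(\ELL-1)$ and weight $\ELL$, so each codeword is a \emph{selection} of which $\ELL$ vectors of $V$ to sum, not a vector of $V$. This matters for two reasons. First, the minimum-distance guarantee $7\ELL/4$ on $C$ forces any two selections from different (disjoint, radius-$\ELL/16$) Hamming balls around codewords to disagree on $\geq \ELL/2$ of their chosen vectors, which is exactly what makes the anti-concentration bound on $\vsum(\cbig_1)-\vsum(\cbig_2)$ give a collision probability of order $(\ELL/2)^{-\ELL/2}$; your proposal, which builds $V$ out of $\ELL$ shift-orbits of length-$(\ELL-1)$ seeds, introduces strong correlations \emph{among the summands themselves} and gives you no control over how much two candidate $\ELL$-selections overlap. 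Second, and more importantly, the paper \emph{never union bounds over sub-multisets $V'$}. It fixes a good $V$ (one for which $\geq|C|/2$ balls are good) and then argues \emph{deterministically}: because $C$ is cyclic, every coordinate of $[|V|]$ is covered by exactly $|C|\ELL/|V|$ codewords, so deleting $|V|/64$ coordinates removes $\geq\ELL/16$ ones from at most $|C|/4$ codewords; the remaining $\geq|C|/4$ good, lightly-damaged codewords are then repaired inside their balls to land entirely in $V'$. Cyclicity is thus used to defeat the adversary's choice of $V'$, not to construct $V$.

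Your proposed alternative -- a union bound over all $\leq 2^{|V|}\approx 2^{\ELL^2}$ sub-multisets $V'$ -- is, I believe, not salvageable with the tools available. The per-pair collision probability you can hope for from anti-concentration of a sum of $O(\ELL)$ i.i.d.\ Bernoulli vectors of length $\ELL$ is roughly $\ELL^{-\Theta(\ELL)}$ (exactly what the paper gets), and even a very aggressive first- or second-moment count over $\binom{|V'|}{\ELL}\approx\ELL^\ELL$ selections cannot push the per-$V'$ failure probability below $\ELL^{-O(\ELL)}$, which is nowhere near the $2^{-\omega(\ELL^2)}$ needed to beat $2^{\ELL^2}$ choices of $V'$. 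Your attempt to dodge this by proving the stronger statement ``with probability $1-\ELL^{-\Omega(\ELL^2)}$, essentially all reduced pairs of $\ELL$-sub-multisets have distinct sums'' is asking too much of the randomness: the failure event already has probability $\gg \ELL^{-\Omega(\ELL^2)}$ for a single pair of overlapping selections that differ in just a handful of coordinates, so the strong statement is simply false. The paper's trick of handling $V'$ deterministically via the uniform coverage property of a cyclic code is the missing idea, and without it the union-bound route does not go through.
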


The lemma is proved in Section~\ref{sec:proofvecsum} and we will now use it to construct an $R$ that proves Lemma~\ref{lem:combinatorial}. The introduction of a sub-multiset $V'$ in the lemma above is to reflect the fact that positions of $U'$ get blocked as we populate it with symbols. We will see next that at any step, a fraction of at most $1/64$ of the $\mu^2$ vectors are blocked.

Suppose that $V=\{v_0,\dots,v_{\mu(\mu-1)}\}$ is a multiset of $\mu$-length vectors over $\{0,1\}$ with the properties of Lemma~\ref{lem:vecsum}. That is, we assume that $\mu>40$ and $\mu-1$ is a prime. Again as discussed in Section~\ref{sec:rounding}, we can always tweak relevant values in order to meet this criteria.

The string $R$ is simply chosen such that for $i\in[\mu(\mu-1)]$, the substring $\ps_i$ corresponds to the vector $v_i$ of $V$. For $i\in\{\mu(\mu-1),\dots,(\mu^2-1)\}$, the substring $\ps_i=\{\psymb\}^\mu$ as we will ignore these substrings anyway.
In order to show that this $R$ proves Lemma~\ref{lem:combinatorial} we will populate a $2r$-length vector $U'$ with symbols and show how $\mu$-length subarrays of $\Hamarray(R,U')$ correspond to vector sums of $\mu$ vectors chosen arbitrarily from a sub-multiset of $V$. The string $U'$ is obtained as follows:

\begin{enumerate} 
 \item Set all $2\mu^3$ positions of $U'$ to the symbol $\tsymb$.
 
 \item Align $R$ with the left half of $U'$ as illustrated in Figure~\ref{fig:hamarray}.
 
 \item Let $V'\subseteq V$ be the set of vectors that are not blocked. (Initially this means that $V'=V$ but as we return to this step, $V'$ shrinks.)
 
 \item Choose any sub-multiset $\{w_1,\dots,v_\mu\} \sqsubseteq V'$ and set their corresponding positions in $U'$ accordingly.
 
 \item Slide $R$ by $\mu$ steps along $U'$. Over these alignments, $\Hamarray(R,U')$ uniquely specify the vector sum $w_1+\cdots +w_\mu$.
 
 \item[~] Steps 3--5 are referred to as a \emph{round}.
 
 \item Repeat from Step~3 for a total of $(\mu-1)/64$ rounds. Observe that a total of $\mu(\mu-1)/64=(1/64)|V|$ vectors get blocked, hence $|V'|$ is always at least $(63/64)|V|$.
 
 \item Slide $R$ by \emph{one single step} along $U'$. This will offset all previously blocked vectors and allow us to start over again at Step~3 as if no vectors are blocked.
 This is repeated until this step is reached for the $\mu$-th time. At that point the offsetting of blocked vectors has cycled and previously set positions of $U'$ are yet again blocking.
\end{enumerate}

Populating $U'$ according to the procedure above means that $R$ is shifted by a total of
\[
  \mu\cdot (\mu-1)/64 \cdot \mu + (\mu-1) ~=~ \mu^3/64 - \mu^2/64 +\mu-1 ~<~ r
\]
steps. Over these steps we have by Lemma~\ref{lem:vecsum} that for each $\mu$-length subarray of $\Hamarray(R,U')$ that corresponds to a vector sum, there is a choice of at least $\mu^{(\mu/10)}$ distinct values.
Thus, when $\mu>40$, the number of distinct $\Hamarray(R,U')$ is at least
\[
    \left(\mu^{(\mu/10)}\right)^{ \mu(\mu-1)/64 }
    ~=~
    \mu^{(\mu^3-\mu^2)/640}
    ~\geq~
    \mu^{(\mu^3/656)}
    ~=~
    \left(r^{(1/3)}\right)^{(r/656)}
    ~=~
    r^{kr},
\]
where $k=1/1968$.
This concludes the proof of Lemma~\ref{lem:combinatorial}.

\section{Vector sets with many distinct sums} \label{sec:proofvecsum}

In this section, we prove Lemma~\ref{lem:vecsum}.
We first rephrase the lemma slightly by introducing some notation.
For any multiset $V'$ of vectors from $\{0,1\}^\ELL$, we define
\[
    \Vsum(V') = \set{w_1 + \cdots + w_\ELL \,|\, \{w_1,\dots,w_\ELL\} \sqsubseteq  V'}
\]
to be the set of distinct vector sums one can obtain by summing the vectors of $\mu$-sized sub-multisets of $V'$
Addition is element-wise and over the integers.
Lemma~\ref{lem:vecsum} says that there exists a multiset $V$ of vectors from     $\{0,1\}^\ELL$ such that $|V|=\ELL(\ELL-1)$ and for any sub-multiset $V'\sqsubseteq V$ of size at least $(63/64)|V|$, we have that $|\Vsum(V')| \geq \ELL^{(\ELL/10)}$.

Our approach will be an application of the probabilistic method. Specifically, we will show that when the vectors of $V$ are sampled uniformly at random, the expected value
\[
   \expect{\Vsum(V)}\,\geq\, \frac12 (\ELL-1)^{(\ELL/9)}.
\]
Thus, there must exist a $V$ such that $\Vsum(V)\geq (\ELL-1)^{(\ELL/9)}/2$.
Given such a $V$, we then show that for every sub-multiset $V'\sqsubseteq  V$ such that $|V'| \geq (63/64)|V|$,
$\Vsum(V')\geq \ELL^{(\ELL/10)}$.

\subsection{Vectors and codes}

We now describe a connection between vectors and codes. We will require the following lemma from the field of Coding Theory.
The lemma is tailored for our needs and is a special case of ``Construction~II'' in~\cite{AGM:1992}.
For our purposes, a binary constant-weight cyclic code can be seen simply as set of bit-strings (codewords) with two additional properties: the first is that all codewords have constant Hamming weight $\ELL$, i.e. they have exactly $\ELL$ 1s, and the second property is that any cyclic shift of a codeword is also a codeword.

\begin{lem}[\cite{AGM:1992}]
    \label{lem:cyclic-code}
    For any $\ELL\geq 4$ such that $\ELL-1$ is a prime and any odd $\gamma\in[\ELL]$, there is a binary constant-weight cyclic code with $(\ELL-1)^{\gamma}$ codewords of length $\ELL(\ELL-1)$ and Hamming weight $\ELL$ such that any two codewords have Hamming distance at least $2(\ELL-\gamma)$.
\end{lem}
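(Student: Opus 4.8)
The plan is to use the classical ``graph of a function'' construction, resolving the mismatch between the length $\ELL(\ELL-1)$ and the available alphabet sizes by passing to the field $\mathbb{F}_{\ELL-1}$ and an auxiliary cyclic code of length $\ELL$. Write $p:=\ELL-1$; since $\ELL\geq4$ and $\ELL-1$ is prime, $p$ is an odd prime. Identify the $\ELL(\ELL-1)=p(p+1)$ code coordinates with $\mathbb{Z}_p\times\mathbb{Z}_{p+1}$ via the Chinese Remainder Theorem, so that a cyclic shift by one of $\mathbb{Z}_{p(p+1)}$ acts as $(a,b)\mapsto(a+1,b+1)$. To each vector $f=(f(0),\dots,f(p))\in\mathbb{F}_p^{p+1}$, viewed as a function $\mathbb{Z}_{p+1}\to\mathbb{F}_p=\mathbb{Z}_p$, associate the weight-$(p+1)$ binary word whose support is the graph $\{(f(b),b):b\in\mathbb{Z}_{p+1}\}$. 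The code $\mathcal D$ is the set of these words as $f$ runs over a carefully chosen $\mathbb{F}_p$-cyclic code $\mathcal C$ of length $p+1$.

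The choice of $\mathcal C$ is the crux. Since $\gcd(p,p+1)=1$, cyclic codes of length $p+1$ over $\mathbb{F}_p$ correspond to unions of $p$-cyclotomic cosets modulo $p+1$; because $p\equiv-1\pmod{p+1}$, these cosets are $\{0\}$, $\{(p+1)/2\}$ and the pairs $\{i,-i\}$, so a defining set is legitimate exactly when it is closed under $i\mapsto-i$. I would take the defining set $Z$ to be the interval of $p+1-\gamma$ consecutive exponents centred at $(p+1)/2$. This uses the hypothesis that $\gamma$ is odd: then $p+1-\gamma$ is odd, so the interval is symmetric about the integer $(p+1)/2$, hence negation-closed, and --- crucially --- it avoids the exponent $0$. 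Consequently $\dim\mathcal C=(p+1)-|Z|=\gamma$, and by the BCH bound $\mathcal C$ has minimum distance at least $(p+1-\gamma)+1=p+2-\gamma$ (in fact $\mathcal C$ is MDS, the interval also matching the Singleton bound). Moreover, since $0\notin Z$ and the all-ones word equals $(x^{p+1}-1)/(x-1)$, which vanishes at every $(p+1)$-th root of unity of nonzero exponent, we have $\mathbf 1\in\mathcal C$.

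It then remains to check the four properties of $\mathcal D$, which is routine. Distinct $f\in\mathcal C$ give distinct graphs, so $|\mathcal D|=|\mathcal C|=p^\gamma=(\ELL-1)^\gamma$; each codeword has weight $p+1=\ELL$; and the length is $p(p+1)=\ELL(\ELL-1)$. For cyclicity: a shift by one sends the graph of $f$ to $\{(f(b)+1,b+1):b\}$, the graph of $b\mapsto f(b-1)+1$, and this function lies in $\mathcal C$ because $\mathcal C$ is shift-invariant and contains $\mathbf 1$ (so adding the all-ones word preserves membership) --- this is exactly why we insisted on $\mathbf 1\in\mathcal C$. For the distance: if $f\neq g$ in $\mathcal C$, the two graphs differ in precisely $2\,|\{b:f(b)\neq g(b)\}|=2\,\mathrm{wt}(f-g)$ coordinates, and $\mathrm{wt}(f-g)\geq p+2-\gamma$, so the Hamming distance is at least $2(p+2-\gamma)\geq2(\ELL-\gamma)$.

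The main obstacle is the simultaneous balancing of the three demands on the defining set $Z$ --- that it be a union of cyclotomic cosets, that it omit the exponent $0$ (needed for the binary code to be cyclic), and that it be a long enough interval for BCH to supply the distance --- together with the observation that all three can be met precisely when $\gamma$ is odd, so that an odd-length interval fits symmetrically around the unique nonzero point $(p+1)/2$ fixed by negation. Everything after that is bookkeeping with the CRT identification and standard bounds for cyclic codes.
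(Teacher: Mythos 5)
The paper gives no proof of this lemma at all; it cites it verbatim from \cite{AGM:1992} as a special case of ``Construction~II'' there, so there is nothing in the paper to compare your argument against. Your self-contained construction is correct. Writing $p=\ELL-1$, you CRT-identify the $\ELL(\ELL-1)=p(p+1)$ positions with $\mathbb{Z}_p\times\mathbb{Z}_{p+1}$, pick the $\mathbb{F}_p$-linear cyclic code $\mathcal C$ of length $p+1$ whose defining set $Z$ is the length-$(p+1-\gamma)$ interval of exponents centred at $(p+1)/2$, and take the binary code to be the set of indicator words of the graphs $\{(f(b),b):b\in\mathbb{Z}_{p+1}\}$ for $f\in\mathcal C$. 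The three points you flag as the crux all check out: since $p$ is odd and $\gamma$ is odd, $p+1-\gamma$ is odd, so the centred interval is well defined, closed under $i\mapsto -i$ (the only nontrivial action of $p$ mod $p+1$), and, because $\gamma\geq 1$, contained in $\{1,\dots,p\}$; excluding $0$ from $Z$ gives both $\dim\mathcal C=(p+1)-|Z|=\gamma$ and $\mathbf 1\in\mathcal C$; shift-invariance of $\mathcal C$ together with $\mathbf 1\in\mathcal C$ makes the graph code closed under the CRT shift $(a,b)\mapsto(a+1,b+1)$; and distinct $f,g\in\mathcal C$ have graphs differing in exactly $2\,\mathrm{wt}(f-g)\geq 2(p+2-\gamma)\geq 2(\ELL-\gamma)$ positions by BCH. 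This is the standard MDS/graph-of-a-function construction of constant-weight cyclic codes and is, as far as I can tell, essentially the argument behind the cited reference, so you have filled in a proof the paper omitted rather than diverged from one it gave.
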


Let $\Cbig$ be the binary code that contains \emph{all} codewords of length $\mu(\mu-1)$ with Hamming weight $\ELL$.
We can think of a codeword of $\Cbig$ representing a $\mu$-sized sub-multiset $X\sqsubseteq V$ such that the $i$-th vector of $V$ (under any enumeration of the elements of $V$) is in $X$ if and only if position $i$ of the codeword is~$1$.
That is, $\Cbig$ represents all possible sub-multisets of $V$ of size $\mu$.
To shorten notation, we refer to $\cbig\in\Cbig$ as both a codeword and a sub-multiset of $\mu$ vectors from $V$.

Suppose that $\mu\geq 4$ and $\mu-1$ is a prime.
We let $C\subseteq \Cbig$ be a cyclic code of size $(\ELL-1)^\gamma$,
where $\gamma$ is any odd integer in the interval $[\ELL/9,\ELL/8]$,
such that the Hamming distance between any two codewords in $C$ is at least $7\mu/4$.
The existence of such a $C$ is guaranteed by Lemma~\ref{lem:cyclic-code} since
$2(\mu-\mu/8)=7\mu/4$.
Observe that every codeword of $C$ has Hamming weight $\ELL$.

For $c\in C$ we define the \emph{ball}
\[
    \ball{c}=\set{\cbig ~\mid~ \text{$\cbig\in\Cbig$ and
    Hamming distance between $c$ and $\cbig$ is at most $\ELL/16$}}  
\]
to be the set of bit strings in $\Cbig$ at Hamming distance at most    $\ELL/16$ from~$c$.
Hence the $|C|$ balls are all disjoint since the Hamming distance between any two codewords in $C$ is at least than $7\mu/4$.
We have that for any $c\in C$, using the fact $\binom a b \leq (ae/b)^b$,
\begin{equation*}
        \label{eq:ball-size}
        \big|\ball{c}\big| \,\leq\, \binom \ELL {\ELL/16}\cdot \binom{|V|} {\ELL/16}
            \,\leq\, \left(\frac{\ELL e\cdot|V|e}{(\ELL/16)^2}\right)^{\ELL/16}
            \,\leq\, \left(\frac{\ELL}{16}\right)^{\ELL/16} .
\end{equation*}

For $\cbig\in \Cbig$ we write $\vsum(\cbig)$ to denote the vector in $[\ELL+1]^\ELL$ obtained by
adding the $\ELL$ vectors in the vector set $\cbig$, that is $\vsum(\cbig)$ vector sum of the vectors represented by $\cbig$.

Towards proving Lemma~\ref{lem:vecsum} we will show that when the vectors of $V$ are chosen uniformly at random, we expect more than half of all $|C|$ balls to have the property that for every $\cbig$ in the ball, $\vsum(\cbig)$ can only be obtained by summing vectors from that ball.

\subsection{Choosing the vectors in $V$}

So far we have not discussed the choice of vectors in $V$. 
We consider the case where the vectors are chosen independently and uniformly at random from $\{0,1\}^\ELL$.
We will first show that
\[
   \expect{\Vsum(V)}\,\geq\, \frac12 (\ELL-1)^{(\ELL/9)},
\]
then we will fix $V$ and show that it has the property of Lemma~\ref{lem:vecsum}.

For any $\cbig_1\in\ball{c_1}$ and $\cbig_2\in\ball{c_2}$, where $c_1,c_2\in C$ are distinct, we now analyse the probability that
$\vsum(\cbig_1)=\vsum(\cbig_2)$. From the definitions above it follows that $\cbig_1$ and $\cbig_2$ must differ
on at least $7\ELL/4-2(\ELL/16)\geq \ELL$ positions, implying that the two vector sets $\cbig_1$ and
$\cbig_2$ have at most $\ELL/2$ vectors in common, thus at least $\ELL/2$ of the vectors in $\cbig_1$ are not in $\cbig_2$.
Let $w_1,\dots,w_{(\ELL/2)}$ denote an arbitrary choice of $\mu/2$ of those vectors.
For $i\in[\mu]$ we can write the $i$-th component of $\vsum(\cbig_1)$ as
\[
    \vsum(\cbig_1)[i] ~=~ w_1[i]+\cdots +w_{(\ELL/2)}[i] + x[i],
\]
where the vector $x$ does not depend on $w_1,\dots,w_{(\ELL/2)}$.
In order to have $\vsum(\cbig_1)=\vsum(\cbig_2)$ we must have
\[
    w_1[i]+\cdots +w_{(\ELL/2)}[i] ~=~ \vsum(\cbig_2)[i] - x[i]
\]
for each $i\in [\ELL]$.
Since the vectors are picked independently and uniformly at random from $\{0,1\}^\mu$,
the most likely value of $w_1[i]+\cdots +w_{(\ELL/2)}[i]$ is $\ELL/4$.
The probability that this sum equals $\mu/4$ is
\[
  \Prob\left(w_1[i]+\cdots +w_{(\ELL/2)}[i] = \frac\pi 4\right)
  ~=~
    \binom{\ELL/2}{\ELL/4 }\cdot \left(\frac12\right)^{\ELL/2}
    \,\leq~
    \left(\frac \ELL 2\right)^{-1/2},
\]
where the inequality follows from the fact that for any $a$, $\binom a {a/2}\leq 2^a/\sqrt{a}$.
Thus, the probability that $\vsum(\cbig_1)=\vsum(\cbig_2)$, that is
$\vsum(\cbig_1)[i]=\vsum(\cbig_2)[i]$ for all $i\in [\mu]$, is
    \begin{equation}
        \label{eq:sum-equal}
        \Prob\big(\vsum(\cbig_1)=\vsum(\cbig_2)\big)
        \, \leq \,
        \left(\left(\frac{\ELL}{2}\right)^{-1/2}\right)^\mu
        \, \leq \,
        \left(\frac{\ELL}{2}\right)^{-\ELL/2}.
    \end{equation}
    
For two distinct $c_1,c_2\in C$, we define the indicator random variable
\begin{equation*}
  I(c_1,c_2) =
  \begin{cases}
   0 & \textup{if $\vsum(\cbig_1)=\vsum(\cbig_2)$ for some $\cbig_1\in\ball{c_1}$ and  $\cbig_2\in\ball{c_2}$,}\\
   1 & \textup{otherwise.}
  \end{cases}
\end{equation*}
Taking the union bound over all $\cbig_1\in\ball{c_1}$ and
$\cbig_2\in\ball{c_2}$, and using the probability bound in Equation~(\ref{eq:sum-equal}), we have
    \begin{align}
        \label{eq:union-inner}
        \Prob \big( I(c_1, c_2) = 0 \big)
        \, &\leq \, \big|\ball{c_1}\big|\cdot \big|\ball{c_2}\big| \cdot \left(\frac{\ELL}{2}\right)^{-\ELL/2}  \\
        \, &\leq \, \left(\frac{\ELL}{16}\right)^{2(\ELL/16)} \left(\frac{\ELL}{2}\right)^{-\ELL/2} \notag \\
        &\leq \, \left(\frac{1}{\ELL^3}\right)^{\ELL/8}. \notag
    \end{align}

For any $c_1\in C$, we now define the indicator random variable
\begin{equation*}
  I'(c_1) =
  \begin{cases}
   0 & \textup{if $I(c_1,c_2)=0$ for some $c_2\in C\setminus \{c_1\}$,}\\
   1 & \textup{otherwise.}
  \end{cases}
\end{equation*}
That is, $I'(c_1)=1$ if and only if
$\vsum(\cbig_1)\neq\vsum(\cbig_2)$
for every $\cbig_1\in\ball{c_1}$ and every $\cbig_2$ from another ball.
In other words, the sums of codewords in $\ball{c_1}$ are unique for this ball.
We say that $\ball{c_1}$ is \emph{good} if and only if $I'(c_1)=1$.
It is possible however that $\vsum(\cbig_1)=\vsum(\cbig_2)$ if $\cbig_2$ is from the same ball as $\cbig_1$ though.

Taking the union bound over all $c_2\in C$, and using Equation~(\ref{eq:union-inner}) and the fact that $|C|\leq \ELL^{(\ELL/8)}$, we have
    \begin{align*}
        \Prob \big( I'(c_1) = 0 \big) \, &\leq \!\!\! \sum_{c_2 \in C\setminus \{c_1\}} \!\!\!\! \Prob \big( I(c_1, c_2) = 0 \big)\\
        \, &\leq \, |C| \left(\frac{1}{\ELL^3}\right)^{\ELL/8}
        \, \leq \, \ELL^{(\ELL/8)} \left(\frac{1}{\ELL^3}\right)^{\ELL/8}
        \, \leq \, \frac{1}{2} \,. \notag
    \end{align*}
By linearity of expectation we have that the expected number of good balls is
\[
 \expect{\sum_{c\in C}I'(c)} ~\geq~ \frac{|C|}2.
\]
The conclusion is that there is a multiset $V$ of vectors for
which at least $|C|/2$ balls are good, hence
\[
  \Vsum(V)\,\geq\, \frac12 (\ELL-1)^{(\ELL/9)}
\]
since $|C|\geq (\ELL-1)^{(\ELL/9)}$.

\subsection{Many distinct sums for subsets of $V$}

Suppose now that $V$ is a multiset such that the number of good balls is at least $|C|/2$, hence $\Vsum(V)\geq (\ELL-1)^{(\ELL/9)}/2$. From the conclusion above we know that such a set must exist.
It remains to show that for any sub-multiset $V' \sqsubseteq V$ of size $(63/64)|V|$, $\vsum(V')$ is also large.

Over all codewords in $C$, seen as bit strings, the total number of 1s is $|C|\ELL$. Since $C$ is cyclic, the number of codewords in $C$ that have a 1 in position $i\in [|V|]$ is the same as the number of codewords that have a 1 in position $j$, for any $j\in [|V|]$.
Thus, for each one of the $|V|$ positions there are exactly $|C|\ELL/|V|$ codewords in $C$ with a 1 in that position.

Let $V'\sqsubseteq V$ be of size $(63/64)|V|$. Let $J$ be the set of $|V|/64$ positions that correspond to the vectors of $V$ that are not in $V'$.
We will now modify the codewords of $C$ as follows.
For each $j\in J$ and codeword $c\in C$ we
set $c[j]$ to 0.
The total number of 1s across all codewords in $C$ is therefore reduced from $|C|\mu$ by exactly
\[
  \frac{|V|}{64} \cdot
  \frac{|C|\ELL}{|V|} = \frac{|C|\ELL}{64}.
\]
The number of codewords of $C$ that have lost $\ELL/16$ or more 1s is
therefore at most
\[
 \frac{|C|\ELL}{64}\,\Big/\,\frac{\ELL}{16}=\frac{|C|}4.
\]
Let $C'\subseteq C$ be the set of codewords $c$
that have lost less than $\ELL/16$ 1s and for which $\ball{c}$ is good. Since there are at least $|C|/2$
good balls, $|C'|\geq |C|/4$.

Let the code $C''$ be obtained from $C'$ by replacing, for each codeword in $C'$, every removed 1 with a 1 at some other arbitrary position that is not in $J$.
Thus, every $c''\in C''$ has Hamming weight $\ELL$ and belongs to the good ball $\ball{c'}$, where $c''$ was obtained from $c'\in C'$.
Hence $|C''|=|C'|\geq |C|/4$.
Every codeword of $C''$, seen as a sub-multiset of $V$, only contains vectors from the sub-multiset $V'$.
From the definition of a good ball we have that at least $|C|/4$ distinct vector sums can be
obtained by adding $\ELL$ vectors from $V'$.
Thus,
\[
  \Vsum(V')
  \,\geq\,
  \frac14 (\ELL-1)^{(\ELL/9)}
  \,\geq\,
  \ELL^{(\ELL/10)}
\]
when $\ELL>40$. This completes the proof of Lemma~\ref{lem:vecsum}.

\section*{Acknowledgements}
RC would like to thank Elad Verbin, Kasper Green Larsen, Qin Zhang and the members of CTIC for helpful and insightful discussions about lower bounds during a visit to Aarhus University. We thank Kasper Green Larsen in particular for pointing out that the cell-probe lower bounds we give are in fact tight. Some of the work on this paper has been carried out during RC's visit at the University of Washington.

\printbibliography

\end{document}